\newcommand{\dd}{\mathrm{d}}
\newcommand{\dx}{\mathrm{d}x}
\newcommand{\dy}{\mathrm{d}y}
\newcommand{\N}{{\bf N}}
\newcommand{\dxi}{\mathrm{d}\xi}
\newcommand{\bs}[1]{{\bf #1}}
\newcommand{\sk}[1][k]{{\setminus #1}}
\newcommand{\pin}{\widetilde{\pi}^N}
\newcommand{\E}{\mathbb{E}}
\newcommand{\x}{{\bf x}}
\newcommand{\wh}{\widehat}
\newcommand{\wt}{\widetilde}
\newcommand{\dcolon}{\colon\hspace{-2.5pt}}
\newtheorem{theorem}{Theorem}
\newtheorem{definition}{Definition}
\newtheorem{assumption}{Assumption}
\newtheorem{lemma}{Lemma}
\newtheorem{corollary}{Corollary}
\newtheorem{remark}{Remark}
\newtheorem{algorithm}{Algorithm}
\providecommand{\keywords}[1]{\par\noindent{\small{\em Keywords\/}: #1}}
\title{Markov Interacting Importance Samplers}
\author{Eduardo F. Mendes \and Marcel Scharth \and Robert Kohn}
\date{\small{\today}}
\begin{document}
\maketitle
\begin{abstract} We introduce a new Markov chain Monte Carlo (MCMC) sampler called the Markov Interacting Importance Sampler (MIIS). The MIIS sampler uses conditional importance sampling (IS) approximations to jointly sample the current state of the Markov Chain and estimate conditional expectations, possibly by incorporating a full range of variance reduction techniques.  We compute Rao-Blackwellized estimates based on the conditional expectations to construct control variates for estimating expectations under the target distribution. The control variates are particularly efficient when there are substantial correlations between the variables in the target distribution, a challenging setting for MCMC. An important motivating application of MIIS occurs when the exact Gibbs sampler is not available because it is infeasible to directly simulate from the conditional distributions. In this case the MIIS method can be more efficient than a Metropolis-within-Gibbs approach. We also introduce the MIIS random walk algorithm, designed to accelerate convergence and improve upon the computational efficiency of standard random walk samplers. Simulated and empirical illustrations for Bayesian analysis show that the method significantly reduces the variance of  Monte Carlo estimates compared to standard MCMC approaches, at equivalent implementation and computational effort.
\bigskip

\keywords{Bayesian inference; Control variate; Mixed Logit; PMCMC;  Markov Modulated Poisson Process; Rao-Blackwellization;  Variance reduction.}
\end{abstract}

\clearpage

\section{Introduction}
This paper introduces \emph{Markov interacting importance samplers} (MIIS), a general Markov Chain Monte Carlo (MCMC) algorithm that iterates by sampling the current state from a conditional importance sampling approximation to a target distribution. An importance sampling (IS) approximation consists of a set of weighted samples from a proposal distribution that approximates the target.  Markov interacting importance samplers are conditional in the sense that the importance distribution may depend on the previous state of the Markov chain. The marginal distribution of the states converges to the target distribution for any number of importance samples at each iteration of the Markov chain; the algorithm does not induce an approximation error.

We adopt importance sampling as a basic tool from the perspective that it can be more efficient than a Metropolis-Hastings sampler based on an identical proposal. Importance sampling naturally incorporates the information from all generated samples, while standard Metropolis-Hastings estimates lose information from rejected draws. In addition, importance sampling estimates are based on independent samples and as a consequence the method is immediately amenable to a range of variance reduction techniques (such as antithetic sampling and stratified mixture sampling), as well as convenient to implement and parallelize. It is not standard practice in applied work to incorporate these features into Metropolis-Hastings approaches as they are more challenging to design and use efficiently in an MCMC framework. See for example \citet{craiu2007}, \citet{hammer2008}, \citet{jacob2011}, and \citet{dk2012}.

Importance sampling can be efficient when we are able to construct numerically accurate and computationally fast approximations to a full target distribution. \citet{ZR2007}, \citet{hod2012} and \citet{lietal2013} are recent contributions in this area that have led to the application of IS to challenging problems: see for example \citet{lrv2013} and \citet{tspk2013}. We motivate MIIS by observing that even if the joint target density is intractable by global approximation, we can frequently obtain efficient importance samplers for the conditional distributions. MCMC methods provide a natural way of handling large dimensional problems by sampling from conditional distributions (Gibbs sampling) or by generating samples from complex target densities through local exploration. The MIIS algorithm leverages the advantages of importance sampling in this setting.

As a leading application, we consider the case in which it is not possible to implement an exact Gibbs sampler due to infeasibility of direct simulation from the conditional distributions. The MIIS method relies on IS approximations of the conditional distributions to sample the current state of the Markov Chain. The advantage of importance sampling is that we can additionally use the approximation (that is, all the generated samples) to estimate conditional expectations, possibly by incorporating the full range of variance reduction methods available for standard importance sampling. We compute Rao-Blackwellized estimates based on the conditional expectations to construct control variates for estimating expectations under the target distribution. The control variates are particularly effective when there are substantial correlations between the variables in the target distribution. This is a challenging setting for standard MCMC approaches because the conditioning scheme may imply strong serial correlation in the Markov chain.

We introduce the general MIIS algorithm and present four examples that demonstrate its flexibility. The first two examples present the implementation of MIIS based on simple importance sampling targeting the full and conditional distributions. We derive conditions for the ergodicity and uniform ergodicity of the sampler. The third example introduces antithetic variables and is also uniformly ergodic under general conditions. The final example introduces the MIIS random walk algorithm, designed to accelerate convergence and improve upon the computational efficiency of standard random walk samplers. The random walk sampler is uniformly ergodic assuming that the importance weights are bounded. Ergodicity holds under milder constraints.


Our method relates to the Particle Gibbs (PG) algorithm developed for Bayesian inference in general state space models by \citet{andrieuetal2010}. The PG algorithm iteratively draws the latent state trajectories from its high-dimensional smoothing distribution using a particle filter approximation, and the parameters of the model from their conditionals given the state trajectories. \citet{lindstenschon2012}, \citet{lindstenetal2014as}, \citet{mendesetal2014} and \citet{carteretal2014} present extensions, while  \cite{chopinsingh2013}, \citet{andrieuetal2013} and \citet{lindstenetal2014} study the theoretical aspects of the algorithm. We can show that the particle Gibbs algorithm is a particular type of MIIS. Compared to PG, the MIIS algorithm addresses a wider class of sampling problems and the use of variance reduction methods.

We illustrate Markov interacting importance samplers in a range of examples. We consider the estimation of the posterior mean for a Bayesian Mixed Logit model using the health dataset studied by \citet{fklw2010}. The presence of unobserved heterogeneous preferences in this discrete choice model motivates the use of MCMC methods that iteratively sample the model parameters and the latent choice attribute weights conditional on each other. The results show that the MIIS algorithm with control variates increases efficiency in mean squared error by a factor of four to twenty compared to the Metropolis-within-Gibbs algorithm, which is a standard tool for problems that are not amenable to exact Gibbs sampling. We also implement the MIIS random walk importance sampler for carrying out posterior inference for Markov modulated Poisson processes, a problem considered for example by \citet{fs2006}. Our analysis reveals four to hundredfold gains in efficiency over the standard random walk Metropolis algorithm and the multiple-try Metropolis algorithm of \citet{llw2000}. In this context, the improvements are mainly due to parallelization and better convergence of the Markov chain.



\section{Markov Interacting Importance Samplers}\label{MIIS}
To focus on the main ideas, we use densities in our mathematical discussion up to Section \ref{illustrations}. We assume that the densities are defined with respect to measures that we leave unspecified for now. We provide a more precise treatment in Section~\ref{theory} and the appendix.

\subsection{Notation and basic definitions}\label{ss:notation}
This subsection presents some of the notation used in the article. We define the basic random variables on a set $A$ that is a subset of Euclidean space. Suppose that $f(x)$ is a real function with $x \in A$. We take any density $\nu(x)$ on $A$ to be with respect to some measure on $A$, which we denote as $\dx$. We define the expected value of $f$ with respect to the density $\nu $ as
\begin{align} \label{eq: expect def}
E_\nu(f) & := \int f(x) \nu (x)dx\,
\end{align}
provided the integral exists.

In our article, $\pi(x)$ is the target density. We often can evaluate $\pi(x)$ only up to a constant of proportionality $m(x)$, with $\pi(x) = m(x)/Z_m$, where $Z_m=\int_A m(x) dx $ is the normalizing constant. Suppose that $x_i \in A, i=1,\dots, N$. Then, for $1\le i\leq j \le N $, we define $i \dcolon j := \{i,i+1,\dots,j\}$, $x_{i:j} := (x_i,\dots,x_j)$ and $x_{\sk[k]} := (x_1,\dots,x_{k-1},x_{k+1},\dots,x_N)$.

\subsection{Conditional Importance Sampler}  \label{SS: CIS}

This section introduces the \textit{conditional importance sampler} (CIS) which is the basic building block of the MCMC algorithms in this article. The CIS is motivated by the question:
\lq \lq \textit{how to implement an importance sampler approximation to $\pi$ that provides unbiased samples?}\rq \rq
 The CIS is our solution to this problem. We go beyond simple importance sampler and construct a general framework that not only covers the simple importance sampling approximation with variance reduction techniques, but also extends the basic importance sampling paradigm, allowing local exploration of the target inside an MCMC setting, for instance, by using a random-walk approach.

At each iterate of an MCMC algorithm, the CIS constructs an empirical approximation to the target density $\pi(\cdot)$. It generates an auxiliary variable $\xi$ and $N$ particles $X_{1:N}$ conditional on the previous iterate $y$, in such a way that one particle $X_k$ is generated through a Markov transition kernel and the other $N-1$ particles are generated conditional on $X_k$.

We now present a more precise description of the CIS. Let $\eta(\xi|y)$ be the conditional density of the auxiliary variable $\xi$, with $\xi, y \in A$, and take $\eta(\xi) = \int \eta(\xi|y) \pi(y)dy$ so that $\pi(y|\xi) = \eta(\xi|y)\pi(y)/\eta(\xi)$. Let $T(y,x;\xi)$ be the density of a Markov transition kernel from $y$ to $x\in A$, conditional on $\xi$, that is reversible with respect to $\pi(y|\xi)$; i.e., $\pi(y|\xi) T(y,x;\xi) = \pi(x|\xi) T(x,y; \xi)$, or equivalently,
 \begin{align}\label{eq: T reversibility}
 \pi(y) \eta(\xi|y)T(y,x; \xi)  & =  \pi(x) \eta(\xi|x)T(x,y; \xi).
 \end{align}
 Given $\xi\in A$, let $\bs{q}(x_{1:N}|\xi)$ be a joint importance distribution with marginals $q_i(x_i|\xi)$ ($i\,=\,1,\dots,N$).
For any $1\le k\le N$, define the conditional density
\begin{equation} \label{eq: cond density q}
\bs{q}_{\sk}(x_{\sk}|x_k,\xi) := \frac{\bs{q}(x_{1:N}|\xi)}{q_k(x_k|\xi)}\, .
\end{equation}

\begin{definition}[Conditional Importance Sampler] \label{def: CIS}
	For any given $y \in A$ and $1\le k\le N$, the \emph{Conditional Importance Sampler} generates $X_{1:N},\xi|(y,k)$ from the probability distribution
\begin{equation}
    \Gamma^N(x_{1:N}, \xi|y,k) := \eta(\xi|y)T(y,x_k;\xi)\,\bs{q}_{\sk}(x_{\sk}|x_k,\xi).
    \label{eq:CIS}
\end{equation}
\end{definition}

The auxiliary variable $\eta$ introduces dependence in the importance sampling approximation. Moreover, we can often choose the auxiliary density $\eta$ so that $w_i(x;\xi)$ is bounded. For instance, the random-walk importance sampling algorithm chooses $\eta(\xi|x) = q(x|\xi) = \phi(|\xi-x|)$. The weights are $w_i(x;\xi) = m(x)$, which are bounded if $m(x)$ is bounded. The dependence on $\xi$ can be easily dropped if one takes $\eta(\cdot|y) = \eta(\cdot)$ and each $q_i(\cdot|\xi) = q_i(\cdot)$. The Markov transition kernel $T(y,\cdot;\xi)$ can be taken as the identity kernel, i.e., $T(y,\cdot;\xi)= \delta(\cdot-y)$, which is our choice in Sections \ref{examples} and \ref{illustrations}. A Metropolis-Hastings kernel targeting $\pi(\cdot|\xi)$ is also a valid choice.

\vspace{1em}
The CIS generates $(X_{1:N},\xi)$ using the following algorithm.
\begin{algorithm}[Conditional Importance Sampler]\label{alg: CIS algorithm}
Given $(y,k)$,
\begin{enumerate}
	\item sample $\xi\sim\eta(\xi|y)$;
	\item sample $X_k \sim T(y, x_k;\xi)$; i.e., generate the particle $x_k$ using the Markov kernel.
	\item sample $X_{\sk} \sim \bs{q}_{\sk}(x_{\sk}|x_k,\xi)$; i.e., generate all the remaining particles conditional on $\xi$ and the propagated particle $x_k$.
\end{enumerate}
\end{algorithm}
\vspace{1em}

From the output of the Conditional Importance Sampler we define the weights for $i=1,\dots,N$
\begin{equation}\label{eq: wts CIS}
 W_i(x_{1:N}; \xi)  := \frac{w_i(x_i;\xi)}{\sum_{j=1}^Nw_j(x_j;\xi)} \quad \text{where} \quad w_i(x;\xi): = \frac{m(x)}{q_i(x|\xi)}\eta(\xi|x)
\end{equation}
and let ${\wh \pi}^N_{CIS}:=\{(x_1, W_{1}(x_{1:N}, \xi)) ,\dots,  (x_N, W_{N}(x_{1:N},\xi))\}$ be the empirical approximation to $\pi$. The weights depend on the marginals $q_i(\cdot|\xi)$ ($i=1,\dots,N$) of $\bs{q}(x_{1:N}|\xi)$, the auxiliary distribution $\eta(\xi|\cdot)$ and the target distribution $\pi(\cdot)\propto m(\cdot)$. Based on ${\wh \pi}^N_{CIS}$, we define the estimator of $E_\pi(f)$ as
\begin{align}\label{eq: CIS prelim}
	{\wh E}_{CIS}^{N}(f) & := \sum_{i=1}^N W_i(x_{1:N},\xi) f(x_i) = E_{{\wh \pi}_{CIS}^N}(f).
\end{align}

Define the joint density
\begin{align} \label{eq:fullmiis}
	{\wt \pi}^N(k,y,x_{1:N},\xi) := N^{-1}\pi(y) \Gamma^N(x_{1:N}, \xi|y,k).
\end{align}
Lemma~\ref{thm:CISunbiased} gives some fundamental properties of ${\wt \pi}^N(k,y,x_{1:N},\xi)$ and shows that the expectation  of ${\wh E}_{CIS}^{N}(f)$  is $E_\pi(f)$ if the marginal distribution ${\wt \pi}^N(y,k) = N^{-1}\pi(y)$. We use ${\wh E}_{CIS}^{N}(f)$, additively, within an MCMC scheme to construct unbiased estimators of $E_\pi(f)$. The unbiasedness property is critical for the variance reduction techniques in Section \ref{s:VR}.

\begin{theorem}\label{thm:CISunbiased}
	Suppose that $E_\pi(|f|)$ is finite, $(k,y)$ is a sample from $N^{-1}\pi(y)$, and that $(x_{1:N},\xi)$ is generated from $\Gamma^N(x_{1:N},\xi|y,k)$. Then,
\begin{enumerate}
\item [(i)]
${\wt \pi}^N(y) = \pi(y) $.
\item [(ii)]
\begin{align}\label{eq: joint k and x}
		{\wt \pi}^N (k,y|x_{1:N},\xi) & = \sum_{i=1}^N W_i(x_{1:N},\xi)I(k=i)T(x_i, y;\xi),
\intertext{or equivalently,}
{\wt \pi}^N (K=i|x_{1:N},\xi) & =  W_i(x_{1:N},\xi)\quad \text{and} \quad {\wt \pi}^N (y|x_{1:N}\xi,k) = T(x_k, y;\xi).
\end{align}
\item [(iii)]
$E_{{\wt \pi}^N }\Big ( {\wh E}_{CIS}^{N}(f)\Big ) = E_\pi(f)$.
\end{enumerate}
\end{theorem}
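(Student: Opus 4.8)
The plan is to work directly with the joint density \eqref{eq:fullmiis}, using the reversibility condition \eqref{eq: T reversibility} as the single nontrivial ingredient, and to prove the three parts in order since each feeds the next. For part (i) I would simply marginalize ${\wt \pi}^N(k,y,x_{1:N},\xi)$ over $(k,x_{1:N},\xi)$. Because $\bs{q}_{\sk}(\cdot\,|x_k,\xi)$ is a conditional density, $T(y,\cdot\,;\xi)$ a transition kernel, and $\eta(\cdot\,|y)$ a density, integrating $\Gamma^N(x_{1:N},\xi|y,k)$ over $(x_{1:N},\xi)$ gives $1$ for every fixed $(y,k)$; summing the leftover $N^{-1}\pi(y)$ over the $N$ values of $k$ returns $\pi(y)$. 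This step is routine bookkeeping.

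Part (ii) is the crux. I would substitute $\bs{q}_{\sk}(x_{\sk}|x_k,\xi)=\bs{q}(x_{1:N}|\xi)/q_k(x_k|\xi)$ from \eqref{eq: cond density q} into \eqref{eq:fullmiis}, and then invoke the reversibility identity \eqref{eq: T reversibility} to replace $\pi(y)\eta(\xi|y)T(y,x_k;\xi)$ by $\pi(x_k)\eta(\xi|x_k)T(x_k,y;\xi)$. Writing $\pi(x_k)=m(x_k)/Z_m$ and regrouping, the factor $m(x_k)\eta(\xi|x_k)/q_k(x_k|\xi)$ is exactly the unnormalized weight $w_k(x_k;\xi)$ of \eqref{eq: wts CIS}, so that
\[
{\wt \pi}^N(k,y,x_{1:N},\xi) = \frac{w_k(x_k;\xi)}{N\,Z_m}\,T(x_k,y;\xi)\,\bs{q}(x_{1:N}|\xi).
\]
Integrating out $y$ (since $T(x_k,\cdot\,;\xi)$ integrates to one) and summing over $k$ gives the marginal ${\wt \pi}^N(x_{1:N},\xi)=(N Z_m)^{-1}\bs{q}(x_{1:N}|\xi)\sum_j w_j(x_j;\xi)$; dividing yields ${\wt \pi}^N(k,y|x_{1:N},\xi)=W_k(x_{1:N},\xi)\,T(x_k,y;\xi)$, which is the stated identity together with its two factorized consequences. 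The one point to handle carefully is the conceptual recognition that reversibility is precisely the property that interchanges the roles of $y$ and $x_k$ and turns the propagation density into the self-normalized weight $W_k$; without it the index $K$ would not have conditional law $W_k$, and this is where I expect the argument to require the most thought.

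For part (iii) I would avoid a brute-force integration and instead leverage (ii). Since ${\wh E}_{CIS}^{N}(f)=\sum_{i=1}^N W_i(x_{1:N},\xi)f(x_i)$ and ${\wt \pi}^N(K=i|x_{1:N},\xi)=W_i(x_{1:N},\xi)$, the estimator is exactly $E_{{\wt \pi}^N}\!\big(f(x_K)\,\big|\,x_{1:N},\xi\big)$, so by the tower property $E_{{\wt \pi}^N}\big({\wh E}_{CIS}^{N}(f)\big)=E_{{\wt \pi}^N}\big(f(x_K)\big)$. It then remains to identify the marginal law of the selected particle $x_K$: starting from the displayed joint density, integrating out $y$, then the remaining particles $x_{\sk}$ using the marginal $q_k$ of $\bs{q}$, and using $w_k(x_k;\xi)\,q_k(x_k|\xi)=m(x_k)\eta(\xi|x_k)$ followed by $\int\eta(\xi|x_k)\,\mathrm{d}\xi=1$ collapses the expression to ${\wt \pi}^N(k,x_k)=N^{-1}\pi(x_k)$. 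Summing over $k$ shows $x_K\sim\pi$ marginally, whence $E_{{\wt \pi}^N}\big({\wh E}_{CIS}^{N}(f)\big)=E_\pi(f)$, the hypothesis $E_\pi(|f|)<\infty$ guaranteeing absolute convergence and justifying the interchange of summation and integration throughout.
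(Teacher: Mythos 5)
Your proposal is correct and follows essentially the same route as the paper's proof: part (i) by direct marginalization, part (ii) by substituting $\bs{q}_{\sk}=\bs{q}/q_k$ and applying the reversibility identity \eqref{eq: T reversibility} to rewrite the joint as $(NZ_m)^{-1}w_k(x_k;\xi)T(x_k,y;\xi)\bs{q}(x_{1:N}|\xi)$ and read off the conditionals, and part (iii) by the tower property combined with the marginal identity ${\wt\pi}^N(k,x_k)=N^{-1}\pi(x_k)$. The only differences are cosmetic (you keep $y$ in the joint before normalizing, and you apply the tower property before rather than after computing the marginal law of the selected particle), so there is nothing substantive to add.
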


\begin{remark}\label{Rem: IS remark}
We now compare importance sampling to conditional importance sampling. In importance sampling, we draw particles $x_{1:N}$ from an \textit{importance} or \textit{proposal} density $\bs{q}(x_{1:N})$ with marginal densities  $q_i(x_i)$ and calculate their \textit{importance weights}
 \begin{equation*}
	 W_i(x_{1:N}):=\frac{w_i(x_i)}{\sum_{j=1}^N w_j(x_j)}, \quad \text{where} \quad w_i(x_i) := \frac{m(x_i)}{q_i(x_i)},
 \end{equation*}
 to obtain the approximation ${\wh \pi}^N_{IS} :=\{W_{1:N}(x_{1:N}),x_{1:N}\}$ to $\pi$. The IS sampling estimate of $E_\pi(f)$ is
\begin{equation} \label{eq: IS estimator}
   {\wh E}^N_{IS}(f) := \sum_{i=1}^N W_i(x_{1:N}) f(x_i) = E_{{\wh \pi}^N_{IS}} (f)
\end{equation}
In the simplest case, the \textit{particles} $x_{1:N}$ are sampled independently from the same proposal distribution $q$, i.e., $q_1=\dots=q_N=q$ and $\bs{q}(x_{1:N}) = \prod_{i=1}^N q(x_i)$. Despite similarities, there fundamental differences between using ${\wh \pi}_{CIS}^{N}$ and ${\wh \pi}_{IS}^{N}$.
\begin{enumerate}
\item
	The marginal distribution of a sample $X$ from ${\wh\pi}_{IS}^N$ is not $\pi(X)$, while the distribution of $Y$ from ${\wh\pi}_{CIS}^N$ is $\pi(Y)$. Similarly,
\begin{align} \label{eq: IS biased}
E_q \Big ( {\wh E}_{\pi}^{IS}(f)\Big ) \neq E_\pi(f),
\end{align}
whereas $E_{{\wt \pi}^N }\Big ( {\wh E}_{CIS}^{N}(f)\Big ) = E_\pi(f)$.
\item The weights $w_i$ in the CIS may depend on an auxiliary variable $\xi$, with density $\eta(\cdot|y)$, that incorporates past information in the proposal opening the possibility for using local proposals. Moreover, it can be used as a mechanism to bound the weights and provide more robust estimators.
\end{enumerate}
\end{remark}

\subsection{Markov Interacting Importance Sampling Algorithm}\label{s:miisfull}

The MIIS algorithm simulates from the target distribution $\pi$ on $A$. It iterates by first constructing a discrete approximation to $\pi$ using the CIS, conditional on the previous state $(y,k)$ of the Markov Chain, and then samples from the approximation. It requires specifying a joint proposal distribution $\bs{q}(x_{1:N}; \xi)$, an auxiliary distribution $\eta(\xi|y)$, and a Markov transition kernel $T(y,x;\xi)$.

\vspace{1em}
\begin{algorithm}[Markov Interacting Importance Sampler]\label{alg: MIIS algorithm}

Given $y^{(0)}\in A$ and $1\le k^{(0)}\le N$, at step $t=1,2,\dots$
\begin{enumerate}
	\item Generate $\xi^{(t)}|y^{(t-1)} \sim \eta(\xi|y^{(t-1)})$.
	\item Generate $X_{k^{(t-1)}}^{(t)}|(y^{(t-1)},\xi^{(t)})\sim T\left( y^{(t-1)},x_{k^{(t-1)}}^{(t)};\xi^{(t)} \right)$.
	\item Generate
        \[
            X_{\sk^{(t-1)}}^{(t)}\Big|\left(x_{k^{(t-1)}}^{(t-1)},k^{(t-1)},\xi^{(t)}\right) \sim\bs{q}_{\sk^{(t-1)}}\left(x_{\sk^{(t-1)}}^{(t)}\Big| x_{k^{(t-1)}}^{(t)},k^{(t-1)},\xi^{(t)}\right).
        \]
	  \item
 For $k=1, \dots, N$, calculate
\begin{equation*}
w_k\left( x_k^{(t)};\xi \right) = \frac{\,m(x_k^{(t)})}{q_k(x_k^{(t)}|\xi^{(t)})}\eta(\xi^{(t)}|x_k^{(t)}), \quad \text{and} \quad W_k(x_{1:N}^{(t)},\xi^{(t)}) = \frac{w_k\left( x_k^{(t)};\xi \right)}{\sum_{j=1}^N w_j\left( x_j^{(t)};\xi \right)}\ .
\end{equation*}

Draw $K^{(t)}=k|(x_{1:N}^{(t)},\xi^{(t)})$ with probability $W_k(x_{1:N}^{(t)},\xi^{(t)})$.

	  \item Generate $Y^{(t)}|(x_{1:N}^{(t)},k^{(t)},\xi^{(t)})\sim T\left( x_{k^{(t)}}^{(t)},x^{(t)};\xi^{(t)} \right)$.
  \end{enumerate}
\end{algorithm}

We divide the algorithm into two blocks. The first block consists of steps 1 to 3 and uses the CIS to draw an approximation to $\pi$. It corresponds to Algorithm \ref{alg: CIS algorithm} in Section \ref{SS: CIS}. The second block consists of steps 4 and 5 and draws an element from this approximation. It corresponds to part (ii) of Theorem~\ref{thm:CISunbiased}.

The MIIS algorithm is a Gibbs sampler on an augmented space that contains all variables sampled in the CIS step, i.e., it is a Gibbs sampler targeting \eqref{eq:fullmiis}. It also follows that if $(k^{t-1}, y^{(t-1)})\sim N^{-1}\pi(\cdot)$, the marginal distribution of $y^{(t)}$ is the original target $\pi$; the MIIS algorithm generates samples from $\pi$ without the approximation error induced by the CIS step.

\begin{theorem}[Target Distribution]
	The \textit{Markov Interacting Importance Sampler} is a Gibbs sampler targeting the augmented density \eqref{eq:fullmiis}
that has $\pi(y)$ as a marginal density.
\label{thm:fullmiis}
\end{theorem}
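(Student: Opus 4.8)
The plan is to read off from \eqref{eq:fullmiis} the two-block decomposition of the augmented state $(k,y,x_{1:N},\xi)$ that the MIIS algorithm cycles through, and then to verify that each block update in Algorithm~\ref{alg: MIIS algorithm} coincides with the corresponding block conditional of ${\wt \pi}^N$. Once both updates are identified as exact conditionals of ${\wt \pi}^N$, invariance of ${\wt \pi}^N$ follows from the standard fact that a systematic-scan Gibbs sampler leaves its target invariant, and the marginal claim is immediate from Theorem~\ref{thm:CISunbiased}(i), which already gives ${\wt \pi}^N(y)=\pi(y)$.

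Concretely, I would take the two blocks to be $(x_{1:N},\xi)$ and $(k,y)$, which together exhaust the augmented state. For the first block, I would compute the marginal ${\wt \pi}^N(k,y)$ by integrating \eqref{eq:fullmiis} over $(x_{1:N},\xi)$: since $\Gamma^N(\cdot|y,k)$ in \eqref{eq:CIS} is a normalized density in $(x_{1:N},\xi)$, this integral collapses to $N^{-1}\pi(y)$, whence ${\wt \pi}^N(x_{1:N},\xi|y,k)=\Gamma^N(x_{1:N},\xi|y,k)$. Steps 1--3 of Algorithm~\ref{alg: MIIS algorithm} sample $\xi\sim\eta(\cdot|y)$, then $X_k\sim T(y,\cdot;\xi)$, then $X_{\sk}\sim\bs{q}_{\sk}(\cdot|x_k,\xi)$, which is exactly the factorization of $\Gamma^N$ in \eqref{eq:CIS}. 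Hence the first block update draws from the correct conditional ${\wt \pi}^N(x_{1:N},\xi|y,k)$.

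For the second block, the required conditional ${\wt \pi}^N(k,y|x_{1:N},\xi)$ is furnished directly by Theorem~\ref{thm:CISunbiased}(ii), which factorizes it as $W_k(x_{1:N},\xi)\,T(x_k,y;\xi)$. Steps 4 and 5 realize precisely this factorization: step 4 draws $K=k$ with probability $W_k(x_{1:N},\xi)$, and step 5 draws $Y\sim T(x_k,\cdot;\xi)$. Thus the second block update also draws from the correct conditional of ${\wt \pi}^N$. Cycling the two exact block updates is a two-block systematic-scan Gibbs sampler on ${\wt \pi}^N$, so ${\wt \pi}^N$ is invariant; combined with Theorem~\ref{thm:CISunbiased}(i) this yields the statement.

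The main obstacle is bookkeeping rather than analysis: one must check that the conditioning variables appearing in the algorithm's steps (in particular the index labels carried by the $k^{(t-1)}$ and $k^{(t)}$ superscripts, and the marginal structure of $\bs{q}$ that makes $\bs{q}_{\sk}$ in \eqref{eq: cond density q} a genuine conditional density) line up with the block conditionals of ${\wt \pi}^N$. The only nontrivial identity---the form of ${\wt \pi}^N(k,y|x_{1:N},\xi)$ with its reweighting by $W_k$---is already supplied by Theorem~\ref{thm:CISunbiased}(ii), so beyond matching notation no new computation is needed.
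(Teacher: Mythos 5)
Your proposal is correct and follows essentially the same route as the paper's own proof: the paper likewise identifies the first block conditional ${\wt \pi}^N(x_{1:N},\xi\,|\,y,k)=\Gamma^N(x_{1:N},\xi\,|\,y,k)$ and invokes Theorem~\ref{thm:CISunbiased}(ii) for the second block $(k,y)\,|\,(x_{1:N},\xi)$, concluding that the algorithm is a two-block Gibbs sampler on \eqref{eq:fullmiis} with marginal $\pi(y)$. Your write-up simply makes explicit the bookkeeping that the paper's terse proof leaves implicit.
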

\section{Examples}\label{examples}
This section illustrates the MIIS methodology in three useful examples. For simplicity, the Markov transition density is set to the identity density, i.e., $T(y,x;\xi) = \delta_y(x)$, which denotes a density in$x$ that integrates to 1 and which is zero exact at $x=y$; we will sometimes write it as $\delta(x-y)$. We do not use the auxiliary variable $\xi$  in the first two examples, which is equivalent to assuming that $\eta(\xi|x) = \eta(\xi)$ and $\bs{q}(x_{1:N}|\xi) = \bs{q}(x_{1:N})$. Section~\ref{SS: Examples theory} gives formal convergence results for all three examples.

\subsection{Simple Importance Sampling} \label{SS: simple IS}
This specification corresponds to the iterated Sampling Importance Resampling algorithm (i-SIR) in \cite{andrieuetal2013}. In importance sampling algorithms we generate particles independently from importance distributions $q_i(x)=q(x)$ ($i=1,\dots,N$), i.e., $X_{1:N}\sim\prod_{i=1}^Nq(x_i)$. Hence $\bs{q}(x_{1:N}|\xi) = \prod_{i=1}^Nq(x_i)$ and
\[
\bs{q}_{\sk}(x_{\sk}|x_k,k,\xi) = \prod_{i\ne k}^Nq(x_i).
\]
The CIS in this case is
\[
\Gamma^N(x_{1:N}\xi|y,k) = \eta(\xi)\delta(y-x_k)\prod_{i\ne k}^Nq(x_i).
\]
Algorithm~\ref{alg: MIIS with indep particle} follows from  Algorithm~\ref{alg: MIIS algorithm}.
\begin{algorithm}[MIIS with Simple Importance Sampling] \label{alg: MIIS with indep particle}
Given $y^{(t-1)}$ and $k^{(t-1)}=k$,
\begin{enumerate}
    \item Generate $X_i^{(t)}\sim q(x)$, for $i=\{1\dcolon N\}\setminus k$, and set $x_k^{(t)}=y^{(t-1)}$.
	 \item Draw $K^{(t)}=k|x_{1:N}^{(t)}$ with probability proportional to $w_k(x_k^{(t)})=m(x_k^{(t)})/q(x_k^{(t)})$.
    \item Set $y^{(t)}=x^{(t)}_{k^{(t)}}$.
\end{enumerate}
\end{algorithm}

\subsection{Importance Sampling with Antithetic Variables}\label{s:anti}
In the importance sampling literature, the method of antithetic variables consists of drawing perfectly negatively correlated particles to reduce the variance of the Monte Carlo estimate. We can use this method within the MIIS framework. The importance sampler with antithetic variables draws the particles in pairs from a proposal distribution. Suppose that $N$ is even. For $k\le N/2$, let $q_k(x_k)$ be the density of $x_k$ with corresponding cumulative distribution function $Q_k(\cdot) $ and let $x_{N/2+k} = Q_k^{-1} ( 1- Q_k(x_k)) $, where $Q_k^{-1}$ is the inverse of $Q_k$. We write the joint density of $x_k, x_{N/2+k} $ as
\begin{align*}
	q_{k,N/2+k}(x_k,x_{N/2+k}) =  q_k(x_k) \delta_{ Q_k^{-1}(1-Q_k(x_k)) }(x_{N/2+k}) .
\end{align*}
The marginals are $q_k(x) = q_{N/2+k}(x)$ and the conditional density of $X_k$ given $x_{N/2+k}$ is $q_k(x_k|x_{N/2+k}) = \delta_{Q_k^{-1}(1-Q_k(x_{N/2+k})) }(x_k)$.
For notational simplicity assume $k\le N/2$. We sample the particle system given $(x_k,k)$ from
    \begin{eqnarray*}
		 \bs{q}_{\sk}(x_{\sk}|x_k,\xi,k) &=& \delta_{Q_k^{-1}(1-Q_k(x_k))  }(x_{N/2+k})\prod_{i\ne k}^{N/2}q_{i,N/2+i}(x_i, x_{N/2+i})\\
        &=& \frac{\prod_{i=1}^{N/2}q_{i,N/2+i}(x_i,x_{N/2+i})}{q_k(x_k)}\\
		  &=& \frac{\bs{q}(x_{1:N})}{q_k(x_k)},
    \end{eqnarray*}
    and the CIS is
	 \[
		 \Gamma^N(x_{1:N}\xi|y,k) = \eta(\xi)\delta_y(x_k) \frac{\prod_{i=1}^{N/2}q_{i,N/2+i}(x_i,x_{N/2+i})}{q_k(x_k)}.
	 \]

\begin{algorithm} [MIIS with Antithetic Variables]\label{alg: miis with antithetic}
Given $y^{(t-1)}$ and $k^{(t-1)}=k$,
\begin{enumerate}
	\item Generate $(X_i^{(t)},X_{N/2+i})\sim q_{i,N/2+i}(x_i,x_{N/2+i})$, for $i=\{1{:}N/2\}\setminus k$.
	\item If $k \leq N/2$, set $x_k^{(t)}=y^{(t-1)}$, and $x_{N/2+k}=Q_k^{-1}(1-Q_k(x_k^{(t)}))$.
If $k>N/2$, set $x_k^{(t)}=y^{(t-1)}$, and $x_{k-N/2}=Q_{k-N/2}^{-1}(1-Q_{k-N/2}(x_k^{(t)}))$.
   \item Draw $K^{(t)}=k|x_{1:N}^{(t)}$ with probability proportional to $m(x_k^{(t)})/ q_k(x_k^{(t)})$.
   \item Set $y^{(t)}=x^{(t)}_{k^{(t)}}$.
\end{enumerate}
\end{algorithm}

\subsection{Random Walk Importance Sampler}\label{s:randomwalk}
    The random walk importance sampler draws particles from a symmetric proposal dependent on its past. The advantage is that the method bounds the weights by construction. The random walk proposal performs local exploration around the auxiliary variable $\xi$, which we sample conditionally on the previous state.

	 Let $q(\cdot|y) = \eta(\cdot|y) = \phi\left( \cdot-y \right)$ denote the proposal functions for $q_i$ and $\eta$. Then
    \[
		 \bs{q}_{\sk}(x_{\sk}|x_k,k,\xi) = \prod_{i\ne k}^N\phi(x_i-\xi)
    \]
    The CIS is
    \[
		 \Gamma^N(x_{1:N},\xi|y,k) = \delta_y(x_k)\phi(\xi-x_k)\prod_{i\ne k}^N\phi(x_i-\xi).
    \]

	The random walk importance sampler bounds the weights if $m(x)$ is bounded.
The sampling algorithm follows from Algorithm~\ref{alg: MIIS algorithm}

\begin{algorithm} [MIIS with Random Walk proposal]\label{alg: miis with RW}
Given $x^{(t-1)}$ and $k^{(t-1)}=k$,
\begin{enumerate}
	\item Generate $\xi^{(t)}|y^{(t-1)} \sim \phi(\xi-x^{(t-1)})$
	\item Generate $X_i^{(t)}\sim \phi(x-\xi^{(t)})$, for $i=\{1\dcolon N\}\setminus k$, and set $x_k^{(t)}=y^{(t-1)}$.
   \item Draw $K^{(t)}=k|x_{1:N}^{(t)}$ with probability proportional to $m(x_k^{(t)})$.
   \item Set $y^{(t)}=x^{(t)}_{k^{(t)}}$.
\end{enumerate}
\end{algorithm}

\section{MIIS Targeting Conditional Distributions}\label{s:miisgibbs}
This section shows how to use use the MIIS algorithm within a Gibbs sampling framework. We use the following notation. Suppose we partition $x \in A $ as $\{x(1), \dots , x(d)\}$. Then, for $1 \leq s \leq t \leq d$, $x(s\dcolon t):= \{x(s), x(s+1), \dots, x(t)\}, x_i(s\dcolon t): = \{ x_i(s), \dots, x_i(t)\}$, etc. We define $A_s := \{x(s)\dcolon  x \in A\}$ and $A_{\sk[s]} := \{ x(\sk[s])\dcolon  x \in A\}$.  For a density $\nu(x)$, $x \in A$, we define the conditional density $\nu_s(x(s)|x(\sk[s]) ) :=\nu(x)/\nu(x(\sk[s]) )$ and the conditional expectation
\begin{align}\label{eq: cond integral}
 E_{\nu_s(\cdot|x(\sk[s]))}(f) := \int_{A_s} f(x)\nu_s(x(s)|x(\sk[s]))\dx(s).
\end{align}

\subsection{Conditional Importance Sampler for conditional distributions} \label{SS: cis for conditional}
The CIS for conditional distributions is similar to the CIS in Section \ref{SS: CIS}, but now targets $\pi_s(x(s)|x({\sk[s]}))$, $s=1,\dots,d$.  Given $y \in A$, $s \in \{1\dcolon d\}$ and $k_s \in \{1\dcolon N\}$, let $\eta_s(\xi(s)|y(s),y(\sk[s]))$ be the density of the auxiliary variable $\xi(s)$, conditional on $y$. Let $T_s(y(s),x_{k_s}(s);\xi(s),y(\sk[s]))$ be a the density of a Markov transition kernel, conditional on $(\xi(s),y(\sk[s]))$, that is reversible with respect to $\pi_s(y(s)|\xi(s),y(\sk[s])) \propto \pi_s(y(s) |y(\sk[s]))\eta_s(\xi(s)|y(s),y(\sk[s]))$.

Given $\xi(s)$ and $y(\sk[s])$, let $\bs{q}_s(x_{1:N}(s)|\xi(s),y(\sk[s]))$ be a joint importance density with marginals $q_{s,i}(x_i(s)|\xi(s),y(\sk[s]))$ ($i=1,\dots,N$), and
\begin{equation} \label{eq: cond cond}
	\bs{q}_{s,\sk_s}(x_{\sk_s}(s) |x_{k_s},\xi(s),y(\sk[s])) :=\frac{\bs{q}_s(x_{1:N}(s)|\xi(s),y(\sk[s]))}{q_{s,k_s}(x_{k_s}(s)|\xi(s),y(\sk[s]))}.
\end{equation}

\begin{definition}[Conditional Importance Sampler for conditional distributions:]
	For $1\le s\le d$, $y \in A$, and $k_s \in \{1\dcolon N\}$, the \emph{Conditional Importance Sampler for conditional distributions} generates $X_{1:N}(s),\xi(s)|(y(s), k_s,y(\sk[s])) $ from the probability distribution
	\begin{multline}	 \label{eq:CIS-conditional}
		\Gamma_s^N(x_{1:N}(s),\xi(s)|y(s), k_s,y(\sk[s]) ) = \eta_s( \xi(s)|y(s),y(\sk[s]))\, T_s(y(s),x_{k_s}(s);\xi(s), y(\sk[s])) \\
		\times \bs{q}_{s,\sk_{s}}(x_{\sk_{s}}(s)|x_{k_s}(s),k_s,\xi(s),y(\sk[s])).
	\end{multline}
\end{definition}

In the CIS for conditional densities, we first generate $\xi(s)$, then we generate $x_{k_s}(s)$ conditional on $\xi(s)$, and finally the remaining particles $x_{\sk_{s}}(s)$ conditional $\xi(s)$ and $x_{k_s}(s)$

Suppose we express the target $\pi_s(x(s)|x(\sk[s])) \propto m_s(x(s)|x(\sk[s]))$, where we can evaluate $m_s(x(s)|x(\sk[s]))$. From the output of the CIS for conditional distributions, we define the weights
\begin{equation}\label{eq: CIS for cond}
W_{s,i}(x_{1:N}(s);\xi(s)|y(\sk[s]))  = \frac{w_{s,i}(x_{i}(s); \xi(s)| y(\sk[s]))}
{\sum_{j=1}^N w_{s,j}(x_{j}(s);\xi(s)|y(\sk[s]))},
\end{equation}
where
\begin{equation}
w_{s,i}(x_{i}(s); \xi(s)|y(\sk[s]))  = \frac{ m_s(x_i(s)|y(\sk[s]))}{q_{s,i}(x_i(s)|\xi(s),y(\sk[s]))}\,\eta_s(\xi(s)|x_i(s),y(\sk[s]))
\end{equation}
and consider ${\wh \pi}^N_{s,CIS}(\cdot|y(\sk[s])):=\{ (W_{s,1},x_{1}(s)), \dots , (W_{s,N},x_{N}(s))\}$ as an empirical approximation of $\pi_s(\cdot |y(\sk[s]))$. Based on ${\wh \pi}^N_{s,CIS}$, we define the estimator of $E_{\pi_s(\cdot |y(\sk[s]))}(f)$ as
\begin{align}\label{eq: con cond IS }
	{\wh E}_{s,CIS}^{N}(f|y(\sk[s])) &:=  \sum_{i=1}^N  W_{s,i}(x_i(s);\xi(s),y(\sk[s])) f(x_i(s), y(\sk[s])) = E_{{\wh \pi}^N_{s,CIS}(\cdot|y(\sk[s]))}(f).
\end{align}

Analogously to the CIS, define the joint density of $(K_s,Y(s),X_{1:N}(s),\xi(s)$ conditional on $Y(\sk[s])$ as
\begin{align} \label{eq:fullmiis  cond}
{\wt \pi}^N_s(k_s,y(s),x_{1:N}(s),\xi(s)|y(\sk[s])) :=  \frac{\pi_s(y(s)|y(\sk[s]))}{N} \Gamma^N_s(x_{1:N}(s), \xi(s)|y(s),k_s,y(\sk[s])).
\end{align}
Lemma~\ref{thm:CISunbiased-conditional} gives some properties of the density \eqref{eq:fullmiis  cond}
  and shows that if $(k_s, y(\sk[s]))$ is generated from $N^{-1} \pi_s(y(s))|y(\sk[s]))$
then the expectation of ${\wh E}_{s,CIS}^{N}(f)$ is $E_{\pi_s(\cdot|y(\sk[s]))}(f)$.

\begin{theorem} \label{thm:CISunbiased-conditional}
	Suppose $(k_s,y(s))$ be a sample from $N^{-1}\pi_s(y(s))|y(\sk[s]))$, and $(x_{1:N}(s),\xi(s))$ a sample from $\Gamma_s^N(x_{1:N}(s),\xi(s)|y(s),k_s,y(\sk[s]))$. Then, conditional on $y(\sk[s])$,
\begin{itemize}
\item [(i)]
${\wt \pi}^N_s(y(s)) = \pi_s(y(s)) $.

\item [(ii)] The conditional density of $k_s,y(s)$ given $x_{1:N}(s),\xi(s)$ is
\begin{align*}
{\wt \pi}^N_s(k_s,y(s)|x_{1:N}(s),\xi(s)) & = W_{s,k_s}  T(x_{s_k}(s),y(s); \xi(s)) \\
\intertext{or equivalently}
{\wt \pi}^N_s (k_s|x_{1:N}(s), \xi(s))  & = W_{s,k_s}\quad \text{ and } \quad \\
{\wt \pi}^N_s (y(s)|x_{1:N}(s), \xi(s), k_s) & = T_s(x_{k_s}(s), y(s); \xi(s)).
\end{align*}

\item [(iii)]
$
E_{{\wt \pi}_s^N(\cdot |y(\sk[s]))} \Big ( {\wh E}_{s,CIS}^{N}(f)
\Big )
  = E_{\pi_s(\cdot|y(\sk[s]))}(f) .
$
\end{itemize}
\end{theorem}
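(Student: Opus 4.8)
The plan is to observe that, with the conditioning block $y(\sk[s])$ held fixed, the entire construction in \eqref{eq:CIS-conditional} is a verbatim instance of the unconditional Conditional Importance Sampler of Section~\ref{SS: CIS}. Consequently all three claims follow from Theorem~\ref{thm:CISunbiased} applied through the dictionary
\[
\pi(\cdot)\leftrightarrow\pi_s(\cdot|y(\sk[s])),\quad m(\cdot)\leftrightarrow m_s(\cdot|y(\sk[s])),\quad \eta(\cdot|\cdot)\leftrightarrow\eta_s(\cdot|\cdot,y(\sk[s])),
\]
\[
T\leftrightarrow T_s,\quad \bs{q}(\cdot|\xi)\leftrightarrow\bs{q}_s(\cdot|\xi(s),y(\sk[s])),\quad A\leftrightarrow A_s.
\]
Under this dictionary the weights \eqref{eq: CIS for cond} and the augmented density for the conditional sampler are exactly the conditional images of the weights \eqref{eq: wts CIS} and of \eqref{eq:fullmiis}, so it suffices to confirm that the hypotheses of Theorem~\ref{thm:CISunbiased} transfer.

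The only hypothesis that requires checking rather than transcription is the reversibility identity \eqref{eq: T reversibility}. First I would set $\eta_s(\xi(s)|y(\sk[s])):=\int_{A_s}\eta_s(\xi(s)|y(s),y(\sk[s]))\,\pi_s(y(s)|y(\sk[s]))\,\dx(s)$, so that
\[
\pi_s(y(s)|\xi(s),y(\sk[s]))=\frac{\eta_s(\xi(s)|y(s),y(\sk[s]))\,\pi_s(y(s)|y(\sk[s]))}{\eta_s(\xi(s)|y(\sk[s]))}
\]
is a genuine density in $y(s)$ for each fixed $(\xi(s),y(\sk[s]))$. Multiplying the assumed reversibility of $T_s$ with respect to $\pi_s(\cdot|\xi(s),y(\sk[s]))$ through by the $y(s)$-free normalizer $\eta_s(\xi(s)|y(\sk[s]))$ then shows that the quantity $\pi_s(y(s)|y(\sk[s]))\,\eta_s(\xi(s)|y(s),y(\sk[s]))\,T_s(y(s),x(s);\xi(s),y(\sk[s]))$ is symmetric under the interchange $x(s)\leftrightarrow y(s)$, which is precisely the conditional analogue of \eqref{eq: T reversibility} and the input that drives part~(ii) of the unconditional argument.

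With the dictionary and this reversibility in hand, the three conclusions are immediate images of Theorem~\ref{thm:CISunbiased}. For part~(i) I would integrate $\Gamma_s^N$ successively over $x_{\sk_s}(s)$, over $x_{k_s}(s)$, and over $\xi(s)$ (each factor integrates to one) and sum the $N^{-1}$ factor over $k_s$, leaving $\pi_s(y(s)|y(\sk[s]))$. For part~(ii) I would use the conditional reversibility to re-express $\pi_s(y(s)|y(\sk[s]))\,\eta_s\,T_s$ in terms of $x_{k_s}(s)$, recognise the unnormalised weight $w_{s,k_s}$, and divide by the marginal in $(x_{1:N}(s),\xi(s))$, which yields $W_{s,k_s}\,T_s(x_{k_s}(s),y(s);\xi(s))$. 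Part~(iii) then follows from (i)--(ii) by the tower property: since $W_{s,k_s}={\wt\pi}^N_s(K_s=k_s|x_{1:N}(s),\xi(s))$, the expectation of $\sum_i W_{s,i}f(x_i(s),y(\sk[s]))$ collapses to an ordinary importance-sampling average against $m_s(\cdot|y(\sk[s]))$ that integrates to $E_{\pi_s(\cdot|y(\sk[s]))}(f)$. I expect the only real difficulty to be bookkeeping rather than mathematics: the conditioning coordinates $y(\sk[s])$ must be carried unchanged through every factor and must be verified to be untouched by the CIS step, so that the reduction to the fixed-$y(\sk[s])$ problem is exact and Theorem~\ref{thm:CISunbiased} applies without modification.
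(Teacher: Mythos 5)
Your proposal is correct and matches the paper's own proof, which simply states that the argument is analogous to that of Theorem~\ref{thm:CISunbiased} with $\pi$ replaced by $\pi_s(\cdot|y(\sk[s]))$; your dictionary makes that substitution explicit and additionally verifies that the conditional reversibility hypothesis transfers, which the paper leaves implicit.
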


\subsection{The Markov Interacting Importance Sampler within Gibbs} \label{SS: Gibbs MIIS algorithm}
The algorithm extends the MIIS sampler targeting the full density. It simulates sequentially from the conditional distributions $\pi_1(y(1)|y(\sk[1])), \dots, \pi_d(y(d)|y(\sk[d]))$, using the CIS approximation to the conditionals. The method is an alternative to the Metropolis-within-Gibbs algorithm that is is suitable for the application of the variance reduction techniques in Section \ref{s:VR}. The MIIS within Gibbs sampler requires the specification of joint proposal distributions $\{\bs{q}_s(x_{1:N}(s)|\xi(s),y(s), y(\sk[s])\}$, auxiliary distributions $\{\eta_s(\xi(s)|y(s), y(\sk[s]))\}$, and Markov transition kernels $\{T_s(y(s),x_{k_s}(s);\xi(s),y(\sk[s]))\}$, for each $s=1,\dots,d$. The general form of the MIIS Gibbs sampler is given by Algorithm~\ref{alg:gibbs miis}

\vspace{1em}
\begin{algorithm}  [The Markov Interacting Importance Sampler within Gibbbs]\label{alg:gibbs miis}
Given $y^{(0)}\in A$ and $1\le k_s^{(0)}\le N$, $s=1,\dots,d$, the algorithm at step $t=1,2,\dots$, is described as follows,
with all terms conditional on $y^{(t)}(1{:}s{-}1)$ and $y^{(t-1)}(s{+}1{:}d)$.
\begin{enumerate}
	\item[1.] For $s=1,\dots,d$,
		\begin{enumerate}
			\item[1.1.] Generate $\xi^{(t)}(s) \sim \eta_s(\xi(s)|y^{(t-1)}(s))$.
			\item[1.2.] Generate
			\[
	           X_{k_s^{(t-1)}}^{(t)}(s)\sim T\left( y^{(t-1)}(s),x_{k_s^{(t-1)}}^{(t)}(s);\xi^{(t)}(s))\right).
			\]
		\item[1.3.] Generate
            \[
           X_{\sk^{(t-1)}_s}^{(t)}(s) \sim\bs{q}_{s,\sk^{(t-1)_s}_s}\left( x^{(t)}_{\sk^{(t-1)}_s}(s)\Big|x^{(t)}_{k^{(t-1)}_s}(s),k_s^{(t-1)},\xi^{(t)}(s) \right ) ,
			\]
conditional on $x^{(t)}_{k_s^{(t-1)}}(s),k_s^{(t-1)},\xi^{(t)}(s),y^{(t)}(\sk[s])$.
		\item[1.4.] Draw $K_s^{(t)}=k|(x_{1\dcolon N}^{(t)}(s),\xi^{(t)}(s))$ with probability proportional to
			\[
			 w_{s,k}\left( x^{(t)}_{k}(s); \xi^{(t)}(s)\right)  = \frac{\,m_s\left(x_k^{(t)}(s)\right)\eta_s\left(\xi^{(t)}(s)|x_{k}^{(t)}(s)\right)}
{q_{s,k}\left(x_{k}^{(t)}(s)|\xi^{(t)}(s)\right)}.
            \]

	  \item[1.5.] Generate
		  \[
			  Y^{(t)}(s)\sim T\left( x_{k^{(t)}}^{(t)}(s),y^{(t)}(s);\xi^{(t)}(s) \right).
		  \]
	\end{enumerate}
\item[2.] Set $y^{(t)} = (y^{(t)}(1),\dots,y^{(t)}(d))'$.
\end{enumerate}
\end{algorithm}

\vspace{1em}

For each partition $s=1,\dots,d$, the algorithm iterates as in the MIIS algorithm. Steps 1.1 -- 1.3 construct an approximation ${\wh \pi}^N_{s,CIS}$ to $\pi_s(\cdot|y(\sk[s]))$. Steps 1.4 and 1.5 then draw an element from this approximation.
As before, the MIIS for conditional distributions is a Gibbs sampler on an augmented space that contains all variables sampled in the CIS step. It also follows that the marginal distribution of $y^{(t)}$ is the original target $\pi$. Theorem \ref{thm:conditionalmiis} shows the augmented target distribution and that it generates samples from $\pi$.

\begin{theorem}[Target Distribution]
	The \textit{Markov Interacting Importance Sampler} is a Gibbs sampler targeting the augmented distribution  given by
\begin{equation}
	{\wt \pi}^N(y,\xi,x_{1:N}(1),\dots,x_{1:N}(d),k_{1:d}) = \frac{\pi(y)}{N^d}\prod_{s=1}^d\Gamma_s^N(x_{1:N}(s),\xi(s)|y(s),k_s,y(\sk[s])),
    \label{eq:conditionalmiis}
\end{equation}
and has $N^{-d}\pi(y)$ as a marginal distribution of $(k_{1:d},y)$.
\label{thm:conditionalmiis}
\end{theorem}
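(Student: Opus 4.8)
The statement parallels Theorem~\ref{thm:fullmiis}, and the plan is to prove it in two stages: first verify that $\wt\pi^N$ in \eqref{eq:conditionalmiis} is a bona fide probability density whose $(k_{1:d},y)$-marginal equals $N^{-d}\pi(y)$, and then show that one sweep of Algorithm~\ref{alg:gibbs miis} realises a (partially collapsed) Gibbs scan that leaves $\wt\pi^N$ invariant. The first stage is routine bookkeeping; the second is where the difficulty lies.

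For the marginal I would integrate out the particle and auxiliary coordinates one block at a time. By construction \eqref{eq:CIS-conditional}, each factor $\Gamma_s^N(x_{1:N}(s),\xi(s)\mid y(s),k_s,y(\sk[s]))$ is a normalised conditional density in $(x_{1:N}(s),\xi(s))$: integrating the remaining particles against $\bs q_{s,\sk_s}$ gives one, then integrating $x_{k_s}(s)$ against the transition kernel $T_s$ gives one, and finally integrating $\xi(s)$ against $\eta_s$ gives one. Hence integrating $\wt\pi^N$ over all $\{(x_{1:N}(s),\xi(s))\}_{s=1}^d$ leaves $N^{-d}\pi(y)$, which is exactly the claimed marginal of $(k_{1:d},y)$; summing over the $N^d$ configurations of $k_{1:d}$ and integrating $y$ returns $\int\pi(y)\,\dy=1$, so $\wt\pi^N$ integrates to one.

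For the Gibbs structure I would split the sweep into the $2d$ conditional moves $\mathrm{A}_1,\mathrm{B}_1,\dots,\mathrm{A}_d,\mathrm{B}_d$, where $\mathrm{A}_s$ denotes steps~1.1--1.3 and $\mathrm{B}_s$ denotes steps~1.4--1.5 of Algorithm~\ref{alg:gibbs miis}. Move $\mathrm{A}_s$ regenerates $(\xi(s),x_{1:N}(s))$; since $\Gamma_s^N$ is the only factor of \eqref{eq:conditionalmiis} that contains these coordinates, $\mathrm{A}_s$ draws precisely from their $\wt\pi^N$-conditional, namely $\Gamma_s^N(\cdot\mid y(s),k_s,y(\sk[s]))$. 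Move $\mathrm{B}_s$ then draws $k_s$ with probability $W_{s,k_s}$ and sets $y(s)\sim T_s(x_{k_s}(s),\cdot\,;\xi(s))$; here I would invoke Theorem~\ref{thm:CISunbiased-conditional}(ii) to identify this draw with the conditional $\wt\pi_s^N(k_s,y(s)\mid x_{1:N}(s),\xi(s),y(\sk[s]))$ of the block density \eqref{eq:fullmiis  cond}.

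The main obstacle is that $\mathrm{B}_s$ changes $y(s)$ while the other factors $\Gamma_{s'}^N$ ($s'\ne s$) depend on $y(s)$ through their conditioning set $y(\sk[s'])$, so the $\mathrm{B}_s$-draw, proportional to $\pi_s(y(s)\mid y(\sk[s]))\,\Gamma_s^N$, is not literally the full $\wt\pi^N$-conditional of $(k_s,y(s))$ given every other coordinate. The resolution I would adopt is that Algorithm~\ref{alg:gibbs miis} regenerates each block's auxiliary and particle coordinates exactly once per sweep, so at the instant $(k_s,y(s))$ is updated the coordinates $\{(\xi(s'),x_{1:N}(s'))\}_{s'\ne s}$ are effectively marginalised out; and integrating these out of \eqref{eq:conditionalmiis} collapses every $\Gamma_{s'}^N$ with $s'\ne s$ to one, erasing all dependence on $y(s)$ outside block $s$ and leaving exactly $N^{-1}\pi_s(y(s)\mid y(\sk[s]))\,\Gamma_s^N$. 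This identifies the sweep with the partially collapsed Gibbs sampler on $\wt\pi^N$ in which block $s$'s auxiliaries are trimmed before $(k_s,y(s))$ is refreshed and are reintroduced when the block is next visited; the scan order in Algorithm~\ref{alg:gibbs miis} is precisely the one that keeps this collapsing valid. Composing $\mathrm{A}_s$ and $\mathrm{B}_s$ and applying the single-block Theorem~\ref{thm:fullmiis} to the conditional target $\pi_s(\cdot\mid y(\sk[s]))$, each block update leaves $N^{-1}\pi_s(y(s)\mid y(\sk[s]))$ invariant at fixed $y(\sk[s])$; since the move alters only $(y(s),k_s)$, the standard Gibbs composition argument then yields invariance of $N^{-d}\pi(y)$ under the whole scan, and hence of the augmented target $\wt\pi^N$.
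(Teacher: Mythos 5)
Your proposal is correct and follows essentially the same route as the paper: the paper likewise writes the sweep as a (partially) collapsed Gibbs sampler on the augmented space, verifies that steps 1.1--1.3 draw from the full conditional $\Gamma_s^N$ and that steps 1.4--1.5 draw from the block-density conditional $W_{s,k_s}\,T_s(x_{k_s}(s),\cdot\,;\xi(s))$ (which the paper derives by the same reversibility computation that underlies Theorem~\ref{thm:CISunbiased-conditional}(ii), the result you invoke), and obtains the marginal $N^{-d}\pi(y)$ by integrating out the particle and auxiliary blocks. If anything, your explicit discussion of why the collapsing of the factors $\Gamma_{s'}^N$, $s'\ne s$, is legitimate addresses a point the paper's proof passes over in silence.
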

\subsection{Example: MIIS within Gibbs with Simple Importance Sampling}\label{s:gibbs}
The MIIS sampler takes the conditional distributions in the Gibbs sampler as the target distributions for the conditional importance samplers. Suppose that we use a simple importance sampling algorithm to construct the CIS approximation. Then, for each $s=1,\dots,d$,
\[
	\Gamma_s^N(x_{1:N}(s),\xi(s)|y(s), k_s,y(\sk[s])) = \eta(\xi(s))\delta(y(s)-x_{k}(s)) \prod_{i\ne k_s}^Nq_{s,i}(x_{i}(s)),
\]
for proposal distributions $q_{s,i}(x_i(s)) = q_s(x_i(s))$.

The distribution of the marginal sequence $x^{(t)}$ generated by this algorithm converges to the full target $\pi$ as the number of iterations increases under suitable regularity conditions that are given in Section~\ref{theory}.

Next algorithm follows from Algorithm~\ref{alg:gibbs miis}. Corollary~\ref{l:IS-CIScond} in Section~\ref{ex: gibbs with IS} gives formal convergence result for Algorithm~\ref{alg: example gibbs miis}.
\medmuskip = 0mu
\begin{algorithm}[MIIS for Gibbs Sampler with Simple Importance Sampling] \label{alg: example gibbs miis}
Given $y^{(0)}$ and $k_{1:d}^{(0)}$,
\begin{enumerate}
	\item for $s=1,\dots,d$
		\begin{enumerate}
			\item Generate $X_{i}(s)|y^{(t)}(\sk[s])\sim q_{s}(x_{i}(s))$, for $i=\{1\dcolon N\}\setminus k_s$, and set $x^{(t)}_{k_s}(s)=y^{(t-1)}(s)$.
			\item Draw $K_s^{(t)}=k|(x^{(t)}_{1:N}(s),y^{(t)}(\sk[s]))$ with probability proportional to the weight $(m_s(x^{(t)}_{k}(s)|y^{(t)}(\sk[s]))/q_{s}(x^{(t)}_{k}(s)) $.
    		\item Update $y^{(t)}(s)=x^{(t)}_{k_s^{(t)}}(s)$.
		\end{enumerate}
	\item $t=t+1$
\end{enumerate}
\end{algorithm}

\section{Estimation of expectations using variance reduction methods}\label{s:VR}
Variance reduction techniques play a central role in Monte Carlo integration. We can directly embed variance reduction methods such as antithetic sampling into the conditional importance sampling approximation. This section takes a step further and considers variance reduction methods based on the output of the MIIS algorithm. Suppose that the algorithm targeting $\pi$ runs for $M$ iterations. The simplest estimator of $E_\pi(f)$, which uses only the output $\{x^{(t)}\}$ from the Markov Chain, is
\begin{equation}
{\wh E}^M_{ MC}(f) := \frac{1}{M}\sum_{t=1}^Mf\left(x^{(t)}\right) = E_{{\wh \pi}_{MC}^M}(f)
\label{eq:pimc}
\end{equation}
where ${\wh \pi}_{MC}^M = \{(1/N,x^{(1)}), \dots, (1/N,x^{(M)})\}$.

We can improve efficiency by \textit{reusing all the particles},
constructing \textit{Rao-Blackwellized} estimators, and using \textit{control variates}.
Section~\ref{s:VR theory} shows that all the estimators in this section are consistent under ergodicity.  We assume throughout this section that the chain has reached the stationary distribution before running $M$ iterations of the algorithm. In this case the estimators are also unbiased. In the practical situation where the initialization is arbitrary, the estimators are asymptotically unbiased in $M$ for a fixed $N$.

\subsection{Reusing all the particles}\label{SS: reusing}

The MIIS algorithm constructs an unbiased approximation
\begin{align}
{\wh E}_{CIS, t}^{N} (f):= \sum_{i=1}^N W_i\left(x_{1:N}^{(t)};\xi^{(t)}\right)f\left(x_i^{(t)}\right) \label{eq: def CIS t}
\end{align}
to $E_\pi(f)$ at each iteration $t$ of the Markov chain, after the chain has converged.
The MIIS estimator that averages over the  terms ${\wh E}_{CIS, t}^{N} (f) $ is
\begin{align}
 {\wh E}^{M,N}_{MIIS} (f) & : =  \frac{1}{M}\sum_{t=1}^M {\wh E}_{CIS, t}^{N} (f) = {\wh E}^M_{MC}\Big ({\wh E}_{CIS,t}^{N} (f)\Big )
\label{eq:pimiis}
\end{align}

\subsection{Rao-Blackwellization}\label{s:RB}
The motivation for Rao-Blackwellized estimators is that the variance of $f(x(s))$ is larger than the variance of $E_{\pi_s(\cdot|x(\sk[s]) } (f)$. However, the latter requires knowledge of the conditional expectation in closed form. The MIIS for the Gibbs sampler overcomes this limitation by using an unbiased approximation of the unknown conditional expectation. It follows from Theorem \ref{thm:CISunbiased-conditional} that, at each iteration $t$ of the Markov chain, the term
${\wh E}_{s,CIS}^{N}(f)$
is an unbiased estimator
 of $E_{\pi_s(\cdot|x(\sk[s])}(f)$. For each $s=1,\ldots,d$, define
\begin{align}
{\wh E}^{M,N}_{s,RB}(f) &= \frac1M \sum_{t=1}^M  {\wh E}_{s,CIS,t}^N(f)
  \label{eq:pirb}
\intertext{where}
{\wh E}_{s,CIS,t}^N(f) &=  \sum_{i=1}^N
W_{s,i}\left(x_{1:N}^{(t)}(s); \xi_s^{(t)}|x^{(t)}(\sk[s]) \right)f\left(x_i^{(t)}\right)
\label{eq: def CIS s t}
\end{align}
and  $x_i^{(t)} = \{x_i(s), x^{(t)}(\sk[s])\}$ and $ x^{(t)}(\sk[s])= \{x^{(t)}(1\dcolon s-1), x^{(t-1)}(s+1\dcolon d)\}$.

We define the Rao-Blackwellized  MIIS  estimator for the Gibbs sampler
as the average of  the marginal Rao-Blackwellized estimators in \eqref{eq:pirb},
\begin{align} \label{eq: RB gibbs estimator}
{\wh E}_{MIIS}^{M,N}(f) &  =  \frac{1}{d}\sum_{s=1}^d {\wh E}^{M,N}_{s,RB}(f).
\end{align}
Both the marginal Rao-Blackwellized MIIS estimators ${\wh E}^{M,N}_{s,RB}(f)$ and the
Rao-Blackwellized  MIIS  estimator for the Gibbs sampler
 ${\wh E}_{MIIS}^{M,N}(f)$ are  unbiased
estimators of $E_{\pi}(f)$ and converge to $E_\pi(f)$  with probability one as $M\rightarrow\infty$, for any $N\ge 2$.

\subsection{Control Variates}\label{s:CV}
It is optimal to further combine the simple Monte Carlo estimator and the MIIS estimator. For $j=1, \dots, p$, suppose that $g_j(x)$ is an integrable function with respect to the density $\pi$ and $U(g_j)$ a real function such that $E_{\wt \pi^N} \big ( U(g_j) \big ) = 0 $. Let
$\boldsymbol\kappa = (\kappa_1,\ldots,\kappa_p)$ be a $p\times 1 $ vector of parameters and let $F = f - \sum_{j=1}^p \kappa_j U(g_j) $.
For an optimal choice of $\boldsymbol\kappa$, we would like the variance of the estimate of the posterior mean of $F$ to be smaller than that of $f$. The variables $U(g_i)$ are the \textit{control variates}. The Monte Carlo estimator using $F$ in place of $f$ is studied in many settings;  \citet{mcmc} and \citet{scicomp}, among others, discuss the standard case. Control variates are not commonly used in an MCMC setting
because the Markov sampling scheme makes it more difficult to find suitable candidate control variates with mean zero.

Define ${\wh E}_{CIS,t}^N (g_j)$ similarly to
\eqref{eq: def CIS t} and
\begin{align}
U_t(g_j) & := g_j(x^{(t)}) - {\wh E}_{CIS,t}^N (g_j) \label{eq: U t j def}
\end{align}

Assuming ergodicity, the samples from the MIIS Markov chain are eventually distributed as $\pin$ and $\pin[U_t(g_i)] =  0$ as required. The estimator with control variates is
\begin{eqnarray}
{\wh E}_{CV}^{M,N}(f;{\bf\kappa}) &=& \frac{1}{M}\sum_{t=1}^M\left\{f\left(x^{(t)}\right)-\sum_{j=1}^p\kappa_j\left[g_j(x^{(t)}-{\wh E}_{CIS,t}^{N}(g_j)\right]\right\}\nonumber\\
&=& \frac{1}{M}\sum_{t=1}^M\left\{f\left(x^{(t)}\right)-\sum_{j=1}^p\kappa_jU_t(g_j)\right\}\nonumber\\
&=&{\wh E}_{MC}^M\left[f-\sum_{j=1}^p\kappa_jU(g_j)\right] = {\wh E}_{MC}^M(F).
\end{eqnarray}
An alternative compact notation shows how we combine the previous estimators,
\begin{equation}\label{e:cvestimator}
{\wh E}_{CV}^{M,N}(f;\bs\kappa) = {\wh E}_{MC}^M(f)-\sum_{j=1}^p\kappa_j\left[{\wh E}_{MC}^M(g_j) - {\wh E}_{MIIS}^{M,N}(g_j)\right].
\end{equation}
In a simple case we may have for example $p=1$ and $g_1(x)=f(x)$, which allows us to take advantage of the typically high correlations between the simple MC and MIIS estimators of $E_\pi(f)$.

The optimal choice of coefficients $\boldsymbol\kappa$ (in the sense of minimizing the variance of the estimator) solves the problem of projecting ${\wh E}_{MC}^M(f)$ on $\sum_{j=1}^p\kappa_j{\wh E}^M_{MC}(U(g_j))$. The solution is $\boldsymbol\kappa^* = \Sigma_{UU}^{-1}\Sigma_{Uf}$, where $\Sigma_{UU}= E({\wh E}^M_{MC}(U)\times{\wh E}^M_{MC}(U)')$ and $\Sigma_{Uf} = E({\wh E}^M_{MC}(U)\times {\wh E}_{MC}^M(f))$, where the expectations are with respect to all the random variables generated by a MIIS Markov Chain with $M$ iterations. In our applications we estimate the covariances by using the overlapping batch means method as in \citet{obm2011}.


We can also use control variates in a Gibbs sampler setting. Our estimator generalizes the control variates approach
used by \citet{dk2012}, which {\em only}  applies to exact Gibbs samplers. For a function $f$ and functions $g_{s,j}$ that are integrable with respect to
$\pi$,
\begin{equation}\label{eq:gibbscv}
{\wh E}_{s,CV}^{M,N}(f;\bs\kappa) := {\wh E}_{MC}^M(f)-\sum_{s=1}^d\sum_{j=1}^{p_s}\kappa_{s,j}\left[{\wh E}_{MC}^M(g_{s,j}) - {\wh E}_{s,RB}^{M,N}(g_{s,i})\right].
\end{equation}
We estimate the optimal parameter $\boldsymbol\kappa = \{\kappa_{1,1}, \dots, \kappa_{1,p_1}, \kappa_{2,1}, \dots, \kappa_{2,p_2}, \dots, \kappa_{d,1}, \dots , \kappa_{d,p_d} \}$ as above.

\section{Illustrations} \label{illustrations}

\subsection{Gibbs sampler with importance sampling}

\subsubsection{Sampling from a bivariate normal distribution}

In this example we sample from a simple bivariate normal distribution to compare the performance of the  MIIS sampler with control variates to the Metropolis-within-Gibbs (MwG) sampler in a setting in which the exact Gibbs sampler is available as a reference. \citet{dk2012} adopt this example to illustrate their use of control variates for the Gibbs sampler. The purpose of this example is to show, in a simple setting, that the MIIS sampler with control variates performs well relative to the MwG and Gibbs samplers. We also present results for the Gibbs sampler with control variates as in
\citep{dk2012}, which we regard as the \lq gold standard\rq for this problem.
Beyond this example, we make the important point that
the MIIS and MwG samplers {\em do not} require being able to sample from exact conditional distributions, whereas it is necessary to sample from the exact conditional distributions for the Gibbs sampler.
All the methods are very simple to implement for this example. The target distribution is
\[\pi(x)\propto \exp\left(-\frac{1}{2}x'\Sigma^{-1}x\right),\quad\Sigma = \left[
				\begin{array}{cc}
					1 & \rho\\
					\rho & 1
				\end{array}
				\right]\]
where $\rho \in \{0.25, 0.5, 0.99\}$ represent low, moderate and high correlation.

We are interested in MCMC estimators of the mean, variance, covariance, a tail probability of the marginal distribution of $x(1)$, i.e.,
$E_\pi(X(1))$, $E_\pi(X(1)^2)-E_\pi(X(1))^2$, $E_\pi(X(1)X(2))-E_\pi(X(1))E_\pi(X(2))$, $E_\pi(I[X(1)<-2.32]) = \Pr(X(1) < -2.32)$,

We implement the MIIS algorithm of Section \ref{s:gibbs} (Algorithm 6). We separately consider the standard case and the use of antithetic variables as in Section \ref{s:anti} (Algorithm 4). The importance distribution $q_{s,i}(x_{s,i})$ for the MIIS method is a Student $t$ with 5 degrees of freedom, shifted and rescaled to have the same mean and variance as the target conditional distribution  $\pi_s(x(s)|x(\sk[s]))$. We use the same proposal for the MwG sampler. The number of particles in the IS approximation is $N=50$. To make the Gibbs and MwG algorithms comparable to MIIS, in these methods we sample 50 iterates of $X(1)$ ($X(2)$) conditional on the current state of $X(2)$ ($X(1)$) in the chain.

We use control variates of MIIS as in Section \ref{s:CV}. The estimator is given by \eqref{e:cvestimator}, where we consider at least two control variates for each moment estimate
\begin{align*}
U_1 & =\pi^M_{MC}(f(x(1)))-\pi_{MIIS}^{M,N}(f(x(1))) = M^{-1} \sum_{t=1}^M f(x^{(t)}(1)) - M^{-1} \sum_{t=1}^M\sum_{i=1}^N  W_i(x^{(t)}_{1:N})f(x_i^{(t)}(1))
\end{align*}
with $U_2=\pi^M_{MC}(f(x(2)))-\pi_{MIIS}^{M,N}(f(x(2)))$ expressed similarly.
The control variates are the differences between the standard MCMC estimates and the corresponding Rao Blackwellized MIIS estimates.  We consider additional control variates for estimating the tail probability and $E_\pi(X(1)X(2))$.  For the tail probability, we include the same control variates used for mean estimation. For estimating $E_\pi(X(1)X(2))$, we incorporate the control variates used for estimating the mean and variance.   We apply the overlapping batch means method in \citet{obm2011} to estimate the covariance matrix of the standard estimator \eqref{eq:pimc} and the control variates based on the output of each chain. That allows us to estimate the optimal coefficients for the control variates as described in Section \ref{s:CV}.

Table \ref{tab:comparison1} summarizes the results. We report the estimated mean square error (MSE) relative to the MwG sampler based on 500 independent Markov Chains with 10,000 iterations (after a burn-in period of 1,000 iterations) . The results reveal that when the correlation in the target bivariate normal distribution is pronounced ($\rho=0.99$), the MIIS method with control variates improves the MSEs for estimating the mean, variance, and covariance by 98-99\% compared to the MwG sampler. The control variates efficiently explore the information in the chain and the high correlation between the two variables to reduce variance. The results for the covariance estimators show that the MIIS approach can work well when estimating expectations which involve variables in different blocks of the sampler.  Introducing antithetic variables in the conditional importance sampler leads to a 99.8\% reduction in MSE compared to MwG. Despite the high correlation in the target distribution, the MIIS estimator with antithetic variables takes advantage of the fact that the mean of the proposal is the exact conditional mean. As $\rho$ becomes lower, the MIIS-CV method displays a lower but still large reduction in MSE in comparison to MwG. The table also shows that as in \citet{dk2012}, the use of control variates in the Gibbs sampler is highly efficient. The disadvantage with the Gibbs-CV method is the requirement that the Gibbs sampler is feasible in the first place, whereas the MIIS-CV applies generally. This simulation exercise illustrates that in many situations, accurate estimation of conditional expectations using MIIS will translate into accurate estimation of expectations under the target distribution with the use of control variates.
\begin{table}[!htbp]
\begin{center}
\caption{Bivariate Gaussian simulation -- Monte Carlo MSE of target density expectation estimates relative to MwG.\label{tab:comparison1}}
\begin{threeparttable}
\begin{tabular}{lccccc}
\\
\hline\hline
&  \multicolumn{5}{c}{$\rho=0.99$} \\
& Gibbs & Gibbs-CV & MwG & MIIS-CV & MIIS/A-CV\\
\cline{2-6}
Mean & 1.087 & 0.002 & 1.000 & 0.011 & 0.002 \\
Variance & 0.805 & 0.001 & 1.000 & 0.011 & 0.001 \\
Covariance & 0.789 & 0.001 & 1.000 & 0.022 & 0.002 \\
$P(X(1)<-2.32)$ & 0.942 & 0.746 & 1.000 & 0.966 & 0.874 \\

 \\
 &  \multicolumn{5}{c}{$\rho=0.5$} \\
 & Gibbs & Gibbs-CV & MwG & MIIS-CV & MIIS/A-CV \\
 \cline{2-6}
Mean & 0.931 & 0.000 & 1.000 & 0.025 & 0.000 \\
Variance & 0.974 & 0.000 & 1.000 & 0.177 & 0.225 \\
Covariance & 0.988 & 0.000 & 1.000 & 0.066 & 0.022 \\
$P(X(1)<-2.32)$ & 0.906 & 0.148 & 1.000 & 0.270 & 0.240 \\

\\
 &  \multicolumn{5}{c}{$\rho=0.25$} \\
 & Gibbs & Gibbs-CV & MwG & MIIS-CV & MIIS/A-CV \\
\cline{2-6}
Mean & 0.944 & 0.000 & 1.000 & 0.073 & 0.000 \\
Variance & 0.830 & 0.000 & 1.000 & 0.493 & 0.850 \\
Covariance & 0.973 & 0.000 & 1.000 & 0.167 & 0.025 \\
$P(X(1)<-2.32)$ & 0.810 & 0.020 & 1.000 & 0.179 & 0.179 \\

\hline\hline
\end{tabular}
\end{threeparttable}
\end{center}
\end{table}

\subsubsection{Mixed Logit Model}

We consider posterior simulation for the Mixed Logit (MIXL) model as a substantive applied example where it is necessary to apply a method such as importance sampling within Gibbs or Metropolis-within-Gibbs. The binary Mixed Logit model specifies the probability that an individual chooses a certain alternative $j=1$ (over $j=0$) at occasion $t$ as
\begin{equation}\label{eq:choiceprob}
p(\textrm{$i$ chooses $j=1$ at $t$}|Z_{it},\beta_{i})=\frac{\exp(\beta_{0i}+\sum_{l=1}^{L}\beta_{li}z_{lit})}{1+\exp(\beta_{0i}+\sum_{l=1}^{L}\beta_{li}z_{lit})},
\end{equation}
where $\delta_{i}=(\beta_{0i},\beta_{1i},\ldots,\beta_{Li})'$ is the vector of utility weights for individual $i$ and $Z_{it}=(z_{1it},\ldots,z_{Lit})'$ is the corresponding vector of attributes for the choice.
The individual specific constants are
$\beta_{0i}=\beta_{0}+\eta_{0i}$ with $\eta_{0i}\sim \N(0,\sigma_{0}^2)$
and the attribute weights for each individual are latent variables with specification
\begin{equation}\label{eq:choiceatt}
\beta_{li}=\beta_l+\eta_{li}, \qquad l=1,\ldots,L,
\end{equation}
with $\eta_{li}\sim \N(0,\sigma_l^2)$.

The parameter vector is $\theta=(\beta_{0},\sigma_{0}^2,\beta_1,\ldots,\beta_L,\sigma_1^2,\ldots,\sigma_L^2)'$, while the vector of latent variables for each individual is $\zeta_{i}=(\beta_{0i},\ldots,\beta_{Li})$.  The Mixed Logit model captures heterogeneity in preferences by allowing individuals to weight the choice attributes differently. By introducing taste heterogeneity, the MIXL specification avoids the restrictive independence of irrelevant alternatives (IIA) property of the standard multinomial logit model \citep{fklw2010}.

We consider an empirical application to the Pap smear data set used for simulated maximum likelihood estimation in \cite{fklw2010}. In this data set, $I=79$ women choose whether or not to have a Pap smear test on $T=32$ choice scenarios. We let the observed choice for individual $i$ at occasion $t$ be $y_{it}=1$ if the woman chooses to take the test and $y_{it}=0$ otherwise. Table \ref{tab:choices} lists the choice attributes and the associated coefficients.  We impose the restriction that $\sigma_5^2=0$ in our illustrations since we have found no evidence of heterogeneity for this attribute. To simplify the computational algorithm for this example given this restriction, we fix $\beta_5$ at the maximum likelihood estimate.

\begin{table}[h]
\caption{Choice attributes for the pap smear data set}\label{tab:choices}
\begin{center}
\begin{tabular}{lll}
\hline\hline
Choice attributes & Values & Associated parameters \\
\hline
Alternative specific constant for test & 1 & $\beta_{0}, \sigma_{0}$ \\
Whether patient knows doctor &  0 (no), 1 (yes) & $\beta_1, \sigma_1$ \\
Whether doctor is male &  0 (no), 1 (yes) & $\beta_2, \sigma_2$ \\
Whether test is due &  0 (no), 1 (yes) & $\beta_3, \sigma_3$ \\
Whether doctor recommends test &  0 (no), 1 (yes) & $\beta_4, \sigma_4$ \\
Test cost & \{0, 10, 20, 30\}/10 & $\beta_5$ \\
\hline\hline
\end{tabular}
\end{center}
\end{table}

We specify the priors as
$\beta_{0}\sim \N(0,100)$, $\sigma_{0} \propto (1+\sigma_{0}^2)^{-1}$,
$\beta_l\sim \N(0,100)$, $\sigma_l \propto (1+\sigma_{l}^2)^{-1}$, for $l=1,\ldots,L$. We follow \cite{Gelman2006} and impose half-Cauchy priors on the standard deviation parameters.

In the general notation of the paper, we want to simulate the posterior distribution of $x = \{\boldsymbol \theta', \zeta_{1}', \ldots, \zeta_{I}'\}'$.
\subsubsection{Results}

We focus on the estimation of the posterior mean of the model parameters, that is
\[E_\pi(\beta_{0}),\,\,E_\pi(\sigma_{0}),\,\,E_\pi(\beta_1),\,\,\ldots,\,\,E_\pi(\beta_4),\,\,E_\pi(\sigma_1),\,\,\ldots,\,\,E_\pi(\sigma_4).\]

We implement MIIS and Metropolis-within-Gibbs algorithms that iteratively sample the parameters ($x(1)=\theta$) and the choice attributes for all individuals ($x(2)=\{\zeta_{1}', \ldots, \zeta_{I}'\}')$ conditional on each other. Equation \eqref{eq:choiceatt} implies that conditional on $\beta_{li}$ for all $i$ and $l=0,1,\ldots,4$, the posterior of $\theta$ factorises into five components with Gaussian conditional likelihoods from which we can independently sample the corresponding mean and standard deviation parameters. As before, the number of importance samples for the MIIS method is $N=50$. We generate 50 iterates of $x(s)$ conditional of the previous value of $x(\sk[s])$ in the MwG algorithm to make the two approaches comparable.  The proposal for the individual choice attributes combines the efficient importance sampling (EIS) method of \cite{ZR2007} with the defensive sampling approach of \cite{Hesterberg1995}. The importance density is the two component defensive mixture
\[q(\zeta_i|y_{i1},\ldots,y_{iT})=\omega q^\text{EIS}(\zeta_i|y_{i1},\ldots,y_{iT})+(1-\omega)p(\zeta_i),\]
where $q^\text{EIS}(x_i|y_{i1},\ldots,y_{iT})$ is a multivariate Gaussian importance density obtained using the EIS method.  Following \cite{Hesterberg1995}, the inclusion of the state prior $p(\zeta_i)$ in the mixture ensures that the importance weights are bounded. We set the mixture weight as $\omega=0.5$. We also use the EIS method to obtain the importance parameters for the five bivariate parameter proposals (the conditional maximum likelihood estimates are easy to implement alternatives which we use to initialise the EIS method) and incorporate antithetic variables throughout.

We consider the same set of twenty control variates for each MIIS estimate. The first set of control variates are based on the parameters $\boldsymbol \theta $,
\[{\wh E}^M_{MC}(\theta_j)-{\wh E}_{MIIS}^{M,N}(\theta_j),\qquad \text{for } j=1,\ldots 10,\]
These control variables are the differences between the standard MCMC posterior mean estimates and the MIIS Rao-Blackwellised estimates. We additionally use two types of control variates based on the individual choice attributes. The first group of control variates based on the attributes is
\[I^{-1}\sum_{i=1}^{I}{\wh E}^M_{MC}(\beta_{ki})-{\wh E}_{MIIS}^{M,N}(\beta_{ki}),\qquad k=0,\ldots,4\]
and the second is
\[I^{-1}\sum_{i=1}^{I}{\wh E}^M_{MC}(\beta_{ki}^2)-{\wh E}_{MIIS}^{M,N}(\beta_{ki}^2),\qquad k=0,\ldots,4.\]
The motivation for this second set of control variates is that the parameters of the model are the means and variances of the individual choice attributes, see equation \eqref{eq:choiceatt}. Since there are $I$ individuals, we construct the control variates by averaging the posterior moment estimates of $\beta_{ki}$.  Because of the correlation between the parameters ($x(1)$) and the choice attributes ($x(2)$) in the Markov chain, we expect these control variates to be highly correlated with the posterior mean estimates of the parameters. Moreover, the use of all twenty control variates simultaneously allows us to leverage the high posterior correlations for variance reduction. We estimate the optimal control variate coefficients as in the last section.

Table \ref{tab:MIXL} reports the estimated MSE for each method relative to MwG. The results are based on 500 independent Markov Chains with 20,000 iterations after 1,000 burn-in draws.  The MIIS column in the table corresponds to the Rao-Blackwellized estimate ${\wh E}_{MIIS}^{M,N}(\theta_j)$ given by \eqref{eq: RB gibbs estimator}. We initialize every chain at the maximum likelihood estimate and approximate the ``true'' posterior means by averaging all the 500 MwG and MIIS estimates (without control variables). The results show that the benefits of using the MIIS Rao-Blackwellized estimates by themselves may be small or negligible because the autocorrelation in the MIIS chain is the main determinant of the total variance of the estimates in this example. When we use the Rao-Blackwellized estimates to construct the control variates, we obtain 75-95\% reductions in MSE relative to the MwG algorithm. The two methods have similar computational cost and implementation effort.

\begin{table}[!htbp]
\begin{center}
\caption{Mixed Logit Application -- Monte Carlo MSE of posterior mean estimates relative to MwG.}\label{tab:MIXL}
\begin{threeparttable}
\begin{tabular}{lccc}
\hline\hline
Parameter & MwG & MIIS & MIIS-CV \\
\hline
$\beta_{0}$ & 1.00 & 0.91 & 0.07 \\
$\beta_1$ & 1.00 & 1.23 & 0.06 \\
$\beta_2$ & 1.00 & 0.92 & 0.05 \\
$\beta_3$ & 1.00 & 0.98 & 0.06 \\
$\beta_4$ & 1.00 & 0.66 & 0.08 \\
$\sigma_0$ & 1.00 & 0.95 & 0.07 \\
$\sigma_1$ & 1.00 & 1.02 & 0.16 \\
$\sigma_2$ & 1.00 & 0.94 & 0.08 \\
$\sigma_3$ & 1.00 & 1.17 & 0.08 \\
$\sigma_4$ & 1.00 & 0.54 & 0.25 \\
\hline\hline
\end{tabular}
\end{threeparttable}
\end{center}
\end{table}
\subsection{Random Walk Importance Sampler}

\subsubsection{Markov Modulated Poisson Process}

A Markov Modulated Poisson Process (MMPP) $Y_t$ is a Poisson process whose intensity $\lambda_t$ takes on a discrete number $d$
of values $\psi=(\psi_1,\ldots,\psi_d)'$, with the intensity at any time point determined by the state of an unobserved continuous-time Markov chain with generator $Q$.  We identify the model by imposing the parameter restriction $\psi_d>\ldots>\psi_1$.  \citet{sfr2010} recently considered the MMPP as a challenging case study for comparing a range of Random Walk Metropolis~(RWM) algorithms proposed in the literature. We replicate their setting to illustrate how the random walk importance sampler of Section \ref{s:randomwalk} can lead to more efficient and robust MCMC simulation compared to standard RW samplers.

Suppose that we observe a realisation of the process over a certain time window and record $n$ event times. \citet{fs2006} derived the likelihood for the model as
\begin{equation}
L(Q,\psi,t)=\nu'\exp^{(Q-\Psi)t_1}\Psi\ldots\exp^{(Q-\Psi)t_n}\Psi\exp^{(Q-\Psi)t_{n+1}}\iota,
\end{equation}
where
\[Q=\left(\begin{matrix} -q_{12} & q_{12} \\ q_{21} & -q_{21} \end{matrix}\right),\]
$\nu$ is the initial distribution of the latent state $Z_t$ (which we take to be the stationary distribution of the chain implied by $Q$), i.e., $\Pr(Z_t = j) = \nu(j)$, $\Psi=\text{diag}(\psi)$, $\iota$ is a vector of ones,  $t_1$ is the time from the start of the observation window until the first event, $t_i$ is the time between events $i-1$ and $i$, and $t_{n+1}$ is time between event $n$ and the end of the observation window.

\subsubsection{Simulation Study}

We replicate the simulation study in \citet{sfr2010}. We simulate the MMPP model with $d=2$ over an observation window of 100 seconds. The generator matrix $Q$ has parameters $q_{12}=q_{21}=1$. The intensity vector is $\psi=(10,17)'$. As in the \citet{sfr2010} application, we complete the model by specifying exponential priors for all the parameters. The means of the priors are the true parameters.

We consider three different methods: the standard RWM algorithm, the multiple-try RWM (MTM) of \citet{llw2000}, and the MIIS random walk method (Algorithm 7).  We consider a random walk on the transformed parameter vector $\widetilde{\theta}=(\log(\psi_1),\log(\psi_2-\psi_1),\log(q_{12}),\log(q_{21})$, which is more efficient than working on the original scale. Let $i$ index the current iteration of the Markov Chain. The proposal for all methods is
\[q_{i+1}(\theta^*_{i+1}|\theta^*_{i})= N\left(\widetilde{\theta}_i,\frac{2.38^2}{4}\widetilde{\Sigma}\right),\]
where $\widetilde{\Sigma}$ is an estimate of the posterior covariance matrix of $\widetilde{\theta}$ based on a trial run of the RWM algorithm. The scale of the proposal aims to achieve an acceptance rate of 0.234 in the standard RWM algorithm, which is optimal rate under certain assumptions; see the discussion in \citet{sfr2010}. We consider four control variates associated with each parameter for estimating each posterior mean
\[{\wh E}^M_{MC}(\theta_j)-{\wh E}_{MIIS}^{M,N}(\theta_j),\qquad \text{for } j=1,\ldots 4.\]
As before, the control variates are the differences between the standard MCMC estimates and the Rao-Blackwellized estimates that reuse all the particles.

We parallelize the likelihood evaluations over eight cores at every iteration of the MIIS Markov chain and set the number of particles to $N=8$ and $N=16$. Our discussion treats the MIIS method with $N=8$ draws as being comparable to the standard RWM method, which is difficult to parallelize. This implies that the MIIS algorithm performs eight times as many likelihood evaluations as the standard RWM algorithm in total, but in the same amount of time under perfect parallelization. We report the actual computing times in the tables. We configure the MTM method such that it performs the same number of likelihood evaluations per iteration as the MIIS algorithm with $N=8$ particles. However, the parallelization for the MTM method is less efficient as every iteration of the method requires two separate stages.

The simulation study averages results over ten independent realisations of the DGP. For every realisation, we simulated 500 independent Markov chains for each method and ran each Markov Chain for 10,000 iterations after discarding a burn-in of 1,000 iterations. We consider two cases for initialisation.  We initialize the algorithm at the maximum likelihood estimate for half the chains.  For the other half, we initialize the chain by drawing from the prior. We use the same draw from the prior to initialize all the methods at each replication. Initializing from the prior allows us to compare the convergence performance of each method. We then compute the posterior mean and variance estimates based on each chain. We combine all chains initialized at the true parameters to obtain precise approximations to the true posterior means and variances.

Table \ref{tab:comparison2} reports the MSE efficiency of the posterior estimates relative to the performance of the RWM method. We average the results over the four parameters for conciseness. We also present the actual computing times, and average acceptance rate and average integrated autocorrelation time (IACT). We base the last two results on longer independent chains with 100,000 iterations (one per simulated dataset). In the case of MIIS, the ``acceptance'' rate is the proportion of iterations in which the sampled particle is not the previous iterate.  We also report the relative time adjusted MSE for each method, which we define as $(\text{time}_{alg} \times \text{mse}_{alg})/(\text{time}_{MH} \times \text{mse}_{MH})$. This estimate approximates the MSE relative to the MH algorithm for the same amount of computing time. We average all these results over realisations.

The results show that the MIIS method reduces the time adjusted MSEs by 70\% compared to MH when we initialize the chain at the true parameters. This gain in performance comes both from reductions in IACT and the use of Rao-Blackwellization to estimate the posterior moments. The MIIS method also outperforms the MTM method, at a lower computational cost. Using control variates further reduces the MSE, leading to a 78-83\ time adjusted improvement over MH. To put these gains in the context of the random walk literature, the best performing algorithm in the simulation of \citet{sfr2010} for the same DGP, a MH random walk in the log scale with with an adaptively tuned mixture proposal, leads to a 84\% reduction in variance (measured by IACT) compared the least efficient algorithm in their analysis, a MH random walk with tuned proposal $N(0, \lambda^2 I)$.  The table also shows that when we initialize all algorithms from the prior, the MIIS algorithm with $N = 8$ particles and control variates generates 80-99\% reductions in time adjusted MSE compared to the standard RWM algorithm. This result suggests that the MIIS algorithm is more robust to the initial conditions than the standard RWM and MTM algorithms.

\begin{table}[!htbp]
\begin{center}
\caption{MMPP DGP with $\psi_1=10$ and $\psi_2=17$. Monte Carlo MSE of posterior estimates relative to MH (average across parameters).}\label{tab:comparison2}
\vspace{3pt}
\footnotesize{We define the time adjusted MSE as $(\text{time}_{alg} \times \text{mse}_{alg})/(\text{time}_{MH} \times \text{mse}_{MH})$, which approximates the MSE relative to the MH algorithm for the same amount of computing time.}
\begin{threeparttable}
\begin{tabular}{lcccccc}
\\
\hline\hline
 &  &  & \multicolumn{2}{c}{MIIS}  &  \multicolumn{2}{c}{MIIS-CV}   \\
 & MH & MTM & N=8 & N=16 & N=8 & N=16  \\
  \cline{2-7}
Acceptance & 0.30 & 0.46 & 0.32 & 0.47 & 0.32 & 0.47 \\
IACT & 15.0 & 9.3 & 7.5 & 5.6 & 7.5 & 5.6 \\
Time & 8 & 15 & 9 & 14 & 9 & 14 \\
 &  &  &  &  &  &  \\
 & \multicolumn{6}{c}{\textbf{Initializing at the true parameters}} \\
 \cline{2-7}
Mean & 1.00 & 0.47 & 0.24 & 0.16 & 0.15 & 0.12 \\
Variance & 1.00 & 0.67 & 0.25 & 0.14 & 0.20 & 0.15 \\
\\
 & \multicolumn{6}{c}{Time adjusted MSE} \\
Mean & 1.00 & 0.88 & 0.27 & 0.29 & 0.17 & 0.21 \\
Variance & 1.00 & 1.26 & 0.29 & 0.25 & 0.22 & 0.27 \\
 &  &  &  &  &  &  \\
 & \multicolumn{6}{c}{\textbf{Initializing from the prior}} \\
 \cline{2-7}
Mean & 1.00 & 0.07 & 0.06 & 0.02 & 0.01 & 0.01 \\
Variance & 1.00 & 0.24 & 0.19 & 0.06 & 0.05 & 0.04 \\
\\
 & \multicolumn{6}{c}{Time adjusted MSE} \\
Mean & 1.00 & 0.13 & 0.07 & 0.03 & 0.01 & 0.01 \\
Variance & 1.00 & 0.46 & 0.21 & 0.10 & 0.06 & 0.06 \\
\hline\hline
\end{tabular}
\end{threeparttable}
\end{center}
\end{table}

\subsubsection{Empirical Example}

We now apply the RWM, MTM and MIIS methods using data from the empirical example in \citet{fs2006}.  The data consists of positions (in bases) of Chi sites (a DNA motif) in the genome of Escherichia coli bacteria. The specification of the MMPP model is the same as in the simulation study above. We follow the procedure described in \citet{fs2006} to obtain data-based parameters for the exponential priors. We estimate the MMPP for the lagging part of the outer ring of the E. coli genome strand, which has 117 observations in total. We ran each Markov Chain for 50,000 iterations after discarding a burn-in of 1,000 iterations and initialized all chains at the maximum likelihood estimate. We also use the Hessian of the likelihood at the maximum likelihood estimate to obtain the shape of the random walk proposal.

Table \ref{tab:comparison5} displays the Monte Carlo MSEs over 500 replications of each algorithm. The results show that the MIIS-CV algorithm with $N=16$ has $83-90\%$ lower MSEs than the standard RWM algorithm. Adjusting for the actual computational times, the improvements are between $44-66\%$. The MIIS algorithm also outperforms the MTM method. We note from the table that the practical computational cost of adding particles tends to be low, so that we can consider a higher $N$ to increase robustness.

\begin{table}[!htbp]
\begin{center}
\caption{Empirical example for the MMPP model -- Monte Carlo MSE of posterior estimates relative to MH.}\label{tab:comparison5}
\vspace{3pt}
\footnotesize{We define the time adjusted MSE as $(\text{time}_{alg} \times \text{mse}_{alg})/(\text{time}_{MH} \times \text{mse}_{MH})$, which approximates the MSE relative to the MH algorithm for the same amount of computing time.}
\begin{threeparttable}
\begin{tabular}{lcccccc}
\\
\hline\hline
 &  &  & \multicolumn{2}{c}{MIIS}  &  \multicolumn{2}{c}{MIIS-CV}    \\
 & MH & MTM & N=8 & N=16 & N=8 & N=16 \\
   \cline{2-7}
Acceptance & 0.24 & 0.54 & 0.42 & 0.57 & 0.42 & 0.57 \\
IACT & 55.2 & 20.2 & 13.3 & 9.4 & 13.3 & 9.4 \\
Time & 20 & 62 & 55 & 59 & 58 & 63 \\
\\
&  \multicolumn{6}{c}{\textbf{Posterior mean}} \\
   \cline{2-7}
$\psi_1$ & 1.00 & 0.45 & 0.25 & 0.17 & 0.18 & 0.13 \\
$\psi_2$ & 1.00 & 0.50 & 0.21 & 0.15 & 0.15 & 0.10 \\
$q_{12}$ & 1.00 & 0.39 & 0.20 & 0.14 & 0.16 & 0.10 \\
$q_{21}$ & 1.00 & 0.52 & 0.27 & 0.15 & 0.15 & 0.12 \\
\\
&  \multicolumn{6}{c}{Time adjusted MSE}\\
$\psi_1$ & 1.00 & 1.39 & 0.68 & 0.50 & 0.51 & 0.43 \\
$\psi_2$ & 1.00 & 1.55 & 0.59 & 0.45 & 0.45 & 0.33 \\
$q_{12}$ & 1.00 & 1.21 & 0.54 & 0.42 & 0.46 & 0.33 \\
$q_{21}$ & 1.00 & 1.60 & 0.73 & 0.45 & 0.44 & 0.37 \\
\\
&  \multicolumn{6}{c}{\textbf{Posterior variance}} \\
   \cline{2-7}
$\psi_1$ & 1.00 & 0.53 & 0.34 & 0.19 & 0.22 & 0.17 \\
$\psi_2$ & 1.00 & 0.65 & 0.29 & 0.21 & 0.18 & 0.11 \\
$q_{12}$ & 1.00 & 0.45 & 0.20 & 0.12 & 0.17 & 0.12 \\
$q_{21}$ & 1.00 & 0.50 & 0.23 & 0.14 & 0.15 & 0.10 \\
\\
&  \multicolumn{6}{c}{Time adjusted MSE}\\
$\psi_1$ & 1.00 & 1.66 & 0.93 & 0.56 & 0.64 & 0.56 \\
$\psi_2$ & 1.00 & 2.02 & 0.81 & 0.63 & 0.51 & 0.36 \\
$q_{12}$ & 1.00 & 1.39 & 0.55 & 0.35 & 0.50 & 0.38 \\
$q_{21}$ & 1.00 & 1.55 & 0.63 & 0.41 & 0.45 & 0.33 \\
\hline\hline
\end{tabular}
\end{threeparttable}
\end{center}
\end{table}

\section{Theory} \label{theory}
This section presents our theoretical results for the MIIS estimators and restates some of the definitions in previous sections in a more general setting.

Let $(A,\Omega)$ denote a measurable space and $\pi$ some given \textit{target} probability distribution on $(A,\Omega)$. Assume that a reference measure $\mu$ dominates $\pi$ ($\pi\ll\mu$) and that $\pi(\dx) = \pi(x)\mu(\dx)$. With a small abuse of notation, we write $\pi(x)$ for the density of the probability measure $\pi$ with respect to $\mu$.  In most situations $A\subseteq\Re^d$ ($d\in\mathbb{N}$, the set of positive integers), $\Omega = \mathcal{B}(A)$ is the Borel $\sigma$-algebra of the set $A$, and the majorizing measure $\mu$ is either the counting measure, the Lebesgue measure, or a combination of both. We assume that the distributions $\eta(\dxi|y)$ and $q_k(\dx|\xi)$ ($k=1,\dots,N$) admit densities $\eta(\xi|y)$ and $q_k(x|\xi)$ with respect to the same measure $\mu$. We work interchangeably with other distributions and their corresponding densities. Let $A = \times_{i=1}^dA_i$ and $\Omega = \Omega_1\otimes\dots\otimes\Omega_d$. In the conditional case, for all $s=1,\dots,d$, let $(A_s,\Omega_s)$ denote a measurable spaces. We assume that $\pi_s(\dd{y}(s)|y(\sk[s]))$, $q_{s,k}(\dd{y}(s)|\xi_s,y(\sk[s])$, and $\eta_s(\dxi_s|y(s),y(\sk[s]))$ are defined on $(A_s,\Omega_s)$ and have densities with respect to some majorizing measure $\mu_s$, that may depend on $y(\sk[s])\in A_{\sk[s]}$ for $A_{\sk[s]}= \times_{i\ne s}^dA_i$.


\subsection{Convergence of the marginal MIIS chain}\label{SS: convergence of marginal}
If $(y,k)$ is marginally distributed as $N^{-1}\pi$, then the CIS estimator is unbiased by Theorem,~\ref{thm:CISunbiased}(iii). Theorem~\ref{thm:convergence} below shows that the MIIS Algorithm (Algorithm \ref{alg: MIIS algorithm} in Section~\ref{s:miisfull}) samples from the target density $N^{-1}\pi$ asymptotically, i.e., as the number of iterations $t\rightarrow \infty$. In other words, the marginal distribution of $(y^{(t)},k^{(t)})$ is $N^{-1}\pi$, asymptotically.

For all $l,k\in\{1{:}N\}$ and $y,z\in A$, define
\begin{align} \label{eq: def S_kl}
\begin{split}
S_{l,k}(\xi,y,z):= & \int_{A^2}T(y,\dx_l;\xi)T(z,\dx_k;\xi)\frac{\bs{q}_{l,k}(x_l,x_k|\xi)}{q_l(x_l|\xi)q_k(x_k|\xi)}, \quad   l \neq k \\
S_{l,l}(\xi,y,z):= & \int_{A}\frac{T(y,\dx_l;\xi)T(z,x_l;\xi)}{q_l(x_l|\xi)}, \quad l = k
\end{split}
\end{align}
where $\bs{q}_{l,k}(x_l,x_k|\xi)$ is the joint marginal of $(x_l,x_k)$ for $l \neq k $.

The proof of Theorem~\ref{thm:convergence} is based on the following assumption discussed in Section~\ref{SS: Examples theory}.

\begin{assumption} \label{a:regcis}
\begin{enumerate}
    \item [(i)]  There exists a constant $C$, $0<C<\infty$,  such that the marginal densities $q_{k}(x|\xi)$ satisfy $\pi(x)\eta(\xi|x)\le C\,q_{k}(x|\xi)$, for each $k$ and all $x,\xi\in A$.
    \item [(ii)]
    \begin{itemize}
       \item [(a)] For each $k,l \in  \{1{:}N\}$ and $y,z \in A$,  there exist functions $h_{k,l}(y,z) $ such that
       \[
         \int_A S_{l,k}(\xi,y,z)\,\eta(\xi|y)\eta(\xi|z)\mu(\dxi)  \geq  h_{l,k}(y,z).
       \]
    \item [(b)] For each $l \in  \{1{:}N\}$ there exists a set $\mathcal{J}_l\subseteq\{1{:}N\}\setminus\{l\}$ such that: $\mathcal{J}_l\cap\mathcal{J}_k\ne \emptyset$ for $l\ne k$; and
    \item [(c)] for all $j\in\mathcal{J}_l$ and $y,z \in A$  $h_{l,j}(y,z)>0$ and $h_{j,l}(y,z)>0$.
    \end{itemize}
\end{enumerate}
\end{assumption}
Assumption \ref{a:regcis}~(i) requires the weights to be uniformly bounded and it is often used in the particle literature. This condition is not restrictive and can be enforced by choosing suitable $q_k$ and $\eta$. Part (ii) is a technical condition that imposes regularity conditions on the pairwise dependence of the particles, on the kernel $T$, and the auxiliary distribution $\eta$.

\begin{theorem} \label{thm:convergence}

\begin{itemize}
\item [(i)] If  Assumption~\ref{a:regcis} holds then the marginal chain $\{(y^{(t)},k^{(t)})\}$, sampled using MIIS,  is Markov and ergodic, i.e.,
\begin{align*}
\lim_{t \rightarrow \infty} \mid P^t(l,y;\cdot) - N^{-1}\pi(\cdot) |_{TV} & = 0,
\end{align*}
where $P(y,l;B\times\{k\})$ is the Markov transition kernel from $(y,l)$ to $B\times\{k\}$, $B\in\Omega$, $k\in\{1{:}N\}$.
 \item [(ii)]
If, in addition, for $ k \in \mathcal{J}_l$, $h_{l,k}(y,z) \geq  \underbar{h}_{l,k}(z)> 0$, i.e., $\underbar{h}_{l,} $  does not depend on the initial value $y\in A$, then the marginal chain is uniformly ergodic.
\end{itemize}
\end{theorem}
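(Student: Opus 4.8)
The plan is to work directly with the one-step transition kernel $P((y,l),\cdot)$ of the marginal chain and to reduce both assertions to a single minorization-type inequality, after which standard Markov chain theory does the rest. I would first record two structural facts. Since the auxiliary variable $\xi^{(t)}$ and the particles $x_{1:N}^{(t)}$ are regenerated at every iteration, and the pair $(y^{(t)},k^{(t)})$ is produced from $(y^{(t-1)},k^{(t-1)})$ together only with this fresh randomness, the sequence $\{(y^{(t)},k^{(t)})\}$ is Markov; its kernel is obtained by integrating $(\xi,x_{1:N})$ out of one sweep of Algorithm~\ref{alg: MIIS algorithm}. By Theorem~\ref{thm:fullmiis} the augmented chain is a Gibbs sampler with invariant law $\wt\pi^N$, whose $(y,k)$-marginal is $N^{-1}\pi$; because the marginal dynamics are autonomous, $N^{-1}\pi$ is invariant for $\{(y^{(t)},k^{(t)})\}$ as well.

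The crux is a lower bound on $P((y,l),dz\times\{k\})$ for $k\ne l$. Writing this kernel as $\int\eta(\xi|y)\,T(y,dx_l;\xi)\,\bs q_{\sk[l]}(x_{\sk[l]}|x_l,\xi)\,W_k(x_{1:N},\xi)\,T(x_k,dz;\xi)$, I would first invoke Assumption~\ref{a:regcis}(i) to bound $w_j\le Z_m C$, hence $W_k\ge w_k/(NZ_mC)$, and integrate out the particles with index outside $\{l,k\}$, which replaces $\bs q$ by the pairwise marginal $\bs q_{l,k}$. Substituting $w_k(x_k;\xi)=Z_m\pi(x_k)\eta(\xi|x_k)/q_k(x_k|\xi)$ from \eqref{eq: wts CIS} and then applying the reversibility identity \eqref{eq: T reversibility} to rewrite $\pi(x_k)\eta(\xi|x_k)T(x_k,dz;\xi)$ as $\pi(z)\eta(\xi|z)T(z,dx_k;\xi)$ turns the inner double integral over $(x_l,x_k)$ into exactly $S_{l,k}(\xi,y,z)$ of \eqref{eq: def S_kl}. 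Assumption~\ref{a:regcis}(ii)(a) then delivers the key estimate $P((y,l),dz\times\{k\})\ge \tfrac{\pi(z)}{NC}\,h_{l,k}(y,z)\,\mu(dz)$. This reversibility manipulation, which converts the forward move $T(x_k,\cdot;\xi)$ into the symmetric object $S_{l,k}$, is where I expect the genuine work to lie; everything afterwards is bookkeeping over the finite index set.

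For part (i), the estimate yields $\phi$-irreducibility with respect to $N^{-1}\pi$: from $(y,l)$ the chain reaches any $\pi$-positive set carrying an index $j\in\mathcal J_l$ in one step, and for an arbitrary target index $k$ the set $\mathcal J_l\cap\mathcal J_k$ is nonempty by Assumption~\ref{a:regcis}(ii)(b), so a common $j$ supplies a two-step route $l\to j\to k$ with $h_{l,j},h_{j,k}>0$ by (ii)(c). Aperiodicity follows from the positive holding probability of the selection step, most transparently in the canonical case $T=\delta$, where re-selecting $K=l$ returns the chain to $(y,l)$; in general the diagonal ($k=l$) contribution keeps the chain in a $\pi$-positive region overlapping its start. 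With $N^{-1}\pi$ invariant and the chain $\phi$-irreducible and aperiodic, the convergence $\lim_t|P^t(l,y;\cdot)-N^{-1}\pi(\cdot)|_{TV}=0$ follows from the standard ergodic theorem for Harris chains, the minorization furnishing Harris recurrence so that convergence holds from every starting state.

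For part (ii), if $h_{l,k}(y,z)\ge\underline h_{l,k}(z)>0$ independently of $y$, the key estimate becomes a minorization whose right-hand side no longer depends on the initial point. Composing two steps through a common index $j^\ast=j^\ast(l,k)\in\mathcal J_l\cap\mathcal J_k$ gives $P^2((y,l),dz\times\{k\})\ge c\,\underline h_{j^\ast,k}(z)\,\pi(z)\,\mu(dz)$ with a constant $c>0$ free of $y$; taking the minimum of these finitely many $l$-indexed bounds over $\{1\dcolon N\}$ produces a single probability measure $\nu$ and an $\varepsilon>0$ with $P^2((y,l),\cdot)\ge\varepsilon\,\nu(\cdot)$ for all $(y,l)$. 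This Doeblin condition is equivalent to uniform ergodicity, which completes the proof. Assembling the single minorizing measure $\nu$ from the $l$-dependent bounds, using only the finiteness of the index set together with the pairwise intersection property, is the one point that requires care.
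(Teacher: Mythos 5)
Your proposal is correct and takes essentially the same route as the paper: the identical one-step minorization $P\bigl((y,l),\mathrm{d}z\times\{k\}\bigr)\ge (NC)^{-1}\,h_{l,k}(y,z)\,\pi(z)\,\mu(\mathrm{d}z)$, obtained by bounding the self-normalized weight via Assumption~\ref{a:regcis}(i), applying the reversibility identity \eqref{eq: T reversibility} to produce $S_{l,k}$, and invoking Assumption~\ref{a:regcis}(ii)(a), followed by two-step positivity through $\mathcal{J}_l\cap\mathcal{J}_k$ for irreducibility and a two-step, $(y,l)$-uniform Doeblin minorization for uniform ergodicity. The only cosmetic differences are in the bookkeeping: the paper deduces aperiodicity and Harris-type convergence from positivity of $P^t$ for every $t\ge 2$ together with Tierney's Theorem~1, Corollary~1 and Proposition~2, whereas you argue aperiodicity via the holding probability of the selection step and uniform ergodicity directly from the Doeblin condition.
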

The distribution of the marginal chain $\{(y^{(t)},k^{(t)})\}$ converges to the target distribution $N^{-1}\pi$ as the number of iterations increases. It means that, after a warm up period, the marginal distribution of samples from the chain is $N^{-1}\pi$ and, hence, ${\wh E}^M_{MC}(f)$ is an unbiased estimator of $E_\pi(f)$ for any integrable $f$.  If $E_\pi(|f|)< \infty$, then by  Theorem 3 of \cite{tierney1994}, ergodicity implies that ${\wh E}^M_{MC}(f)$ is also a consistent estimator of $E_\pi(f)$.  If $E_\pi(f^2)< \infty$ and uniform ergodicity holds then by Theorem ~5 of \cite{tierney1994} we also obtain a central limit theorem for ${\wh E}^M_{MC}(f)$.

\subsection{Convergence results for the examples in Section~\ref{examples}}\label{SS: Examples theory}
This section discusses the application of Theorem~\ref{thm:convergence} to the examples in Section~\ref{examples}. In all three examples $T$ is the identity kernel, i.e., $T(y,dz; \xi) = \delta_y(dz)$. This gives $S_{l,k}(\xi,y,z) = \bs{q}_{l,k}(y,z|\xi)/q_l(y|\xi)q_k(z|\xi)$ for $ k \neq l $ and $S_{l,l}(\xi,y,z)  = I(z=y)/q_l(y|\xi)$. Hence, we require $h_{l,k}(y,z)\ge 0$ functions such that
\begin{align*}
\int_A\eta(\xi|y)\eta(\xi|z)\frac{\bs{q}_{l,k}(y,z|\xi)}{q_l(y|\xi)q_k(z|\xi)}\mu(d\xi) & \geq h_{l,k}(y,z) , \quad \text{for} \quad l \neq k\\
I(z=y) \int_A \frac{\eta(\xi|y)\eta(\xi|z)}{q_l(y|\xi) } \mu(\dxi) \geq h_{l,l}(y,z) & .
\end{align*}
Part (i) is assumed explicitly and we choose  $\mathcal{J}_l$ to satisfy Assumption \ref{a:regcis}(ii).
\noindent
\subsubsection*{Simple importance sampling example} This example is discussed in Section~\ref{SS: simple IS}.

\begin{corollary}\label{C: miis sis}
Suppose that there is no dependence of $T$ and the $q_i$ on $\xi$ and (i) $T(x,\dd{y}) = \delta_x(\dd{y})$, (ii) $q(\dx_{1:N}) = \prod_{i=1}^N q_i(\dx_i)$ and (iii) $\pi(\dx_i)\le C\,q_i(\dx_i)$, where $C > 0 $  is a positive constant.  Then, the marginal chain $\{(y^{(t)},k^{(t)})\}$ is uniformly ergodic for $N \geq 3$.
\end{corollary}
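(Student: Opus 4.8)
The plan is to specialize Theorem~\ref{thm:convergence} to this setting and to track where the requirement $N\ge 3$ enters; the only real content is the combinatorial intersection condition in Assumption~\ref{a:regcis}(ii)(b), since everything else is a direct simplification. First I would dispose of the weight bound. Because neither $T$ nor the $q_i$ depend on $\xi$ and the auxiliary variable is not used in this example (see Section~\ref{SS: simple IS}), the weight reduces to $w_i(x)=m(x)/q_i(x)$, so Assumption~\ref{a:regcis}(i) is precisely hypothesis~(iii): the domination $\pi(\dx_i)\le C\,q_i(\dx_i)$ is exactly the uniform bound on the weights.

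Next I would evaluate $S_{l,k}$. As recorded at the start of Section~\ref{SS: Examples theory}, the identity kernel gives $S_{l,k}(y,z)=\bs{q}_{l,k}(y,z)/\bigl(q_l(y)q_k(z)\bigr)$ for $l\ne k$. Independence of the proposals, $\bs{q}(\dx_{1:N})=\prod_{i=1}^N q_i(\dx_i)$, forces the pairwise marginal to factor as $\bs{q}_{l,k}(y,z)=q_l(y)q_k(z)$, so that
\begin{equation*}
S_{l,k}(y,z)=\frac{q_l(y)q_k(z)}{q_l(y)q_k(z)}=1,\qquad l\ne k.
\end{equation*}
Because nothing depends on $\xi$, the $\xi$-integration in Assumption~\ref{a:regcis}(ii)(a) is either vacuous or contributes only a fixed positive constant; in either case I may take $h_{l,k}(y,z)$ to be a strictly positive constant for every pair $l\ne k$. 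This settles the positivity required in (ii)(c) for any admissible family $\{\mathcal{J}_l\}$, and since the bound is constant in the initial point $y$ it also verifies the strengthened hypothesis $h_{l,k}(y,z)\ge\underbar{h}_{l,k}(z)>0$ needed for Theorem~\ref{thm:convergence}(ii).

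It remains to satisfy the intersection condition (ii)(b), and this is the one place where $N$ matters. I would take the maximal choice $\mathcal{J}_l=\{1{:}N\}\setminus\{l\}$, which is legitimate because $h_{l,j}>0$ for all $j\ne l$. For $l\ne k$ this gives $\mathcal{J}_l\cap\mathcal{J}_k=\{1{:}N\}\setminus\{l,k\}$, a set obtained by deleting two distinct indices, which is nonempty exactly when $N\ge 3$. With Assumption~\ref{a:regcis} verified in full, Theorem~\ref{thm:convergence}(i) yields ergodicity, and because the lower bounds $h_{l,k}$ are positive constants independent of $y$, Theorem~\ref{thm:convergence}(ii) upgrades this to uniform ergodicity. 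The main (and essentially the only) obstacle is recognizing that the pairwise-intersection requirement on the $\mathcal{J}_l$ is precisely what rules out $N=2$.
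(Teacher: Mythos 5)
Your proposal is correct and follows essentially the same route as the paper: it verifies Assumption~\ref{a:regcis} by identifying part~(i) with hypothesis~(iii), uses independence of the proposals and the identity kernel to get $S_{l,k}=1$ and hence constant lower bounds $h_{l,k}$ for $l\ne k$, chooses $\mathcal{J}_l=\{1{:}N\}\setminus\{l\}$ so that the pairwise-intersection condition forces $N\ge 3$, and invokes Theorem~\ref{thm:convergence}(ii) since the bounds do not depend on the initial point. The only cosmetic difference is that the paper dispenses with the auxiliary variable by taking $\eta(\dxi|y)=\delta_0(\dxi)$ and writes $h_{l,k}(y,z)=I(l\ne k)$, whereas you argue the $\xi$-integral contributes a fixed positive constant; these are the same observation.
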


Let $\eta(\dxi|y) = \delta_0(\dxi)$, without loss of generality. It is easy to see that $h_{l,k}(y,z) = I(l\ne k)$ is a valid choice. Assumption \ref{a:regcis} is satisfied by taking $\mathcal{J}_l = \{1{:}N\}\setminus \{l\}$, and $N\ge 3$. Uniform ergodicity follows from Theorem \ref{thm:convergence} part~(ii), because $h_{k,l}(y,\cdot)$ does not depend on $y$ for $l\in\mathcal{J}_k$.

\noindent
\subsubsection*{Importance sampling with antithetic variables} This example is discussed in Section~\ref{s:anti}.
\begin{corollary}  \label{l:ISA-CIS}
Suppose there is no dependence on $\xi$ and (i) $T(x,\dd{y}) = \delta_x(\dd{y}) = \delta_x(y)\dd{y} $, (ii) $q(\dx_{1:N}) = \prod_{i=1}^{N/2} q_{i,i+N/2}(\dx_i,\dx_{i+N/2})$ such that $q_{i+N/2}(\dx_{i+N/2}|x_i) = \delta_{Q_i^{-1}(1 - Q_i(x_i) )} (\dx_{i+N/2}) $, where   $Q_i$ is the cdf of $q_i(x_i)$. (iii) $\pi(\dx_i) \le C\,q_i(\dx_i)$, where $C > 0 $ is a positive constant  Then, the marginal chain $\{(y^{(t)},k^{(t)})\}$ is uniformly ergodic for $N/2 \geq 3$.
\end{corollary}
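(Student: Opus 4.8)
The plan is to mirror the argument for the simple importance sampling case (Corollary~\ref{C: miis sis}) and verify the hypotheses of Theorem~\ref{thm:convergence}(ii), with the combinatorics of the antithetic pairing producing the sharper threshold $N/2 \ge 3$. Since there is no dependence on $\xi$ and $T$ is the identity kernel, I would take $\eta(\dxi|y) = \delta_0(\dxi)$ as in Corollary~\ref{C: miis sis}, so that Assumption~\ref{a:regcis}(i) reduces immediately to the uniform weight bound $\pi(\dx_i) \le C\,q_i(\dx_i)$ assumed in (iii), and (following Section~\ref{SS: Examples theory}) $S_{l,k}$ collapses to $S_{l,k}(y,z) = \bs{q}_{l,k}(y,z)/\big(q_l(y)q_k(z)\big)$ for $l \ne k$.

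The heart of the argument is computing the pairwise marginals $\bs{q}_{l,k}$ under the antithetic proposal $\bs{q}(\dx_{1:N}) = \prod_{i=1}^{N/2} q_{i,i+N/2}(\dx_i,\dx_{i+N/2})$. Writing $P_m = \{m, m+N/2\}$ for the $m$-th antithetic pair, I would distinguish two cases. If $l$ and $k$ lie in distinct pairs, then because the pairs are drawn independently and each coordinate has marginal $q_l$ (respectively $q_k$), the joint marginal factorises, $\bs{q}_{l,k}(y,z) = q_l(y)q_k(z)$, and hence $S_{l,k}(y,z) = 1$; I can therefore take $h_{l,k}(y,z) \equiv 1 > 0$, which is bounded below uniformly and in particular does not depend on the initial state $y$, as required by Theorem~\ref{thm:convergence}(ii). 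If instead $l$ and $k$ are antithetic partners, the joint marginal is concentrated on the curve $z = Q_l^{-1}(1 - Q_l(y))$ and $S_{l,k}$ is degenerate, so no useful positive lower bound is available; these pairs must be excluded from the construction.

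Accordingly I would set $\mathcal{J}_l := \{1{:}N\} \setminus P_{m(l)}$, where $P_{m(l)}$ is the pair containing $l$, and put $h_{l,k}(y,z) = I\big(k \in \mathcal{J}_l\big)$. The only remaining condition to check is Assumption~\ref{a:regcis}(ii)(b), namely $\mathcal{J}_l \cap \mathcal{J}_k \ne \emptyset$ for $l \ne k$. When $l,k$ belong to the same pair this intersection is $\{1{:}N\}\setminus P_{m(l)}$, nonempty whenever $N/2 \ge 2$; the binding case is when $l,k$ lie in different pairs $P_m, P_n$, where $\mathcal{J}_l \cap \mathcal{J}_k = \{1{:}N\}\setminus(P_m \cup P_n)$ has exactly $N - 4$ elements. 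This set is nonempty precisely when $N \ge 5$, i.e.\ (since $N$ is even) when $N/2 \ge 3$, which is exactly the stated threshold.

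With these choices Assumption~\ref{a:regcis} holds and the functions $h_{l,k}$ are bounded below independently of $y$, so uniform ergodicity follows directly from Theorem~\ref{thm:convergence}(ii). I expect the main obstacle to be the careful treatment of the degenerate same-pair marginals: one must argue cleanly that these terms cannot contribute to the minorisation, and that the connectivity encoded in the $\mathcal{J}_l$ restricted to cross-pair interactions is still enough to drive the argument underlying Theorem~\ref{thm:convergence}. The elementary count $N - 4 \ge 1$ is what pins down the sharp constant $N/2 \ge 3$, and it is the reduction in effective degrees of freedom caused by the deterministic antithetic coupling that raises the threshold from $N \ge 3$ in the independent case to $N/2 \ge 3$ here.
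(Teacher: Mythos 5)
Your proposal is correct and follows essentially the same route as the paper: compute the pairwise marginal ratios (unity across distinct antithetic pairs, degenerate within a pair), choose $\mathcal{J}_l$ to exclude $l$ and its antithetic partner, take $h_{l,k} = I(k\in\mathcal{J}_l)$, and invoke Theorem~\ref{thm:convergence}(ii) since these bounds do not depend on the initial state. Your explicit count $|\mathcal{J}_l\cap\mathcal{J}_k| = N-4$ in the cross-pair case makes the origin of the threshold $N/2\ge 3$ clearer than the paper's terse ``it is easy to see,'' but the argument is the same.
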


It is straightforward to check that,
 \[
\frac{\bs{q}_{l,k}(y,z)}{q_l(y)q_k(z)} =
 \begin{cases}
 \frac{\delta_{Q^{-1}(1 - Q(y))}(z)}{q(z)} & k\in\{1{:}N\}\cap\{l-N/2,N/2+l\},\,k\ne l\\
 1 & k\in\{1{:}N\}\setminus\{l,l-N/2,N/2+l\}
 \end{cases}.
 \]
 Choose $\mathcal{J}_l = \{1{:}N\}\setminus\{l,l-N/2,N/2+l\}$. It is easy to see that $h_{l,k} = I(k\in\mathcal{J}_l)$ is a valid choice. Assumption \ref{a:regcis} is satisfied and the MIIS sampler is uniformly ergodic using the same arguments as in the previous example.

\subsubsection*{Random walk importance sampler} This example is discussed in  Section~\ref{s:randomwalk}.
\begin{corollary} \label{l:RWIS-CIS}
Suppose that (i) $T(x,\dd{y}|\xi) = \delta_x(\dd{y}) = \delta_x(y)\dd{y} $, (ii) $\bs{q}(\dx_{1:N}|\xi) = \prod_{i=1}^N q_i(\dx_i| \xi)$ and (iii) $q_i(\dx_i| \xi) = \phi(x_i - \xi) \dx_i $; (iv) $\eta(\dxi|y) = \phi(\xi-y)\dxi$; (v) $\phi(x-y) > 0 $ for any $x,y \in A$. (vi) $ \pi(x_i)\leq C$. Then, the marginal chain $\{(y^{(t)},k^{(t)})\}$ is ergodic for $N \geq 3$. If $\inf_{z,y \in A}  \int \phi(\xi - z) \phi(\xi - y)\dxi > \varepsilon>0$ Then $\{(y^{(t)},k^{(t)})\}$ is uniformly ergodic.
\end{corollary}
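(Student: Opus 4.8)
The plan is to follow exactly the template established by Corollaries \ref{C: miis sis} and \ref{l:ISA-CIS}: reduce both claims to a verification of Assumption \ref{a:regcis} and then invoke the two parts of Theorem \ref{thm:convergence}. The genuinely new feature here, compared with those two examples, is that the auxiliary variable $\xi$ is no longer degenerate (there $\eta(\dxi|y)=\delta_0(\dxi)$), so the overlap integral $\int_A \phi(\xi-y)\phi(\xi-z)\,\dxi$ now carries the whole argument. First I would compute $S_{l,k}$. Since $T$ is the identity kernel and the particles are conditionally independent given $\xi$ by (ii)--(iii), we have $\bs{q}_{l,k}(y,z|\xi)=q_l(y|\xi)q_k(z|\xi)$, so the general expression \eqref{eq: def S_kl} collapses to $S_{l,k}(\xi,y,z)=1$ for $l\neq k$. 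Consequently the left side of Assumption \ref{a:regcis}(ii)(a) is exactly $\int_A \eta(\xi|y)\eta(\xi|z)\,\dxi=\int_A \phi(\xi-y)\phi(\xi-z)\,\dxi$, which suggests setting $h_{l,k}(y,z):=\int_A \phi(\xi-y)\phi(\xi-z)\,\dxi$ for $l\neq k$.

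Next I would verify part (i) of the assumption. Using the symmetry of the random walk kernel $\phi$, one has $\eta(\xi|x)=\phi(\xi-x)=\phi(x-\xi)=q_k(x|\xi)$, so the weight bound $\pi(x)\eta(\xi|x)\le C\,q_k(x|\xi)$ reduces to $\pi(x)\le C$, which is precisely hypothesis (vi); this is the same cancellation that makes the weights equal $m(x)$ in Section \ref{s:randomwalk}. Part (ii)(a) is the choice of $h_{l,k}$ above, and its strict positivity for every $y,z$ is immediate from (v), since the integrand is strictly positive. I would then take $\mathcal{J}_l=\{1{:}N\}\setminus\{l\}$; for $N\ge 3$ the intersection $\mathcal{J}_l\cap\mathcal{J}_k=\{1{:}N\}\setminus\{l,k\}$ is nonempty, giving (ii)(b), and since $h_{l,j}(y,z)=h_{j,l}(y,z)=\int_A\phi(\xi-y)\phi(\xi-z)\,\dxi>0$ for every $j\neq l$, condition (ii)(c) holds. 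Ergodicity for $N\ge 3$ then follows directly from Theorem \ref{thm:convergence}(i). Note the diagonal terms $S_{l,l}$ never enter, because $\mathcal{J}_l$ excludes $l$.

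For uniform ergodicity the only obstruction is that $h_{l,k}(y,z)$ now depends on the initial value $y$, unlike in the simple importance sampling example where $h_{l,k}=I(l\neq k)$ was already $y$-free. This is exactly where the extra hypothesis $\inf_{z,y\in A}\int\phi(\xi-z)\phi(\xi-y)\,\dxi>\varepsilon>0$ is used: it furnishes the $y$-independent lower bound $\underbar{h}_{l,k}(z):=\varepsilon$, so that the hypothesis of Theorem \ref{thm:convergence}(ii) is met and uniform ergodicity follows. The main point requiring care is the symmetry of $\phi$, which is what makes the weight bound in part (i) collapse so cleanly; without it the ratio $\eta(\xi|x)/q_k(x|\xi)$ would not cancel and the argument for part (i) would have to be reworked. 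The rest is routine bookkeeping against the two previously verified examples.
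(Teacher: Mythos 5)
Your proof is correct and follows essentially the same route as the paper: compute $S_{l,k}=1$ for $l\neq k$ from conditional independence, take $h_{l,k}(y,z)=\int_A\phi(\xi-y)\phi(\xi-z)\,\dxi$ with $\mathcal{J}_l=\{1{:}N\}\setminus\{l\}$, and invoke the two parts of Theorem~\ref{thm:convergence}, using the infimum hypothesis to supply the $y$-free bound $\underbar{h}_{l,k}(z)=\varepsilon$ for uniform ergodicity. Your explicit check of Assumption~\ref{a:regcis}(i) via the symmetry cancellation $\eta(\xi|x)=q_k(x|\xi)$ is a detail the paper leaves implicit (it simply notes part (i) is assumed, via hypothesis (vi)), but it is the same argument, not a different approach.
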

For $l \neq k$, $S_{l,k}(\xi,y,z) = 1$  because the proposals are independent, and
\[
h(y,z):=h_{k,l}(y,z) = \int_A\phi(\xi-y)\phi(z-\xi)\dxi > 0.
\]
Choose $\mathcal{J}_l = \{1{:}N\}\setminus\{l\}$. Then Assumption \ref{a:regcis} holds and ergodicity follows from Theorem \ref{thm:convergence}. By assumption there exists $\epsilon>0$ such that $h(y,z) \geq \epsilon$ for all $y,z \in A$. By defining $ \underbar{h}_{l,k}(z)= \epsilon$ for $k \in \mathcal{J}_l $, uniform ergodicity follows from part (ii) of Theorem \ref{thm:convergence}.

\subsection{The MIIS Gibbs Sampler}
This section shows that the marginal chain $\{l_{1:d}^{(t)}, Y^{(t)}\}$ generated by the MIIS Gibbs sampler (Algorithm~\ref{alg:gibbs miis} in Section~\ref{SS: Gibbs MIIS algorithm}) is ergodic if (i)~the \textit{ideal Gibbs sampler}, i.e., the Gibbs sampler drawing variables from the conditionals $\pi_s(\dd{y}(s)|y(\sk[s]))$, is irreducible and aperiodic;  (ii) the CIS  Gibbs sampler satisfies regularity conditions that are similar to Assumption \ref{a:regcis}, but hold for each $s=1,\dots,d$; (iii) The space $A$ is Euclidean with Lebesgue measure the underlying measure.

Our notation  assumes that we condition on $y(\sk[s])$ when dealing with the $s$th component and do not usually show this conditioning explicitly. The transition kernel for the ideal Gibbs sampler is
\begin{align} \label{eq: ideal gs}
P_G(y;\dd{z} ): = \prod_{i=1}^m \pi_s(\dd{z}(s)|z(1\dcolon s-1),y(s+1\dcolon d))
\end{align}
For all $l,k\in\{1{:}N\}$ and $\xi(s), y(s),z(s)\in A_s$, define
\begin{align} \label{eq: def S_kl gibbs}
\begin{split}
S_{s,l,k}(\xi(s),y(s),z(s)):= & \int_{A^2}T_s(y(s),\dx_l(s);\xi(s))T_s(z(s),\dx_k(s);\xi(s)) \\ & \times \frac{\bs{q}_{s,l,k}(x_l(s),x_k(s)|\xi(s))}{q_{s,l}(x_l(s)|\xi(s))q_{s,k}(x_k(s)|\xi(s))}, \quad   l \neq k \\
S_{s,l,l}(\xi(s),y(s),z(s)):= & \int_{A}\frac{T_s(y(s),\dx_l(s);\xi(s))T_s(z(s),x_l(s);\xi(s))}{q_{s,l}(x_l(s)|\xi(s))}, \quad l = k
\end{split}
\end{align}
where $\bs{q}_{s,l,k}(x_l(s),x_k(s)|\xi(s))$ is the joint marginal of $(x_l(s),x_k(s))$ for $l \neq k $.

The proof of Theorem~\ref{thm:convergencecond} is based on the following assumption, which generalizes Assumption \ref{a:regcis} to the Gibbs case.
\begin{assumption} \label{a:regciscond}
The following condition holds for all $s=1,\ldots,d$. All terms are conditional on $y(\sk[s])$, unless stated otherwise.
\begin{enumerate}
   \item [(i)]  There exists a constant $C$, $0<C<\infty$,  such that the marginal densities $q_{s,k}(x(s)|\xi(s))$ satisfy $\pi_s(x(s))\eta(\xi(s)|x(s))\le C^{1/d}\,q_{s,k}(x(s)|\xi(s))$, for each $k$ and all $x(s),\xi(s)\in A_s$.
    \item [(ii)] For each $k,l \in  \{1{:}N\}, y(s),z(s) \in A_s$ and $y(\sk[s]) \in A_{\sk[s]}$,
    \begin{itemize}
    \item [(a)] There exist functions $h_{s,k,l}(y(s),z(s))\ge 0 $ such that
    \[
      \int_{A_s} S_{s,l,k}(\xi(s),y(s),z(s))\,\eta(\xi(s)|y(s))\eta(\xi(s)|z(s))\mu(\dxi(s))  \geq  h_{s,l,k}(y(s),z(s));
    \]

   \item [(b)] for each $l \in  \{1{:}N\}$, there exists a set $\mathcal{J}_{s,l}\subseteq\{1{:}N\}\setminus \{l\}$  $\mathcal{J}_{s,l}\cap\mathcal{J}_{s,k}\ne \emptyset$ for $l\ne k$;
   \item [(c)] for each $j\in\mathcal{J}_{s,l}$, $h_{s,l,j}(z(s),y(s))>0$ and $h_{s,j,l}(z(s),y(s))>0$ on $y\in\{x\in A:\pi(x)>0\}$ and $z(s)\in A_s$.
    \end{itemize}
\end{enumerate}
\end{assumption}

Define $\bs{l}:=l_{1:d}$ and $ \bs{k}:= k_{1:d}$.
\begin{theorem}\label{thm:convergencecond}
	Suppose Assumption \ref{a:regciscond} holds. If $P_G$ is irreducible and aperiodic, then so is the marginal kernel $P_M(y,\bs{l} ; \dd{z}\times \bs{k})$, and for any starting values $y\in A$ with $\pi(y)>0$ and $\bs{l}\in\{1{:}N\}^d$,
			\[
				\lim_{t\rightarrow\infty}\left|P_{M}^t(y,\bs{l};\cdot)-N^{-d}\pi(\cdot)\right|_{TV} = 0.
			\]
\end{theorem}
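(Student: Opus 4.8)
The plan is to prove the statement by reducing it to three facts about the marginal transition kernel $P_M$: that the marginal process $\{(\bs{l}^{(t)},Y^{(t)})\}$ is genuinely Markov, that $N^{-d}\pi$ is $P_M$-invariant, and that $P_M$ is $N^{-d}\pi$-irreducible and aperiodic; total-variation ergodicity then follows from Theorem~1 of \citet{tierney1994}. That the marginal process is Markov is immediate from Algorithm~\ref{alg:gibbs miis}: the update of block $s$ reads off the previous state only through $(l_s,y(s))$ (together with the conditioning blocks), while the auxiliary variable $\xi(s)$ and the $N-1$ unselected particles are regenerated afresh each sweep, so $(\bs{l}^{(t)},Y^{(t)})$ is a deterministic function of $(\bs{l}^{(t-1)},Y^{(t-1)})$ and independent randomness. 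Invariance of $N^{-d}\pi$ is inherited by marginalisation from Theorem~\ref{thm:conditionalmiis}, which identifies the augmented chain as a Gibbs sampler with stationary law \eqref{eq:conditionalmiis} whose $(\bs{k},y)$-marginal is $N^{-d}\pi$.

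The core of the argument is a minorisation that transfers irreducibility and aperiodicity from the ideal Gibbs kernel $P_G$ of \eqref{eq: ideal gs} to $P_M$. Since the scan is systematic I would factor $P_M=P_{M,1}P_{M,2}\cdots P_{M,d}$, where $P_{M,s}$ refreshes block $s$ conditionally on $z(1{:}s{-}1)$ and $y(s{+}1{:}d)$. For a single block I would repeat the computation underlying Theorem~\ref{thm:convergence}: integrate out $\xi(s)$ and the unselected particles, bound the self-normalised weight $W_{s,k_s}$ from below using Assumption~\ref{a:regciscond}(i) (the bound $\pi_s\,\eta_s\le C^{1/d}q_{s,k_s}$ caps each unnormalised weight, so the denominator is at most $N\,C^{1/d}$ after the normalising constant cancels), and invoke the definition of $S_{s,l,k}$ together with Assumption~\ref{a:regciscond}(ii)(a) to obtain, for every $k_s$,
\begin{equation*}
P_{M,s}\big((y,\bs{l});\dd z(s),k_s\big)\;\ge\;\frac{1}{C^{1/d}N}\,h_{s,l_s,k_s}\big(y(s),z(s)\big)\,\pi_s\big(\dd z(s)\,\big|\,z(1{:}s{-}1),y(s{+}1{:}d)\big).
\end{equation*}
The point to secure here is that the bound factorises cleanly: the continuous $z(s)$-part is exactly the ideal conditional move, while the discrete index dependence is quarantined inside $h_{s,l_s,k_s}$.

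Composing the $d$ block bounds, the $z$-factors telescope by \eqref{eq: ideal gs} into $P_G(y;\dd z)$ and the constants multiply, giving
\begin{equation*}
P_M\big((y,\bs{l});\dd z\times\{\bs{k}\}\big)\;\ge\;\frac{1}{C N^d}\Big(\prod_{s=1}^d h_{s,l_s,k_s}\big(y(s),z(s)\big)\Big)\,P_G(y;\dd z).
\end{equation*}
By Assumption~\ref{a:regciscond}(ii)(b)--(c) each $\mathcal{J}_{s,l_s}$ is nonempty with $h_{s,l_s,k_s}>0$ for $k_s\in\mathcal{J}_{s,l_s}$, so choosing such a $\bs{k}$ makes the prefactor strictly positive and shows $P_M((y,\bs{l});B\times\{\bs{k}\})>0$ whenever $P_G(y;B)>0$. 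Iterating, $P_G$-irreducibility lets me steer the $z$-coordinate into any set of positive $\pi$-measure, while the two-sided positivity in (c) and the overlap $\mathcal{J}_{s,l}\cap\mathcal{J}_{s,k}\neq\emptyset$ render the index coordinate irreducible (any pair of indices is joined in two steps through a common neighbour) and aperiodic; combined with aperiodicity of $P_G$ this yields $N^{-d}\pi$-irreducibility and aperiodicity of $P_M$. The same minorisation exhibits a small set and hence Harris recurrence, so Theorem~1 of \citet{tierney1994} upgrades convergence to every starting value, giving $\lim_{t\to\infty}|P_M^t(y,\bs{l};\cdot)-N^{-d}\pi(\cdot)|_{TV}=0$ for all $y$ with $\pi(y)>0$ and all $\bs{l}$.

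The main obstacle I anticipate is the per-block minorisation itself: one must verify that marginalising over the auxiliary variable and the unselected particles, under the evolving conditioning $y(\sk[s])$ that changes as earlier blocks are refreshed within the same sweep, still leaves a bound proportional to the true conditional $\pi_s(\cdot\mid\cdot)$ with the index dependence isolated in $h_{s,l_s,k_s}$. A secondary difficulty is converting the combinatorial hypotheses on the sets $\mathcal{J}_{s,l}$ into a clean statement that the index chain is simultaneously irreducible and aperiodic uniformly in the conditioning variables, so that these properties persist under composition with $P_G$.
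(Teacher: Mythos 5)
Your proposal follows essentially the same route as the paper's proof: a per-block minorization $P_{M,s}\ge C^{-1/d}N^{-1}h_{s,l_s,k_s}\,\pi_s(\cdot|\cdot)$ composed across the sweep so that the conditionals telescope into $P_G$, then transfer of irreducibility and aperiodicity from $P_G$ to $P_M$ via the overlap conditions on the sets $\mathcal{J}_{s,l}$ (the paper packages the iterated bound in the sub-stochastic kernels $H^t_{\bs l,\bs k}$ and handles aperiodicity by contradiction, but the content is the same), concluding with Theorem~1 of \citet{tierney1994}. The only loose point is your parenthetical claim that the minorization "exhibits a small set" --- since the bound depends on the starting point $y$ through both $h_{\bs l,\bs k}$ and $P_G(y,\cdot)$, it is not a uniform minorization --- but this is a side remark and the paper's own final appeal to Tierney is no more explicit about upgrading almost-everywhere convergence to all starting values with $\pi(y)>0$.
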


\subsubsection*{Gibbs Sampler with simple importance sampling example}\label{ex: gibbs with IS}
We now consider the example of the Gibbs sampler with simple importance sampling discussed in Section ~\ref{s:gibbs}.
\begin{corollary}
   Suppose that there is no dependence on $\xi$ and the following conditions hold for $s=1, \dots, d$. (i)~$T_s(x(s),\dd y(s)|y(\sk[s])) = \delta_{x(s)}(\dd y(s))$, (ii)~$q_s(\dd y_{1:N}(s)|y(\sk[s])) = \prod_{i=1}^Nq_{s,i}(\dd y_i(s)|y(\sk[s]))$, (iii) There is a $C > 0 $ such that $q_{s,i}(\dd y_i(s)|y(\sk[s]) ) \geq C^{1/d} \pi_s(\dd y_i(s)|y(\sk[s]) )$. If we further assume that the ideal Gibbs sampler, $P_G$, is irreducible and aperiodic, then the distribution of the marginal chain $\{\bs{l}^{(t)}, y^{(t)}, t \geq 1 \}$ converges to the full target $N^{-d}\pi(\cdot)$ as $t\rightarrow\infty$ for any fixed $N\ge 3$.
\label{l:IS-CIScond}
\end{corollary}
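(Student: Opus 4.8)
The plan is to verify that the hypotheses of the corollary imply Assumption~\ref{a:regciscond} and then to apply Theorem~\ref{thm:convergencecond} directly. This is the Gibbs analogue of the verification carried out for the simple importance sampling example in Corollary~\ref{C: miis sis}, now performed component-wise for each $s=1,\dots,d$. Since there is no dependence on $\xi$, I would set $\eta_s(\dxi(s)|y(s)) = \delta_0(\dxi(s))$ without loss of generality, exactly as in the non-Gibbs case, so that the auxiliary variable plays no role and every $S_{s,l,k}$ reduces to its $\xi$-free form.

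First, Assumption~\ref{a:regciscond}(i) holds immediately: condition~(iii) of the corollary, $q_{s,i}(\dd y_i(s)|y(\sk[s])) \geq C^{1/d}\pi_s(\dd y_i(s)|y(\sk[s]))$, is precisely the required bound $\pi_s(x(s))\eta_s(\xi(s)|x(s)) \leq C^{1/d}\,q_{s,k}(x(s)|\xi(s))$ once the degenerate $\eta_s$ is absorbed. The per-component constant $C^{1/d}$ is chosen so that the product across the $d$ blocks recovers the global weight bound used inside Theorem~\ref{thm:convergencecond}.

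Next, because $T_s$ is the identity kernel (condition~(i)) and $\bs{q}_s$ factorizes as a product (condition~(ii)), the quantity $S_{s,l,k}$ simplifies as in the computation at the start of Section~\ref{SS: Examples theory}: for $l\neq k$ the joint marginal factorizes, giving $\bs{q}_{s,l,k}(y(s),z(s))/[q_{s,l}(y(s))\,q_{s,k}(z(s))] = 1$, so $S_{s,l,k}=1$, while the diagonal term $S_{s,l,l}$ carries the factor $I(z(s)=y(s))$. Hence $\int_{A_s} S_{s,l,k}\,\eta_s(\xi(s)|y(s))\eta_s(\xi(s)|z(s))\,\mu(\dxi(s)) \geq I(l\neq k)$, and I would take $h_{s,l,k}(y(s),z(s)) = I(l\neq k)$, which is nonnegative and independent of $(y(s),z(s))$; this establishes (ii)(a).

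Finally, for (ii)(b)--(c) I would choose $\mathcal{J}_{s,l} = \{1{:}N\}\setminus\{l\}$ for every $s$ and $l$. Then $\mathcal{J}_{s,l}\cap\mathcal{J}_{s,k} = \{1{:}N\}\setminus\{l,k\}$, which is nonempty exactly when $N\geq 3$, matching the stated requirement; and for any $j\in\mathcal{J}_{s,l}$ both $h_{s,l,j}=I(l\neq j)=1>0$ and $h_{s,j,l}=1>0$. With Assumption~\ref{a:regciscond} verified and $P_G$ assumed irreducible and aperiodic, Theorem~\ref{thm:convergencecond} yields convergence of $\{\bs{l}^{(t)},y^{(t)}\}$ to $N^{-d}\pi$. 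The only genuine work is the bookkeeping that these choices hold uniformly in $s$; the substantive content, namely transferring irreducibility and aperiodicity from the ideal Gibbs kernel $P_G$ to the marginal MIIS kernel $P_M$, is already packaged inside Theorem~\ref{thm:convergencecond}, so no new argument is needed there.
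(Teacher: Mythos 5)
Your proposal is correct and follows essentially the same route as the paper: the paper's proof likewise verifies Assumption~\ref{a:regciscond} component-wise with $h_{s,l,k}=I(l\neq k)$ and $\mathcal{J}_{s,l}=\{1{:}N\}\setminus\{l\}$ (mirroring Corollary~\ref{C: miis sis}) and then invokes Theorem~\ref{thm:convergencecond}. In fact your write-up spells out the $S_{s,l,k}$ computation and the $N\geq 3$ intersection argument in more detail than the paper, which simply refers back to the marginal case.
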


This corollary follows after the same arguments used in the marginal case. The functions $h_{s,l,k} = I(l\ne k)$ and the sets $\mathcal{J}_{s,l} = \{1{:}N\}\setminus \{l\}$ for each $s=1,\dots,d$. The result follows from Theorem \ref{thm:CISunbiased-conditional}.

\subsection{Consistent estimation of expectations}\label{s:VR theory}
\subsubsection*{Using all the particles}\label{SS: using all the particles theory}
The next theorem shows that the MIIS estimator ${\wh E}^{M,N}_{MIIS}(f)$ discussed in Section \ref{s:VR} converge to $E_\pi(f)$.
\begin{corollary}
Let $f:A\mapsto\mathbb{R}$ be such that $E_\pi(|f|)<\infty$ and suppose~Assumption~\ref{a:regcis} holds. Then the
 MIIS estimator ${\wh E}^{M,N}_{MIIS}(f)\rightarrow E_\pi(f)$ with probability one as $M\rightarrow\infty$, for any $N\ge 2$.
\label{thm:MIISconsistent}
\end{corollary}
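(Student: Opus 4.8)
The plan is to recognize ${\wh E}^{M,N}_{MIIS}(f)$ as an ergodic time-average of the functional $g(z):=\sum_{i=1}^N W_i(x_{1:N},\xi)f(x_i)={\wh E}^N_{CIS}(f)$ evaluated along the full augmented MIIS chain $Z^{(t)}:=(k^{(t)},y^{(t)},x_{1:N}^{(t)},\xi^{(t)})$, and then to invoke the same strong law of large numbers (Theorem 3 of \cite{tierney1994}) that the paper already uses for ${\wh E}^M_{MC}(f)$. Indeed, by \eqref{eq:pimiis} we have ${\wh E}^{M,N}_{MIIS}(f)=M^{-1}\sum_{t=1}^M g(Z^{(t)})$, so the claim reduces to two ingredients: ergodicity of $\{Z^{(t)}\}$ with invariant law $\wt\pi^N$, and $\wt\pi^N$-integrability of $g$ together with the identity $E_{\wt\pi^N}(g)=E_\pi(f)$.

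The integrability ingredient is immediate from Theorem~\ref{thm:CISunbiased}. Part~(iii) gives $E_{\wt\pi^N}(g)=E_{\wt\pi^N}({\wh E}^N_{CIS}(f))=E_\pi(f)$, and applying the same identity to $|f|$ (legitimate since $E_\pi(|f|)<\infty$), together with $W_i\ge 0$ and $\sum_i W_i=1$, yields
\[
E_{\wt\pi^N}(|g|)\le E_{\wt\pi^N}\Big(\sum_{i=1}^N W_i|f(x_i)|\Big)=E_\pi(|f|)<\infty,
\]
so $g\in L^1(\wt\pi^N)$.

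The real content is ergodicity, and the key structural observation is that one step of the full chain depends on the current state $(k,y,x_{1:N},\xi)$ only through its marginal coordinates $(k,y)$: Steps~1--5 of Algorithm~\ref{alg: MIIS algorithm} regenerate $\xi^{(t)},x_{1:N}^{(t)},k^{(t)},y^{(t)}$ from $(y^{(t-1)},k^{(t-1)})$ alone. Writing the full kernel as $P_{\mathrm{full}}(z,\cdot)=\bar P((y,k),\cdot)$, I would first check that pushing the marginal invariant law through one full step reproduces the augmented target, i.e.\ $\wt\pi^N=(N^{-1}\pi)\,\bar P$: a draw $(k,y)\sim N^{-1}\pi$ followed by $(x_{1:N},\xi)\sim\Gamma^N(\cdot|y,k)$ has joint law exactly $\wt\pi^N$ by \eqref{eq:fullmiis}, and the subsequent resampling of $(k,y)$ through $W$ and $T$ is, by Theorem~\ref{thm:CISunbiased}(ii), a draw from $\wt\pi^N(k,y|x_{1:N},\xi)$, hence leaves $\wt\pi^N$ invariant (this is the Gibbs structure of Theorem~\ref{thm:fullmiis}). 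Because each full step factors through the marginal coordinates, the law of $Z^{(t)}$ started from any $z_0$ with marginal $(y_0,k_0)$ equals $\big[P^{t-1}((y_0,k_0),\cdot)\big]\bar P$, so that by the contraction property of the Markov kernel $\bar P$ in total variation,
\[
\big|\mathrm{Law}(Z^{(t)})-\wt\pi^N\big|_{TV}\le\big|P^{t-1}((y_0,k_0),\cdot)-N^{-1}\pi\big|_{TV}\xrightarrow[t\to\infty]{}0,
\]
where the convergence is exactly Theorem~\ref{thm:convergence}(i) under Assumption~\ref{a:regcis}. This establishes ergodicity of the full chain, in the same total-variation sense in which the paper applies Tierney's theorem to the marginal chain, and combining it with the integrability ingredient and Theorem~3 of \cite{tierney1994} delivers ${\wh E}^{M,N}_{MIIS}(f)\to E_\pi(f)$ almost surely.

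The main obstacle is making the lift from the marginal ergodicity of Theorem~\ref{thm:convergence} to genuine ergodicity of the augmented chain fully rigorous — in particular justifying that the contraction bound above yields convergence strong enough to support the strong law rather than merely convergence of one-dimensional marginals. The cleanest resolution is precisely the factorization $P_{\mathrm{full}}(z,\cdot)=\bar P((y,k),\cdot)$, which shows that $\pi$-irreducibility, aperiodicity and Harris recurrence of the marginal chain transfer verbatim to $\{Z^{(t)}\}$, so the augmented chain inherits exactly the hypotheses needed for Tierney's ergodic theorem.
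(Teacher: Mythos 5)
Your proposal is correct and takes essentially the same route as the paper's proof: ergodicity of the MIIS chain from Theorem~\ref{thm:convergence}, unbiasedness and integrability of ${\wh E}^N_{CIS}(f)$ from Theorem~\ref{thm:CISunbiased}, and the strong law of large numbers for ergodic sequences \citep[Theorem~3]{tierney1994}. The only difference is that you spell out the lift from marginal-chain ergodicity to ergodicity of the augmented chain (via the factorization of the full kernel through $(y,k)$ and invariance of ${\wt\pi}^N$), a step the paper's one-line proof leaves implicit.
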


\subsubsection*{Using Rao Blackwellized estimators}\label{s: rao black theory}
Define the Rao-Blackwellized estimators ${\wh E}^{M,N}_{s,RB}(f)$ and ${\wh E}^{M,N}_{RB}(f)$ as in Section xxx. Then,
\begin{corollary}
Let $f:A\mapsto\mathbb{R}$ be such that $E_\pi(|f|)<\infty$. Suppose Assumption \ref{a:regciscond} holds.
 Then, the Rao-Blackwellized estimators ${\wh E}^{M,N}_{s,RB}(f)$ and ${\wh E}^{M,N}_{RB}(f)$ converge to $E_\pi(f)$ with probability 1 as $M \rightarrow \infty$.
\label{thm:RBconsistent}
\end{corollary}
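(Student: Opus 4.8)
The goal is to show almost-sure convergence of the Rao-Blackwellized estimators ${\wh E}^{M,N}_{s,RB}(f)$ and ${\wh E}^{M,N}_{RB}(f)$ to $E_\pi(f)$ as $M\to\infty$. The natural strategy is to reduce the claim to the ergodic theorem for the MIIS Gibbs chain established in Theorem~\ref{thm:convergencecond}. The key observation is that each Rao-Blackwellized average ${\wh E}^{M,N}_{s,RB}(f) = \frac1M\sum_{t=1}^M {\wh E}^N_{s,CIS,t}(f)$ is a time-average of a \emph{fixed} measurable function of the augmented state visited at iteration $t$; the summand ${\wh E}^N_{s,CIS,t}(f) = \sum_{i=1}^N W_{s,i}(x^{(t)}_{1:N}(s);\xi^{(t)}(s)\,|\,x^{(t)}(\sk[s]))\,f(x^{(t)}_i)$ depends only on the particle system, the auxiliary variable, and the conditioning block, all of which are coordinates of the augmented Markov chain targeting \eqref{eq:conditionalmiis}. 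Writing $F_s$ for this function on the augmented space, the plan is to apply the strong law of large numbers for ergodic Markov chains (Theorem~3 of \citet{tierney1994}) to $F_s$.

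First I would verify that the augmented chain is ergodic with stationary distribution \eqref{eq:conditionalmiis}: this is exactly the content of Theorem~\ref{thm:convergencecond} under Assumption~\ref{a:regciscond}, which is assumed in the hypothesis. Second, I would identify the stationary expectation of the summand. By part (iii) of Theorem~\ref{thm:CISunbiased-conditional}, for $(k_s,y(\sk[s]))$ distributed according to the relevant marginal of $\wt\pi^N_s$ we have $E_{\wt\pi^N_s(\cdot|y(\sk[s]))}\big({\wh E}^N_{s,CIS}(f)\big) = E_{\pi_s(\cdot|y(\sk[s]))}(f)$; taking a further expectation over $y(\sk[s])\sim\pi$ and using the tower property shows that the stationary mean of $F_s$ equals $E_\pi\big(E_{\pi_s(\cdot|X(\sk[s]))}(f)\big) = E_\pi(f)$. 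Third, I would confirm the integrability condition $E_{\wt\pi^N}(|F_s|)<\infty$ needed to invoke the ergodic theorem; this follows because the weights $W_{s,i}$ are bounded by $1$, so $|F_s|\le \sum_i |f(x_i)|$, and Assumption~\ref{a:regciscond}(i) together with $E_\pi(|f|)<\infty$ controls the stationary expectation of each $|f(x_i)|$. The ergodic theorem then gives ${\wh E}^{M,N}_{s,RB}(f)\to E_\pi(f)$ almost surely, and since ${\wh E}^{M,N}_{RB}(f) = \frac1d\sum_{s=1}^d {\wh E}^{M,N}_{s,RB}(f)$ is a finite average of such estimators, its convergence is immediate.

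The main obstacle I anticipate is the integrability step. The estimator ${\wh E}^N_{s,CIS,t}(f)$ involves self-normalized importance weights, and although the normalized weights are bounded, establishing $E_{\wt\pi^N}(|F_s|)<\infty$ cleanly requires care: one must express the stationary expectation of $|f(x_i)|$ under the augmented target and show it is dominated by $E_\pi(|f|)$ up to the constant $C$ from Assumption~\ref{a:regciscond}(i). This is where the bound $\pi_s(x(s))\eta(\xi(s)|x(s))\le C^{1/d}q_{s,k}(x(s)|\xi(s))$ is used to move from the proposal marginals back to the target. The remaining steps — ergodicity and identification of the mean — are essentially bookkeeping built on results already proved in the paper, so the argument is short once the integrability bound is in place.
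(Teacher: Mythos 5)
Your proposal is correct and follows essentially the same route as the paper's own (very terse) proof: ergodicity of the MIIS Gibbs chain from Theorem~\ref{thm:convergencecond}, identification of the stationary mean via Theorem~\ref{thm:CISunbiased-conditional}, and the strong law of large numbers for ergodic chains (Theorem~3 of \citet{tierney1994}). The additional details you supply --- the tower-property argument and the integrability check using the boundedness of the normalized weights --- are exactly the steps the paper leaves implicit, so your write-up is a more careful version of the same argument rather than a different one.
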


\subsubsection*{Using Control Variates}\label{SS: using control variates theyr}
The following two results shows that the estimators based on control variates discussed in Section \ref{s:CV}
are  consistent under ergodicity.
\begin{corollary}
Let $f:A\mapsto\mathbb{R}$ be such that $E_\pi(|f|)<\infty$. Suppose~Assumption \ref{a:regcis} holds.
 Then the estimator using control variates ${\wh E}^{M,N}_{CV}(f,\bs\theta)\rightarrow E_\pi(f)$ with probability one as $M\rightarrow\infty$, for  any $\boldsymbol\kappa\in\mathbb{R}^p$.
\label{thm:CVconsistent}
\end{corollary}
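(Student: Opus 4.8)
The plan is to exploit the fact that, for fixed $\boldsymbol\kappa$, the estimator ${\wh E}_{CV}^{M,N}(f;\bs\kappa)$ is a \emph{finite} linear combination of ergodic averages whose almost-sure limits have already been identified in the preceding results. Starting from the compact representation \eqref{e:cvestimator},
\[
{\wh E}_{CV}^{M,N}(f;\bs\kappa) = {\wh E}_{MC}^M(f)-\sum_{j=1}^p\kappa_j\left[{\wh E}_{MC}^M(g_j) - {\wh E}_{MIIS}^{M,N}(g_j)\right],
\]
I would treat the three types of averages appearing here separately and then recombine them.

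First I would handle the pure Monte Carlo averages. Since $E_\pi(|f|)<\infty$ and each $g_j$ is integrable with respect to $\pi$, and since Assumption~\ref{a:regcis} guarantees by Theorem~\ref{thm:convergence}(i) that the marginal chain $\{(y^{(t)},k^{(t)})\}$ is ergodic, Theorem~3 of \cite{tierney1994} yields the almost-sure limits ${\wh E}_{MC}^M(f)\to E_\pi(f)$ and ${\wh E}_{MC}^M(g_j)\to E_\pi(g_j)$ for every $j$. Next I would invoke Corollary~\ref{thm:MIISconsistent}, applied with $g_j$ in place of $f$ (legitimate because each $g_j$ is $\pi$-integrable and Assumption~\ref{a:regcis} holds), to conclude that the MIIS averages satisfy ${\wh E}_{MIIS}^{M,N}(g_j)\to E_\pi(g_j)$ almost surely, for any fixed $N\ge 2$. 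The crucial structural point is that the control variate associated with $g_j$ is asymptotically mean zero: both ${\wh E}_{MC}^M(g_j)$ and ${\wh E}_{MIIS}^{M,N}(g_j)$ converge to the \emph{same} limit $E_\pi(g_j)$, so the bracketed term $[{\wh E}_{MC}^M(g_j)-{\wh E}_{MIIS}^{M,N}(g_j)]\to 0$ almost surely.

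Finally I would assemble the pieces. Because only finitely many ($p+1$) averages are involved, the intersection of the corresponding full-probability events is itself a full-probability event on which all the limits above hold simultaneously; crucially this event does not depend on $\boldsymbol\kappa$. On this event,
\[
{\wh E}_{CV}^{M,N}(f;\bs\kappa)\;\longrightarrow\; E_\pi(f)-\sum_{j=1}^p\kappa_j\big(E_\pi(g_j)-E_\pi(g_j)\big)=E_\pi(f),
\]
for every $\boldsymbol\kappa\in\mathbb{R}^p$, which is the claim.

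No genuine analytic obstacle arises here, since the substantive work---establishing ergodicity of the marginal chain and the consistency of the MIIS averages---has already been carried out in Theorem~\ref{thm:convergence} and Corollary~\ref{thm:MIISconsistent}. The only points requiring care are to record that $\pi$-integrability of each $g_j$ is exactly what licenses both the ergodic-theorem step and the application of Corollary~\ref{thm:MIISconsistent}, and to note that the exceptional null set can be chosen uniformly in $\boldsymbol\kappa$, so that the stated convergence holds simultaneously for all coefficient vectors.
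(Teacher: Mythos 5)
Your proposal is correct and follows essentially the same route as the paper: decompose the estimator via \eqref{e:cvestimator}, note that ${\wh E}_{MC}^M(g_j)$ and ${\wh E}_{MIIS}^{M,N}(g_j)$ converge almost surely to the common limit $E_\pi(g_j)$ (by ergodicity from Theorem~\ref{thm:convergence} together with Corollary~\ref{thm:MIISconsistent}), so each control-variate bracket vanishes and the estimator converges to $E_\pi(f)$ for any fixed $\boldsymbol\kappa$. Your version is in fact slightly more careful than the paper's, which cites Corollary~\ref{thm:MIISconsistent} for both convergences, whereas the plain Monte Carlo limit really comes from ergodicity plus Theorem~3 of \citet{tierney1994}, as you state.
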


\begin{corollary}
	For any $s=1,\dots,d$, let $f:A\mapsto\mathbb{R}$ be such that $E_\pi(|f|)<\infty$. Suppose Assumption \ref{a:regciscond} holds. Then
  the estimator using control variates ${\wh E}^{M,N}_{s,CV}(f,\bs\theta)\rightarrow E_\pi(f)$ with probability one as $M\rightarrow\infty$ and any $\boldsymbol\kappa\in\mathbb{R}^{p_1+\dots+p_d}$.
\label{thm:CVconsistentRB}
\end{corollary}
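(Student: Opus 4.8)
The plan is to recognize the control-variate estimator \eqref{eq:gibbscv} as a finite, fixed linear combination of estimators whose almost-sure limits are already known from the preceding results, and then to show that the multiplier attached to each coefficient $\kappa_{s,j}$ tends to zero. Nothing beyond almost-sure convergence is required, so no central limit theorem or rate estimate enters.

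First I would establish ergodicity of the underlying chain. Under Assumption~\ref{a:regciscond} (with the ideal Gibbs sampler $P_G$ irreducible and aperiodic, the standing hypotheses carried over from Corollary~\ref{l:IS-CIScond} and Corollary~\ref{thm:RBconsistent}), Theorem~\ref{thm:convergencecond} gives that the marginal MIIS Gibbs chain $\{(\bs{l}^{(t)},y^{(t)})\}$ is ergodic with invariant law $N^{-d}\pi$, so that $y^{(t)}$ has marginal $\pi$. Ergodicity together with Theorem~3 of \citet{tierney1994} yields the strong law for ergodic averages: for any $\pi$-integrable $h$ we have ${\wh E}_{MC}^M(h)\to E_\pi(h)$ with probability one as $M\to\infty$. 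Applying this with $h=f$ (integrable by hypothesis) and with each $h=g_{s,j}$ (integrable with respect to $\pi$ by the hypothesis of Section~\ref{s:CV}) handles all the simple Monte Carlo pieces.

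Next I would invoke Corollary~\ref{thm:RBconsistent}, which already states that under Assumption~\ref{a:regciscond} each Rao-Blackwellized estimator satisfies ${\wh E}_{s,RB}^{M,N}(g_{s,j})\to E_\pi(g_{s,j})$ with probability one. Combining the two limits, each bracketed control term obeys
\[
{\wh E}_{MC}^M(g_{s,j})-{\wh E}_{s,RB}^{M,N}(g_{s,j})\longrightarrow E_\pi(g_{s,j})-E_\pi(g_{s,j})=0
\]
almost surely; this is precisely the asymptotic version of the mean-zero property $E_{\wt\pi^N}(U(g_{s,j}))=0$ that motivates the control variates. Since the total number of terms $p_1+\dots+p_d$ is finite and the coefficients $\kappa_{s,j}$ are fixed reals, almost-sure convergence is preserved under the finite linear combination, giving
\[
{\wh E}_{s,CV}^{M,N}(f;\bs\kappa)\longrightarrow E_\pi(f)-\sum_{s=1}^d\sum_{j=1}^{p_s}\kappa_{s,j}\cdot 0=E_\pi(f)
\]
with probability one for every $\bs\kappa\in\mathbb{R}^{p_1+\dots+p_d}$, which is the assertion.

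The linear-combination bookkeeping is routine, so the only substantive dependency is that the raw average and the Rao-Blackwellized average of each $g_{s,j}$ share the common limit $E_\pi(g_{s,j})$. Accordingly, I expect the main obstacle to be purely a matter of hypotheses rather than of technique: one must verify that each $g_{s,j}$ is $\pi$-integrable so that both the ergodic strong law and Corollary~\ref{thm:RBconsistent} apply, and one must use the \emph{conditional} regularity condition (Assumption~\ref{a:regciscond}, not the weaker Assumption~\ref{a:regcis}) because the Rao-Blackwellized estimators ${\wh E}_{s,RB}^{M,N}$ are built block-by-block across the $d$ Gibbs components and inherit their consistency from Theorem~\ref{thm:CISunbiased-conditional} applied within each block.
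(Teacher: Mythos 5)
Your proposal is correct and follows essentially the same route as the paper: the paper's proof simply says to repeat the argument of Corollary~\ref{thm:CVconsistent}, replacing Theorem~\ref{thm:convergence} and Corollary~\ref{thm:MIISconsistent} by Theorem~\ref{thm:convergencecond} and Corollary~\ref{thm:RBconsistent}, which is exactly the substitution you carry out (ergodicity of the Gibbs marginal chain plus the strong law for the Monte Carlo pieces, Rao-Blackwellized consistency for the control terms, and a finite linear combination with fixed $\boldsymbol\kappa$). Your explicit remarks on the $\pi$-integrability of the $g_{s,j}$ and on the implicit irreducibility/aperiodicity hypothesis on $P_G$ are if anything slightly more careful than the paper's one-line proof.
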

\bibliographystyle{ecta}
\bibliography{pfmcmc,references}

\appendix
\section{Proofs}
The  notation in this section is the same as in Section~\ref{theory}.
\subsection{Markov Interacting Importance Sampler} 
\begin{proof}[Proof of Theorem~\ref{thm:CISunbiased}]
Part (i): From \eqref{eq:fullmiis}, ${\wt \pi}^N$ is a proper distribution function that integrates to 1 and
has marginal ${\wt \pi}^N(dy) = \pi(dy)$. Part~(ii):
The joint distribution ${\wt \pi}^N (\dx_{1:N},\dxi,k)$ is
	\begin{align*}
		{\wt \pi}^N(\dx_{1:N},\dxi,k) &=  \int_A\pin(\dx_{1:N},\dxi,\dy,k)\\
		&= \int_AN^{-1} \pi(\dy)\eta(\dxi|x)T(y,\dx_k;\xi)\,\bs{q}_{\sk}(\dx_{\sk}|x_k,\xi)	\\
		&= \int_AN^{-1} \pi(\dx_k) \eta(\dxi|x_k)T(x_k,\dy;\xi)\,\bs{q}_{\sk}(\dx_{\sk}|x_k,\xi)	\\
		&= N^{-1}\pi(\dx_k)\eta(\dxi|x_k)\,\bs{q}_{\sk}(\dx_{\sk}|x_k,\xi)\\
		&= \frac{\pi(x_k)\eta(\dxi|x_k)}{Nq_k(x_k|\xi)}\,\bs{q}(\dx_{1:N}|x_k,\xi)\\
		&= \frac{w_k(x_k|\xi)}{N\int_Am(x)\mu(\dx)}{\bf q}(\dx_{1:N}|\xi),
	\end{align*}
	The second line is the joint distribution, the third line follows from reversibility of the Markov kernel, the fourth line integrates out $y$, and the last line follows from the definition of the weights. The conditional distribution
	\begin{align*}
		{\wt \pi}^N(K=k | x_{1:N},\xi) &= \frac{\pin(x_{1:N},\xi,k)}{\sum_{i=1}^N\pin(x_{1:N},\xi,i)}
		=\frac{{w_k(x_k|\xi)}}{\sum_{i=1}^N w_i(x_i|\xi)}
		= W_k(x_{1:N},\xi),
	\end{align*}
Similarly,
	\begin{align*}
\pin(\dy|x_{1:N},\xi,k) &=  \frac{\pin(\dx_{1:N},\dxi,\dy,k)}{\pin(\dx_{1:N},\dxi,k)}\\
		&=\frac{N^{-1}\pi(\dy)\eta(\dxi|x)T(x,\dx_k;\xi)\,\bs{q}_{\sk}(\dx_{\sk}|k,\xi)}{N^{-1}\pi(\dx_k)\eta(\dxi|x_k)\,\bs{q}_{\sk}(\dx_{\sk}|x_k,\xi)} \\
		&= T(x_k,\dy;\xi)\frac{\pi(\dx_k)\eta(\dxi|x_k)\,\bs{q}_{\sk}(\dx_{\sk}|x_k,\xi)}{\pi(\dx_k)\eta(\dxi|x_k)\,\bs{q}_{\sk}(\dx_{\sk}|x_k,\xi)} \\
		&= T(x_k,\dy;\xi).
	\end{align*}

Part (iii):
\begin{align*}
{\wt \pi}^N (k,\dx_k) & = \int {\wt \pi}^N (k, \dy,\dxi,\dx_{1:N}) = N^{-1}\int \pi(\dy)\eta(\dxi|y)T(y,\dx_k; \xi)\bs{q}_{\sk}(\dx_{\sk}|x_k,\xi)\\
& = N^{-1} \pi(\dx_k)\int \eta(\dxi|x_k)T(x_k,\dy; \xi)\bs{q}_{\sk}(\dx_{\sk}|x_k,\xi)  = N^{-1} \pi(\dx_k)\, .
\end{align*}
Hence,
\begin{align*}
E_{{\wt \pi}^N}(f(X_K))&  = \sum_{k=1}^N N^{-1}\int \pi(\dx_k) f(x_k) = E_\pi(f).
\end{align*}

Similarly, by first conditioning on $X_{1:N}$ and $\xi$, we obtain
\begin{align*}
E_{{\wt \pi}^N}(f(X_K))& =E_{{\wt \pi}^N} \Big ( E_{{\wt \pi}^N (\cdot |x_{1:N},\xi)}f(X_K) \Big ) \\
& = E_{{\wt \pi}^N} \Big (N^{-1}\sum_{k=1}^N \int   f(x_k)W_k(x_{1:N},\xi)  T(x_k,\dy;\xi) \Big) \\
& = E_{{\wt \pi}^N} \Big ({\wh E}^N_{CIS} (f) \Big)
\end{align*}
\end{proof}

\begin{proof} [Proof of Theorem~\ref{thm:fullmiis}.]
 The proof follows from Part~(ii) of Theorem~\ref{thm:CISunbiased} and
 because\\ ${\wt \pi}^N(\dxi, dx_{1:N}|y,k) = \Gamma^N(\dxi, dx_{1:N}|y,k) $
\end{proof}

\subsection{Markov Interacting Importance Sampler for Conditional Distributions} 
\begin{proof}[Proof of Theorem~\ref{thm:CISunbiased-conditional}]
	The proof is analogous to the proof of Theorem \ref{thm:CISunbiased}, with $\pi$ replaced by $\pi_s(\cdot|x(\sk[s]))$.
\end{proof}

\begin{proof}[Proof of Theorem \ref{thm:conditionalmiis}]
	We write the MIIS Gibbs sampler as a Gibbs sampler in an augmented space. Each step of the algorithm consists in sampling from the following collapsed Gibbs sampler.
\begin{algorithm} \label{alg: alg gibbs theory}
For $s=1,\dots,d$,
	\begin{enumerate}
		\item [(i)]Sample $X_{1:N}(s),\xi(s)|(y(s), k_s, y(\sk[s]),\xi(\sk[s]),(x_{1:N}(\sk[s]) ), k_{\sk[s])}$ \\from $\Gamma_s^N(\dx_{1:N}(s),\dxi(s)|y(s),k_s)$; and
		\item [(ii)]Sample $Y(s),K_s|x_{1:N}(s),\xi(s), (y(\sk[s]), \xi(\sk[s]))$ from
	\[	
				\sum_{i=1}^NW_{s,i}(x_{1:N}(s);\xi(s)))I(K_s=i)T(x_{i}(s),\dy(s);\xi(s)).
	\]
	\end{enumerate}
\end{algorithm}

To prove the theorem it is sufficient to show that the conditional density
$${\wt \pi}^N (\dy(s), \dxi(s), k_s, \dx_{1:N}(s)| y(\sk[s]), \xi(\sk[s]),k_{\sk[s]},x_{1:N}(\sk[s])) $$ gives the $s$th step in
Algorithm~\ref{alg: alg gibbs theory} above. The proof uses the same arguments as those in Theorem~\ref{thm:fullmiis}.
The joint distribution
\[
		\pin(\dy,\dxi(\sk[s]),(\dx_{1:N}(i), i\in\sk[s]), k_{1:d}) = \frac{\pi(\dy)}{N^d}\prod_{i\ne s}^d\Gamma_i^N(\dx_{1:N}(i),\dxi(i)|x(i),k_i,y(\sk[i])),
	\] after integrating out $(x_{1:N}(s),\xi(s))$. Hence,
the conditional joint distribution	
	\[
		\pin(\dx_{1:N}(s),\dxi(s)|\dy,\xi(\sk[s]),(x_{1:N}(i), i\in\sk[s]), k_{1:d}) =
\Gamma_s^N(\dx_{1:N}(s),\dxi(s)|x(s),k_s,x(\sk[s])),
	\]
	which is consistent with part (i) of Algorithm~\ref{alg: alg gibbs theory}.
	Similarly, $\pin(\dy,\dxi(s),\dx_{1:N}(s),k_s) = N^{-1}\pi(\dy)\times \Gamma_s^N(\dx_{1:N}(s),\dxi(s)|y(s),k_s,y(\sk[s]))$, so
	\begin{align*}
		\pin(\dy(\sk[s]),\dxi(s),\dx_{1:N}(s),k_s) &=  \int_{A_s}N^{-1}\pi(\dy(s),\dy(\sk[s])) \times \Gamma_s^N(\dx_{1:N}(s),\dxi(s)|x(s),k_s,y(\sk[s]))\\
		&= N^{-1}\pi_{\sk[s]}(\dy(\sk[s]))\bs{q}_{s}(\dx_{1:N}(s)|\xi(s),\dy(\sk[s]))\\
		&\times\int_{A_s} \frac{\eta_s(\dxi(s)|y(s),y(\sk[s])) \pi_s(\dy(s)|y(\sk[s]))} {q_{s,k_s}(\dx_{k_s}(s)|\xi(s),y(\sk[s]))} T(y(s),\dx_{k_s}(s);\xi(s),y(\sk[s]))\\
		&= N^{-1}\pi_{\sk[s]}(\dy(\sk[s]))\bs{q}_{s}(\dx_{1:N}(s)|\xi(s),y(\sk[s]))\\
&\frac{\eta_s(d\xi(s)|x_{k_s}(s),y(\sk[s]))
\pi_s(x_{k_s}(s)|\x(\sk[s]))}{q_{s,k_s}(x_{k_s}(s)|\xi(s),x(\sk[s]))}\\
		&\propto \frac{\pi_{\sk[s]}(\dy ( \sk[s]))}{N}\bs{q}_{s}(x_{1:N}(s)|\xi(s),x( \sk[s]))\, w_{s,k_s}(x_{k_s}(s),\xi(s),x(\sk[s])).
	\end{align*}
	Hence, $\Pr(K_s=k_s|y(\sk[s]),\xi(s),x_{1:N}(s)) =W_{s,k_s}(x_{1:N}(s),\xi(s),y(\sk[s]))$, which is consistent with Step~(ii) of Algorithm~\ref{alg: alg gibbs theory}
Following the same arguments as in the proof of Theorem \ref{thm:fullmiis}, we can check that
$\pin(\dy(s)|y( \sk[s]),\xi(s),x_{1:N}(s),k_s)  = T(x_{k_s}(s), \dy(s), \xi(s),x( \sk[s]))$.
	Finally, one can verify that the algorithm targets $\pi$ by first integrating out $(x_{1:N}(i), \xi(i))$, $i=1,\dots,d$, and then summing over $k_1,\dots,k_d$.
	\end{proof}

\subsection{Convergence of MIIS}
Before proving  Theorem \ref{thm:convergence}, we obtain a preliminary lemma.

\begin{lemma} \label{lemma: prel to thm 5}
Suppose Assumption~ \ref{a:regcis} holds. Then,
\begin{enumerate}
\item [(i)]
\begin{align*}
 P(y,l;\dd{z}\times\{k\}) &\ge \frac{1}{C}\frac{\pi(\dd{z})}{N} h_{l,k}(y,z).
 \end{align*}
\item [(ii)]
Recursively define  $H_{l,k}(y,z) = h_{l,k}(y,z)$ and
\begin{align*}
H^{t+1}_{l,k}(y,z) := \E_{N^{-1}\pi}[H^t_{l,J}(y,V)h_{J,k}(V,z)] = \sum_{j=1}^N  N^{-1} \int_A H^t_{l,j}(y,v) h_{j,k}(v,z) \pi(dv).
\end{align*}
Then,
 \begin{equation}
        P^t(y,l;\dd{z}\times\{k\})\ge \left(\frac{1}{C}\right)^t\,\frac{\pi(\dd{z})}{N}\,H^t_{l,k}(y,z).
        \label{eq:lbndP}
    \end{equation}
\item [(iii)]
$H^t_{l,k}(y,z) > 0 $ for $ t \geq 2$ for all $y,z \in A$.
\end{enumerate}
\begin{proof}
We first obtain Part (i). Assumption \ref{a:regcis} Part (i) implies that $W_k(x_{1:N};\xi)\ge w_k(x_k;\xi)/CN$. Hence, for $k\ne l,$
    \begin{align*}
   (CN)  P(y,l;\dd{z}\times\{k\}) &\ge \int_{A^{N+1}}\frac{\pi(\dx_k)\eta(\xi|x_k)}{q_k(\dx_k|\xi)}T(x_k,\dd{z};\xi)\Gamma^N(\dx_{1:N},\dxi|y,l)\\
    &=\pi(\dd{z}) \int_{A^{N+1}}\frac{\eta(\xi|z)}{q_k(\dx_k|\xi)}T(z,\dd{x_k};\xi)\Gamma^N(\dx_{1:N},\dxi|y,l)\\
    &= \pi(\dd{z})\int_{A^{N+1}}\frac{\eta(\xi|z)\eta(\dxi|y)}{q_k(\dx_k|\xi)q_l(\dx_l|\xi)}T(z,\dx_k;\xi)T(y,\dx_l;\xi) \bs{q}(\dx_{1:N}|\xi)\\
    &=\pi(\dd{z})\int_{A^3}\frac{\bs{q}_{l,k}(\dx_l,\dx_k|\xi)}{q_k(\dx_k|\xi)q_l(\dx_l|\xi)}T(z,\dx_k;\xi)T(y,\dx_l;\xi) \eta(\xi|z)\eta(\dxi|y)\\
    &= \pi(\dd{z})\int_A\eta(\xi|z)\eta(\dxi|y)
     \times\left[\int_{A^2}\frac{\bs{q}_{l,k}(\dx_l,\dx_k|\xi)}{q_k(\dx_k|\xi)q_l(\dx_l|\xi)}T(z,\dx_k;\xi)T(y,\dx_l;\xi)\right]\\
    &=\pi(\dd{z})\int_A\eta(\xi|z)\eta(\xi|y)S_{l,k}(\xi,y,z)\mu(\dxi)\\
    &\ge \pi(\dd{z}) h_{l,k}(y,z).
    \end{align*}
    We can similarly result for $k = l$.
    We now prove part (ii). By part (i), Eq.~\eqref{eq:lbndP} holds for $t=1$. Suppose that \eqref{eq:lbndP} holds for some $t$. Then,
    \begin{align*}
    P^{t+1}(y,l;\dd{z}\times\{k\}) &= \sum_{j=1}^N\int_AP^t(y,l;\dd{v}\times\{j\})P(v,j;\dd{z}\times\{k\})\\
    &\ge \left(\frac{1}{C}\right)^t\sum_{j=1}^N\int_A\frac{\pi(\dd{v})}{N}H_{l,j}^t(y,v)\times \frac{\pi(\dd{z})}{CN}h_{j,k}(v,z)\\
    &=\left(\frac{1}{C}\right)^{t+1}\,\frac{\pi(\dd{z})}{N}\,\frac{1}{N}\sum_{j=1}^N\int_A\pi(\dd{v})H^t_{l,j}(y,v)h_{j,k}(v,z)\\
    &=\left(\frac{1}{C}\right)^{t+1}\,\frac{\pi(\dd{z})}{N}\,\E_{N^{-1}\pi}[H^t_{l,J}(y,V)h_{J,k}(V,z)]\\
    &= \left(\frac{1}{C}\right)^{t+1}\,\frac{\pi(\dd{z})}{N}\,H^{t+1}_{l,k}(y,z).
    \end{align*}
    Hence, the bound holds for all $t$.
    We now prove Part~(iii). We first show that $H^t_{l,k}(y,z)>0$ for $t =  2$ and then, recursively, for all $t\ge 2$.
   For any pair $y,z\in A$, and $l,k\in\{1{:}N\}$,
    \[
    H_{l,k}^2(y,z) = \E_{N^{-1}\pi}[h_{l,J}(y,V)h_{J,k}(V,z)]\ge \E_{N^{-1}\pi}[h_{l,J}(y,V)h_{J,k}(V,z)I(J\in \mathcal{J}_l\cap\mathcal{J}_k)] >0,
    \]
    where the last inequality follows from Assumption \ref{a:regcis} Part (ii). If $H_{l,j}^t(y,\cdot)>0$, then
    \[
    H_{l,k}^{t+1}(y,z) = \E_{N^{-1}\pi}[H_{l,J}^t(y,V)h_{J,k}(V,z)] \ge \E_{N^{-1}\pi}[H_{l,J}^t(y,V)h_{J,k}(V,z)I(J\in\mathcal{J}_k)] >0.
    \]
\end{proof}
\end{lemma}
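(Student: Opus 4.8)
The plan is to establish a one-step minorization of the marginal transition kernel (part (i)), promote it to a $t$-step bound by induction (part (ii)), and then verify strict positivity of the iterated kernels $H^t_{l,k}$ (part (iii)). For part (i), I would first write the marginal kernel explicitly as the expectation of the resampling-and-propagation step over the CIS output,
\[
P(y,l;\dd{z}\times\{k\}) = \int_{A^{N+1}} \Gamma^N(\dx_{1:N},\dxi|y,l)\, W_k(x_{1:N},\xi)\, T(x_k,\dd{z};\xi),
\]
which combines the resampling step (draw $K=k$ with probability $W_k$) and the propagation step ($Y\sim T(x_k,\cdot;\xi)$) of Algorithm~\ref{alg: MIIS algorithm}. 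The first reduction is a pointwise lower bound on the normalized weight: Assumption~\ref{a:regcis}(i) gives $\pi(x)\eta(\xi|x)\le C\,q_k(x|\xi)$ for every $k$, which bounds the sum of the unnormalized weights and hence yields $W_k(x_{1:N};\xi)\ge \frac{1}{CN}\,\pi(x_k)\eta(\xi|x_k)/q_k(x_k|\xi)$. Substituting this bound and cancelling $q_k$ leaves the factor $\pi(\dx_k)\eta(\xi|x_k)T(x_k,\dd{z};\xi)$, to which I would apply the reversibility identity \eqref{eq: T reversibility} to swap $x_k\leftrightarrow z$, producing a clean $\pi(\dd{z})$ factor together with $\eta(\xi|z)T(z,\dx_k;\xi)$. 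Integrating out the $N-2$ particles other than $x_l,x_k$ collapses $\bs{q}$ to the pairwise marginal $\bs{q}_{l,k}$, so that after a Fubini interchange the $\xi$-integral of the bracketed two-particle term is exactly $S_{l,k}(\xi,y,z)$ from \eqref{eq: def S_kl}; Assumption~\ref{a:regcis}(ii)(a) then bounds $\int \eta(\xi|y)\eta(\xi|z)S_{l,k}\,\mu(\dxi)$ below by $h_{l,k}(y,z)$, giving the claim. The diagonal case $k=l$ is treated identically using the $S_{l,l}$ definition, where both propagation kernels act on the single particle $x_l$.

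For part (ii), I would induct on $t$. The base case is part (i), and the inductive step uses Chapman--Kolmogorov, $P^{t+1}(y,l;\dd{z}\times\{k\}) = \sum_{j=1}^N\int_A P^t(y,l;\dd{v}\times\{j\})\,P(v,j;\dd{z}\times\{k\})$, inserting the inductive bound for $P^t$ and the part (i) bound for $P$; collecting the factors $(1/C)^{t+1}$, $\pi(\dd{z})/N$, and $N^{-1}\sum_j\int H^t_{l,j}\,h_{j,k}\,\pi(\dd{v})$ reproduces exactly the recursion defining $H^{t+1}_{l,k}$, which is \eqref{eq:lbndP}. For part (iii), I would first show $H^2_{l,k}(y,z)>0$ by restricting the defining sum to indices $j\in\mathcal{J}_l\cap\mathcal{J}_k$, which is nonempty by Assumption~\ref{a:regcis}(ii)(b); for such $j$ both $h_{l,j}(y,v)$ and $h_{j,k}(v,z)$ are strictly positive by (ii)(c), so the integrand is positive and its $\pi$-integral is strictly positive. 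A second induction, restricting to $j\in\mathcal{J}_k$ and using $H^t_{l,j}>0$, then propagates positivity to all $t\ge 2$.

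The main obstacle is the measure-theoretic bookkeeping in part (i): the manipulations involve Radon--Nikodym derivatives such as $\pi(\dx_k)/q_k(\dx_k|\xi)$ and the application of reversibility in its measure rather than density form, so care is needed to justify each cancellation and the Fubini interchange that reduces the $N$-particle integral to the pairwise object $S_{l,k}$. Once the kernel is reduced to the two-particle level the remaining steps are routine bounds via Assumption~\ref{a:regcis}, and parts (ii) and (iii) are straightforward inductions; the diagonal term $k=l$ must be tracked separately throughout, but it follows the same template.
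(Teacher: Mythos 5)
Your proposal is correct and follows essentially the same route as the paper: the same weight minorization $W_k \ge \pi(x_k)\eta(\xi|x_k)/\bigl(CN\,q_k(x_k|\xi)\bigr)$ from Assumption~\ref{a:regcis}(i), the same use of the reversibility identity \eqref{eq: T reversibility} to extract $\pi(\dd{z})$, the same marginalization to the pairwise object $S_{l,k}$ bounded by $h_{l,k}$, and the identical Chapman--Kolmogorov induction for part (ii) and $\mathcal{J}_l\cap\mathcal{J}_k$ positivity argument for part (iii). Your explicit attention to the separate diagonal case $k=l$ and to justifying the Fubini interchange is, if anything, slightly more careful than the paper's own write-up.
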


\begin{proof}[Proof of Theorem \ref{thm:convergence}]
	The sequence $\{(y^{(t)},k^{(t)})\}$ from the MIIS algorithm is Markov, because the MIIS algorithm is a two component Gibbs sampler, and has transition kernel
	\begin{equation}
		P(y,l;B\times\{k\}) = \int_{A^{N+1}}W_k(x_{1:N},\xi)T(x_k,B;\xi)\Gamma^N(\dx_{1:N},\dxi|y,l).
		\label{eq:pmiis}
	\end{equation}

    The proof shows that for all starting values $(y,l)\in A\times\{1{:}N\}$, the $t^{th}$ step Markov transition kernel $P^t(y,l;B\times\{k\})$ is positive for all $t\ge 2$, and any $B\in\Omega$ such that $\pi(B) > 0 $ and $k\in\{1{:}N\}$.

    Suppose that $y \in A$, $B \in \Omega $ and $k,l \in \{1{:}N\}$. If $\pi(B) = 0 $ then $P^t(y,l;B\times\{k\}) = 0 $ for $t \geq 1$; if $\pi(B) > 0 $ then $P^t(y,l;B\times\{k\})>0$ for all $t\ge 2$ by Lemma ~\ref{lemma: prel to thm 5}.
 This  means that the marginal chain is $N^{-1}\pi$-irreducible and aperiodic and that $P(y,l;\dd z\times\{k\})$ is absolutely continuous with respect to $N^{-1}\pi(\dd z)$.
  It then
 follows from Theorem 1 and Corollary~1 in \citet{tierney1994} that for all $(y,l)\in A\times\{1{:}N\}$, $\lim_{t\rightarrow\infty}|P^t(y,l;\cdot-N^{-1}\pi(\cdot)|_{TV} = 0$, proving the first part of the theorem.
Proof of second part. Define $g_l(z): = \min_{k \in \mathcal{J}_l}\underbar{h}_{l,k}(z)  > 0 $. Then,
\begin{align*}
H^2_{l,k}(y,z) & = \sum_{k^\prime \in  \mathcal{J}_l}\int \pi(\dd{z}^\prime)  h_{l,k^\prime}(y,z^\prime) h_{k^\prime,k}(z^\prime, z)
 \geq \Big ( \int  \pi(\dd{z}^\prime) g_l(z^\prime) \Big )
 \sum_{k^\prime \in \mathcal{J}_l} \underbar{h}_{k^\prime,k}( z).
 \end{align*}

Let $D_1:= \int \pi(\dd{z}^\prime) g_l(z^\prime)$,
$$ D_2 :=   \sum_{k^\prime \in \mathcal{J}_l} \int  \underbar{h}_{k^\prime,k}( z)\pi(\dd{z}) \quad \mathrm{and}\quad
\nu(B) := D_2^{-1}\sum_{k^\prime \in \mathcal{J}_l}\int_B  \underbar{h}_{k^\prime,k}( z)\pi(\dd{z}).  $$
Then, from \eqref{eq:lbndP},
\begin{align*} P^2(y,l; \dd{z},\{k\}) & \geq C^{-2}D_1 D_2 N^{-1}\nu(\dd{z}) .
\end{align*}
and uniform ergodicity follows from Proposition 2 in \citet{tierney1994}.
\end{proof}

\subsection{Convergence of the MIIS Gibbs Sampler}
We again consider the marginal chain $\{y_t, \bs{l}_t, t \geq 0\} $ of the MIIS sampler, where $\bs{l}_t:= (l_{1:d})_t$.
Let $P_{s,M}(y(s),l_s; \dd{z(s)}\times \{k_s\}|y(\sk[s])) $ be the transition kernel for the s$^{th}$ component of the marginal chain. The transition kernel for the marginal chain is
\begin{align*}
P_M(y,\bs{l};\dd{z} \times \{ \bs{k}\}) = \prod_{s=1}^d P_{s,M}(y(s),l_s; \dd z(s) \times \{k_s\}|z({<s}),y({>s})),
\end{align*}
where we use the shorthand notation $z({<s}) = z(1{:}s-1)$ and $y(>s) = y(s+1{:}d)$. Define
\begin{align}
h_{\bs{l},\bs{k}}(y,z): = & \prod_{s=1}^d h_{s,l_s,k_s}(y(s),z(s);z({<s}),y({>s})).
\end{align}
We require the definition of the sub-stochastic kernel $H_{\bs l,\bs k}(y,\dd{z}) = C^{-1}N^{-d} h_{\bs l,\bs k}(y,z)P_G(y,\dd z)$ and, iteratively,
\begin{align*}
    H^{t+1}_{\bs{l},\bs{k}}(y,\dd z) &= \frac{1}{CN^d}\sum_{\bs j\in\{1:N\}^d}\int_AH^t_{\bs l, \bs j}(y,\dd v)\,h_{\bs j, \bs k}(v,z)P_G(v,\dd{z})\\
    &= \frac{1}{CN^d}\sum_{\bs j\in\{1:N\}^d}\int_Ah_{\bs l, \bs j}(y,v)\,H^t_{\bs j, \bs k}(v,\dd z)P_G(y,\dd{v})\\
    &=\E_{P_G(y,\cdot)/N^d}\left[h_{\bs l, \bs J}(y,V)\,H^t_{\bs J, \bs k}(V,\dd z)\right].
\end{align*}

\begin{lemma} \label{lemma: prel to thm 6}
Suppose Assumption~ \ref{a:regciscond} holds. Then,
\begin{enumerate}
\item [(i)]
The marginal chain $\{y^(t), \bs{l}^{(t)}, t \geq 0\} $ is Markov.
\item [(ii)] For $t=1,2,...$
\begin{align*}
 P_M^t(y,\bs{l};\dd{z}\times\{\bs{k}\}) &\ge  H^{t}_{\bs l,\bs z}(y,\dd z).
 \end{align*}
 \item [(iii)] Suppose $t\ge 2$, $B\in\Omega$, and $\pi(y)>0$. If $P_G^t(y,B)>0$ then $H^t_{\bs{l},\bs{k}} (y,B) > 0 $.
\end{enumerate}
\end{lemma}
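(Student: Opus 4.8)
The plan is to treat Lemma~\ref{lemma: prel to thm 6} as the Gibbs-sampler counterpart of Lemma~\ref{lemma: prel to thm 5}, exploiting the product structure of the systematic-scan kernel $P_M$ to reduce each of the three parts to the single-component computations already carried out for the marginal chain. Part~(i) is immediate from Theorem~\ref{thm:conditionalmiis}: the MIIS Gibbs sampler is a collapsed Gibbs sampler on the augmented space, and updating block $s$ draws $(X_{1:N}(s),\xi(s))$ only from $\Gamma_s^N(\cdot\mid y(s),k_s,y(\sk[s]))$ and then $(Y(s),K_s)$ only from the reweighting in Theorem~\ref{thm:CISunbiased-conditional}(ii). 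Since these auxiliary variables are drawn afresh at every sweep and then integrated out, the transition of $(y,\bs l)$ into $(z,\bs k)$ depends on the past only through the current $(y,\bs l)$, so the marginal chain is Markov with kernel $P_M(y,\bs l;\dd z\times\{\bs k\}) = \prod_{s=1}^d P_{s,M}(y(s),l_s;\dd z(s)\times\{k_s\}\mid z({<}s),y({>}s))$.

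For part~(ii) I would first establish the single-block analogue of Lemma~\ref{lemma: prel to thm 5}(i). Repeating that calculation with $\pi$ replaced by the conditional $\pi_s(\cdot\mid y(\sk[s]))$, with $S_{s,l,k}$ as in \eqref{eq: def S_kl gibbs}, and using Assumption~\ref{a:regciscond}(i) (which supplies the constant $C^{1/d}$ in place of $C$), yields
\[
P_{s,M}(y(s),l_s;\dd z(s)\times\{k_s\}) \ge \frac{1}{C^{1/d}}\,\frac{\pi_s(\dd z(s)\mid y(\sk[s]))}{N}\,h_{s,l_s,k_s}(y(s),z(s)).
\]
Taking the product over $s=1,\dots,d$, and noting that the telescoping product $\prod_s \pi_s(\dd z(s)\mid z({<}s),y({>}s))$ is exactly $P_G(y,\dd z)$ by \eqref{eq: ideal gs}, that $\prod_s C^{-1/d}=C^{-1}$, and that $\prod_s h_{s,l_s,k_s}=h_{\bs l,\bs k}$ by definition, gives the $t=1$ bound $P_M(y,\bs l;\dd z\times\{\bs k\})\ge C^{-1}N^{-d}h_{\bs l,\bs k}(y,z)P_G(y,\dd z)=H_{\bs l,\bs k}(y,\dd z)$. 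The general $t$ then follows by induction, applying Chapman--Kolmogorov to $P_M^{t+1}$, lower-bounding one factor by $H^t$ and the other by $H$, and recognizing the result as the recursive definition of $H^{t+1}_{\bs l,\bs k}$ stated before the lemma.

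Part~(iii) is where the argument genuinely differs from the marginal case, and it is the main obstacle: unlike Lemma~\ref{lemma: prel to thm 5}(iii), where positivity held unconditionally because the chain was automatically $\pi$-irreducible, here positivity of $H^t$ must be tied to positivity of $P_G^t$. For the base case $t=2$ I would unfold $H^2_{\bs l,\bs k}(y,\dd z)=C^{-2}N^{-2d}\sum_{\bs j}\int_A h_{\bs l,\bs j}(y,v)h_{\bs j,\bs k}(v,z)\,P_G(y,\dd v)P_G(v,\dd z)$ and retain a single index path $\bs j^{*}$ with $j_s^{*}\in\mathcal{J}_{s,l_s}\cap\mathcal{J}_{s,k_s}$; these intersections are nonempty by Assumption~\ref{a:regciscond}(ii)(b) (and equal $\mathcal{J}_{s,l_s}$ when $l_s=k_s$), and along $\bs j^{*}$ both $h_{\bs l,\bs j^{*}}(y,\cdot)$ and $h_{\bs j^{*},\bs k}(\cdot,\cdot)$ are strictly positive on $\{\pi>0\}$ by Assumption~\ref{a:regciscond}(ii)(c). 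Since $P_G^2(y,B)>0$ means $\int_A P_G(v,B)P_G(y,\dd v)>0$, integrating the strictly positive weight against this positive measure forces $H^2_{\bs l,\bs k}(y,B)>0$. The inductive step uses the recursion $H^{t+1}_{\bs l,\bs k}(y,\dd z)=\tfrac{1}{CN^d}\sum_{\bs j}\int_A h_{\bs l,\bs j}(y,v)H^t_{\bs j,\bs k}(v,\dd z)P_G(y,\dd v)$: keeping the path $\bs j^{*}$, the factor $h_{\bs l,\bs j^{*}}(y,\cdot)$ is positive and the inductive hypothesis supplies $H^t_{\bs j^{*},\bs k}(v,B)>0$ whenever $P_G^t(v,B)>0$. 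The one point requiring care is that this hypothesis needs $\pi(v)>0$; this is furnished by the standard fact that $P_G(y,\cdot)$ is concentrated on $\{v:\pi(v)>0\}$ when $\pi(y)>0$, so that $H^t_{\bs j^{*},\bs k}(v,B)>0$ for $P_G(y,\cdot)$-almost every $v$ in $\{P_G^t(\cdot,B)>0\}$, and $P_G^{t+1}(y,B)>0$ then propagates positivity to $H^{t+1}_{\bs l,\bs k}(y,B)$.
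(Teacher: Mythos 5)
Your parts (i) and (ii) match the paper's proof essentially step for step: the paper also takes (i) directly from the construction of the sampler, and obtains (ii) by applying Lemma~\ref{lemma: prel to thm 5}(i) blockwise to each $P_{s,M}$ with the constant $C^{1/d}$ from Assumption~\ref{a:regciscond}(i), telescoping the conditionals $\pi_s(\dd z(s)\mid z({<}s),y({>}s))$ into $P_G(y,\dd z)$, and then inducting via Chapman--Kolmogorov exactly as you describe.

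Part (iii) is where your route deviates, and the deviation matters. Both you and the paper prove the base case $t=2$ by restricting the sum over $\bs{j}$ to index paths with $j_s\in\mathcal{J}_{s,l_s}\cap\mathcal{J}_{s,k_s}$ (nonempty by Assumption~\ref{a:regciscond}(ii)(b)) and invoking (ii)(c) for strict positivity. But in the inductive step you use the recursion in the form $H^{t+1}_{\bs{l},\bs{k}}(y,\dd z)=\tfrac{1}{CN^d}\sum_{\bs{j}}\int_A h_{\bs{l},\bs{j}}(y,v)\,H^t_{\bs{j},\bs{k}}(v,\dd z)\,P_G(y,\dd v)$, i.e.\ you extend the path at the near end. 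This forces you to apply the inductive hypothesis at random intermediate points $v$, and since the hypothesis (like Assumption~\ref{a:regciscond}(ii)(c)) is only available on $\{\pi>0\}$, you must additionally argue that $P_G(y,\cdot)$ is concentrated on $\{v\dcolon\pi(v)>0\}$. You flag this, but dismiss it as a "standard fact"; it is not established anywhere in the paper and is not purely formal --- pointwise positivity of $\pi$ at the conditioning point does not by itself guarantee positivity of the marginal normalizing constants encountered along a Gibbs sweep without further regularity on the densities. The paper avoids this obligation entirely by inducting with the other (equal, by the definition preceding the lemma) form of the recursion, $H^{t+1}_{\bs{l},\bs{k}}(y,B)=\tfrac{1}{CN^d}\sum_{\bs{j}}\int_B\int_A H^t_{\bs{l},\bs{j}}(y,\dd v)\,h_{\bs{j},\bs{k}}(v,z)\,P_G(v,\dd z)$: there the inductive hypothesis is always invoked at the \emph{fixed} starting point $y$, where $\pi(y)>0$ is given, applied to the set $F$ with $P_G^t(y,F)>0$ and $P_G(x,B)>0$ on $F$, and only the one-step factor $h_{\bs{j},\bs{k}}$ with $\bs{j}\in\mathcal{J}_{\bs{k}}$ needs positivity. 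Swapping your induction to that form removes the support-preservation claim at no cost and makes your argument complete.
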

\begin{proof}
Part~(i) follows from the construction of the MIIS sampler.

We show part (ii) by induction. By part (i) of Lemma~\ref{lemma: prel to thm 5}, for each $s=1,...,d$,
\begin{align*}
P_{s,M}(y(s),l_s; k_s,\dd{z}(s)|y(\sk[s])) & \geq C^{-1/d} N^{-1} h_{s,l_s,k_s}(y(s),z(s); y(\sk[s]))\pi_s(\dd{z(s)}|y(\sk[s])).
\end{align*}
Hence, for $t=1$, part~(ii) follows form the definition of $P_M(y,\bs l;z\times \{\bs k\})$ and $H_{\bs l,\bs k}(y,\dd z)$. Suppose  $P_M^t(y,\bs{l};\dd{v}\times\{\bs{j}\}) \ge H^{t}_{\bs l,\bs j}(y,\dd v)$, for $\dd v\in \Omega$ and $\bs j \in \{1{:}N\}^d$. Then
\begin{align*}
P_M^{t+1}(y,\bs{l};\dd{z}\times\{\bs{k}\}) &= \sum_{\bs j\in\{1{:}N\}^d}\int_A P_M^t(y,\bs{l};\dd{v}\times\{\bs{j}\})P_M(v,\bs{j};\dd{z}\times\{\bs{k}\})\\
&\ge C^{-1}N^{-d}\int_A \sum_{\bs j\in\{1{:}N\}}h_{\bs j,\bs k}(v,z)\,H^{t}_{\bs l,\bs j}(y,\dd v)P_G(v,\dd z)\\
& = H^{t+1}_{\bs l,\bs z}(y,\dd z).
\end{align*}
Then part (ii) also holds for $t+1$, proving the result.

Part (iii) follows By induction. We first show that the result holds for $t=2$ and, then, we show that if the result holds for some $t\ge 2$, it also holds for $t+1$. Let $\mathcal{J}_{\bs l} = \times_{s=1}^d \mathcal{J}_{s,l_s}$ and verify that, under assumption \ref{a:regciscond}(ii) part b, $\mathcal{J}_{\bs l}\cap \mathcal{J}_{\bs k} \ne \emptyset$, for any pair $\bs l,\bs k\in\{1{:}N\}$.

Suppose $t=2$. If $P_G^2(y,B)>0$, there is a set $F'\in\Omega$ such that $P_G(y,F')>0$ and $P_G(x,B)>0$ for $x\in F'$. Let $F'\supseteq F = F_1\times\cdots\times F_d$. For $v\in F$ (i.e., each $v(s)\in F_s$), $s=1,\ldots,d$, and $j$ in $\mathcal{J}_{s,l}\cap\mathcal{J}_{s,k}$,
\begin{equation*}
h_{s,l,j}(y(s),v(s);v({<s}),y({>s}))h_{s,j,k}(v(s),z(s);z({<s}),v({>s})) > 0,
\end{equation*}
from Assumption \ref{a:regciscond}(ii) part (c). Therefore,
\begin{multline*}
\sum_{\bs j\in\{1{:}N\}}h_{\bs l, \bs j}(y,v)h_{\bs j,\bs k}(v,z) \ge
\prod_{s=1}^d\sum_{j\in\mathcal{J}_{s,k_s}\cap\mathcal{J}_{s,l_s}}h_{s,l_s,j}(y(s),v(s);v({<s}),y({>s}))\\
\,\times h_{s,j,k_s}(v(s),z(s);z({<s}),v({>s})) > 0
\end{multline*}
for $v\in F$, and any $\bs l, \bs k\in\{1{:}N\}^d$. Hence
\begin{align*}
H^2_{\bs l, \bs k}(y,z) & = \frac{1}{C^2}\int_B\int_A\sum_{\bs j\in\{1{:}N\}}h_{\bs l, \bs j}(y,v)h_{\bs j,\bs k}(v,z)P_G(y,\dd v)P_G(v,\dd z)\\
&\ge \frac{1}{C^2}\int_B\left\{\int_F\sum_{\bs j\in\mathcal{J}_{\bs l}\cap\mathcal{J}_{\bs k}}h_{\bs l, \bs j}(y,v)h_{\bs j,\bs k}(v,z)P_G(y,\dd v)P_G(v,\dd z)\right\} > 0,
\end{align*}
where the last line follows from calculating each integral between brackets over $F_s$, $s=1,\cdots,d$.

Suppose that part (iii) holds for some $t\ge 2$ and that $P_G^{t+1}(y,B)>0$. Then,
\begin{align*}
   H^{t+1}_{\bs l,\bs k}(y,B) &  =  \frac{1}{CN^d}\sum_{\bs j\in \{1{:}N\}}\int_B\int_A H^t_{\bs l,\bs j}(y,\dd v)h_{\bs j,\bs k}(v,z) P_G(v,\dd z)\\
   & \ge \frac{1}{CN^d}\int_B\int_{F}\left[\sum_{\bs j\in \mathcal{J}_{\bs k}} H^t_{\bs l,\bs j}(y,\dd v)h_{\bs j,\bs k}(v,z)\right] P_G(v,\dd z)> 0,
\end{align*}
where $F\in\Omega$ is such that $P_G(x,B)>0$ for $x\in F$ and $P_G^t(y,F)>0$. The result holds for any $\bs l$ and $\bs k$ in $\{1{:}N\}^d$.
\end{proof}

\begin{proof} [Proof of Theorem~\ref{thm:convergencecond}]
   The result follows from Lemma \ref{lemma: prel to thm 6} and Theorem 1 in \cite{tierney1994}. First define the Markov kernel $[N^{-d}P_G](y,\bs l;B\times\{\bs k\})$, that is the kernel of the Gibbs sampler that draws $(z(s),k_s)|(z(<s),y(>s),\bs k_{<s},\bs l_{>s})$ from $N^{-1}\pi_s(z(s)|z(<s),y(>s))$, sequentially. If the Gibbs kernel $P_G$ is irreducible and aperiodic, so it is the kernel $[N^{-d}P_G]$, since all $k_s\in\{1{:}N\}$, $s=1,\dots,d$, are accessible at each iteration. The proof consists in showing that accessible sets from $[N^{-d}P_G]^t$, the ideal Gibbs in $t\ge2$ steps, are also accessible by the MIIS-Gibbs kernel after $t$ iterations, $P_M^t$. Lemma \ref{lemma: prel to thm 6} (i) shows that $P_M$ is a Markov kernel. Parts (ii) and (iii) together show that $P_G(y,B)>0$ implies that $P_M(y,\bs k;B\times\{\bs l\})>0$ for any pair $(\bs l,\bs k)$. Hence, all sets accessible by $[N^{-d}P_G]$ are also accessible by $P_M$, which implies that $P_M$ is also irreducible. To show that $P_M$ is aperiodic, we assume by contradiction that $P_M$ is not aperiodic. In this case, $[N^{-d}P_G]$ would have to be periodic as well, which contradicts with the assumption that $[N^{-d}P_G]$ is aperiodic. The result follows from Theorem 1 of \citet{tierney1994}.
\end{proof}

It also follows from Theorem \ref{thm:convergencecond} that, $\lim_{t \rightarrow \infty} P_M^t(y(s),l_s;\cdot|y(\sk[s])) = N^{-1}\pi_s(\cdot|y(\sk[s]) $, which implies that the control variates in Section \ref{s:CV} can be safely used.

\begin{proof}[Proof of Corollary~\ref{l:IS-CIScond}]
We can check that the conditions of Assumption 2 hold in a similar way to the proof of Corollary~\ref{C: miis sis}. The result follows from Theorem~\ref{thm:convergencecond}.
\end{proof}

\subsection{Proofs of consistency}\label{s:VR theory proofs}
\begin{proof}[Proof of Corollary~\ref{thm:MIISconsistent}]
 The distribution of $\{l^{(t)},y^{(t)}\}$ converges to $N^{-1}\pi(\cdot)$ by Theorem \ref{thm:convergence}. Let
${\wh E}_{CIS,t}^N(f)$ be defined by Equation~\eqref{eq:pimiis} in Section~\ref{SS: reusing}. The result now follows from
Lemma~\ref{thm:CISunbiased}, which shows that each ${\wh E}_{CIS,t}^N(f)$ is unbiased and by
the strong law of large numbers for ergodic sequences \citep[Theorem~3]{tierney1994}.
\end{proof}

\begin{proof}[Proof of Corollary~\ref{thm:RBconsistent}]
The distribution of $(l_{1:d}^{(t)},y^{(t)})$ converges to $N^{-d}\pi(\cdot)$ by Theorem~\ref{thm:convergencecond}
The result follows from Lemma~\ref{thm:CISunbiased-conditional} and the strong law of large numbers for ergodic sequences \citep[Theorem~3]{tierney1994}.
\end{proof}

\begin{proof}[Proof of Corollary~\ref{thm:CVconsistent}]
For any $f$ with $E_\pi(|f|)<\infty$, it follows from Corollary~\ref{thm:MIISconsistent}
that ${\wh E}_{MC}^M(f)\rightarrow\pi(f)$, and ${\wh E}_{MIIS}^{M,N}(f)\rightarrow\pi(f)$ with probability one.
This means that ${\wh E}_{MC}^M(f)-{\wh E}_{MIIS}^{M,N}(f)\rightarrow 0$, with probability one. Hence, for any constant $\boldsymbol\kappa\in\mathbb{R}^p$, and $\pi$-integrable functions $g_1,\dots,g_p$, the linear combination $\sum_{i=1}^p \kappa_i[\pi_{MC}^M(g_i)-\pi_{MIIS}^{M,N}(g_i)] \rightarrow 0$ with probability one. The proof now follows from
Corollary~\ref{thm:MIISconsistent}
\end{proof}

\begin{proof}[Proof of Corollary~\ref{thm:CVconsistentRB}]
The proof of this corollary follows the same arguments used in the proof of Corollary~\ref{thm:CVconsistent}.
\end{proof}

\end{document}